\documentclass[11pt]{article}
\usepackage{graphicx}
\usepackage{makecell}
\usepackage{amsfonts,epsfig}
\usepackage{amsmath}
\usepackage{amsbsy}
\usepackage{amssymb}
\usepackage{dsfont}
\usepackage{url}
\usepackage[colorlinks=true,citecolor=blue]{hyperref}
\usepackage{fancyhdr}
\usepackage{indentfirst}
\usepackage{enumerate}
\usepackage{amsthm}
\usepackage{color}
\usepackage{comment}
\usepackage[authoryear]{natbib}
\usepackage{bm}
\usepackage[compact]{titlesec}
\setlength{\bibsep}{2pt}

\addtolength{\textheight}{.5\baselineskip}
\newcommand{\dd}{\mathrm{d}}
\def\d{\mathrm{d}}
\def\laweq{\buildrel \d \over =}
\def\seq{\buildrel S \over \sim }

\newcommand{\VaR}{\mathrm{VaR}}

\newcommand{\E}{\mathbb{E}}
\renewcommand{\L}{\mathcal{L}}
\newcommand{\R}{\mathbb{R}}
\newcommand{\N}{\mathbb{N}}
\newcommand{\p}{\mathbb{P}}

\newcommand{\id}{\mathds{1}}


\renewcommand{\(}{\left(}

\renewcommand{\ge}{\geqslant}
\renewcommand{\le}{\leqslant}
\renewcommand{\geq}{\geqslant}
\renewcommand{\leq}{\leqslant}
\renewcommand{\epsilon}{\varepsilon}

\newcommand{\esssup}{\mathrm{ess\mbox{-}sup}}
\newcommand{\essinf}{\mathrm{ess\mbox{-}inf}}

\usepackage[onehalfspacing]{setspace}

\theoremstyle{plain}
\newtheorem{theorem}{Theorem}
\newtheorem{corollary}{Corollary}
\newtheorem{lemma}{Lemma}
\newtheorem{proposition}{Proposition}
\theoremstyle{definition}
\newtheorem{definition}{Definition}
\newtheorem{example}{Example}

\theoremstyle{remark}
\newtheorem{remark}{Remark}
\usepackage[misc]{ifsym}

\usepackage{tikz}
\usetikzlibrary{arrows.meta, positioning}
\usetikzlibrary{positioning}
\usepackage{extarrows} 

\usepackage{makecell}

\topmargin -0.5cm \oddsidemargin 0.16cm \evensidemargin 0.16cm
\textwidth 15.66cm \textheight 22.23cm

\parindent 5ex
\usepackage{subcaption}
\usepackage{booktabs}\usepackage{caption}
\newcommand{\tabincell}[2]{\begin{tabular}{@{}#1@{}}#2\end{tabular}} 
\usepackage[flushleft]{threeparttable}
\usepackage{floatrow}
\usepackage{titlesec} 
\usepackage[flushleft]{threeparttable} 
\usepackage{float} \floatstyle{plaintop} \restylefloat{table} 
\usepackage[figuresright]{rotating} \usepackage{array}
\newcommand{\I}{\mathcal I}
\usepackage{tabularx,lscape,multirow}
\usepackage{epigraph}
\setlength{\bibsep}{0pt}

\newcommand{\opName}[1]{{\operatorname{#1}}}
\begin{document}

\title{Choquet rating criteria, risk measures, and risk consistency}
\author{  Nan Guo\thanks{China Bond Rating Company, China.   \texttt{abraham.guo@qq.com}}\and Ruodu Wang\thanks{Department of Statistics and Actuarial Science, University of Waterloo, Canada.   \texttt{wang@uwaterloo.ca}}\and Chenxi Xia\thanks{School of Mathematical Sciences, Peking University, China. \texttt{xiacx@pku.edu.cn}}\and Jingping Yang\thanks{School of Mathematical Sciences, Peking University, China.
\texttt{yangjp@math.pku.edu.cn}}}

\maketitle

\begin{abstract}

Credit ratings are widely used by investors as a screening device.
 We introduce and study several natural notions of risk consistency that promote prudent investment decisions in the framework  of Choquet   rating criteria.
Three closely related notions of risk consistency are considered: with respect to risk aversion, the asset pooling effect, and the benefit of portfolio diversification. These notions are formulated either under a single probability measure or multiple probability measures. We show how these properties translate between rating criteria and the corresponding risk measures,
and establish a    hierarchical structure among them. These findings lead to a full characterization of Choquet risk measures and Choquet rating criteria satisfying risk consistency properties. Illustrated by case studies on collateralized loan obligations and
catastrophe bonds, some classes of Choquet rating criteria serve as useful alternatives to the probability of default and expected loss criteria used in practice for rating financial products.


\begin{bfseries}Key-words\end{bfseries}: Credit rating, Choquet integral, risk aversion, pooling effect, axiomatic characterization
\end{abstract}

\section{Introduction}



As a screening device in making investment decisions, categorical credit rating plays an important role in modern financial systems.
Two prominent rating criteria used in the financial industry are the probability of default (PD) used primarily by Standard and Poor's and Fitch, 
and the expected loss (EL) criteria used primarily by Moody's; see \cite{SP2019CDO} and \cite{moodys2023cso}. In rating practice, these rating criteria are typically implemented by calculation under various probability measures, called scenarios,  
and we refer to those as the scenario-based PD and EL. 

The presence of various rating methodologies calls for discussions and comparisons of their respective advantages and appropriateness in both theory and financial practice.
\cite{brennan2009tranching} showed that under ratings-based pricing, the theoretical marketing gains are much larger when using PD compared to EL. 
\cite{HW12} argued that PD leads to ratings arbitrage that could misinform investors, whereas EL does not.  
Recently, \cite{GKWW25} proposed  an axiomatic framework for rating criteria, with a core axiom of self-consistency that rules out opportunities for issuer to make profit by excessive tranching in structured finance products.  
It is shown that  
the only class of rating criteria satisfying this axiom is the class of    \emph{Choquet rating criteria}, which are represented by Choquet risk measures (\citealp{S86, S89}; \citealp{EMWW21}).  This class includes EL but not PD. They also showed that EL has better information efficiency than PD for the investors. 

From the literature,  it seems that a consensus has been reached: EL is a better rating criterion than PD from a theoretical point of view.\footnote{On the other hand, PD may have practical advantages, such as being easy to compute, implement, or interpret.} 
However,  this does not mean that EL is the ultimate ``best" rating criterion. Indeed, 
there are many families of self-consistent rating criteria, and they could be equally desirable to EL, or even more advantageous in some situations. 
The main objective of this paper is to identify desirable properties for rating and to obtain  formulas for the  rating criteria characterized by these properties. 

Although there are only a few studies on identifying desirable rating criteria, 
the rich literature on risk measures (see \citealp{MFE15} and \citealp{FS16} for overviews) offers plenty of tools and insights for the parallel task of finding desirable risk measures.
In that literature, the expected loss is widely viewed as a simple but inadequate risk measure, as it has the following three obvious drawbacks:
\begin{enumerate}[(i)]
\item It is completely blind to volatility that is crucial to financial decisions since the seminal work of \cite{M52}; 
\item it is symmetric to upside and downside movements in financial asset returns; 
\item it is linear and hence does not take into account diversification effects. 
\end{enumerate}
These limitations reflect the fact that the expected loss is a relatively old and basic concept. Many modern risk measures, for instance the Value-at-Risk (VaR) and the Expected Shortfall (ES), have been developed to address the above criticisms and are widely applied in financial regulation and portfolio management. 

Motivated by the above drawbacks of the EL risk measure, which naturally apply to the EL criterion to a large extent,  we search for alternative rating criteria that can properly take volatility, downside risk, and diversification effects  into account, while still satisfying the economic axioms of \cite{GKWW25}, that is, within the Choquet class, formally described in Section \ref{sec:prelim}. 

We will consider three inter-connected desirable properties for   rating criteria, which we call risk consistency properties, introduced in Section \ref{sec_LI}. 
The first is based on the procedure of asset pooling, which is an equally fundamental  securitization structure alongside tranching \citep{DeMarzo05, Diamond23}, and thus deserves great attention. 
This leads to the natural property of pooling effect consistency, abbreviated as [PE]: holding the tranching scheme unchanged, the most senior tranche should be safer as the number of assets in a homogeneous asset pool increases.  
Here, a homogeneous asset pool is modeled by a sequence of conditionally independent and identically distributed (iid) asset losses, commonly used in credit modeling. 
The property [PE] is introduced and briefly discussed in \citet[Appendix D]{GKWW25}, with a focus only on PD and EL, and it is shown that EL satisfies [PE] whereas PD does not.    

The second notion is based on risk aversion, a prominent concept in decision analysis \citep{RS70}, which provides a robust way to formulate desirable ordering among random losses. To reflect this concept, we consider rating criteria satisfying risk aversion consistency, abbreviated as [RA], meaning that the rating outcomes respect the unanimous preferences of risk-averse expected utility investors.  

The third notion directly incorporates diversification. 
In the risk measure literature, the    mathematical notions that best describe the benefit of diversification are convexity and  quasi-convexity, as discussed extensively by \cite{FS02} and \cite{CMMM11}. 
Quasi-convexity,
  abbreviated  as [QC],
is more suitable than convexity for discrete valued mappings. For rating criteria, [QC] means that 
combining a defaulted bond with a better-rated one 
cannot make its rating worse than the original bond.

The three risk consistency properties [PE], [RA] and [QC] share some common ideas, as asset pooling, risk aversion and diversification are closely connected concepts. 
Although [QC] does not require a probability measure to be specified,  
[PE] and [RA] require    probabilistic  modeling of the random losses being rated, and thus they should be formulated under law invariance,
abbreviated as [LI].\footnote{Law invariance is a common property in risk measures, as in e.g.,   \cite{K01} and \cite{S13}, and it is recently generalized to partial law invariance by \cite{SVW24}.}
Moreover, credit ratings in practice  are often determined by the probabilistic assessment under multiple scenarios.\footnote{When rating structured products, \cite{SP2019CDO} considered how ``a given portfolio of corporate credits would suffer in various rating scenarios consistent with [their] rating definitions.'' (para.~8), and \cite{moodys2023cso} applied ``a three-point probability distribution to the inter-industry correlation to represent three separate states of low, medium and 
high correlations.'' (p.~4).} 
This leads us to  consider $S$-law invariance, abbreviated as [$S$-LI], where $S$ represents a collection of scenarios, as  in \cite{GKWW25}. 
With [$S$-LI], we can define the scenario-based version of [RA], abbreviated as [$S$-RA], which requires that if a random loss is considered less risky in all economic scenarios than another one, then it should receive a better rating. In parallel, we define the scenario-based version of  [PE], abbreviated as [$S$-PE].

Our main objective is to obtain formulas for  Choquet rating criteria characterized by risk consistency properties  under both law invariance and $S$-law invariance, with the latter setting being more general. 
To achieve this, we make two nontrivial preparations. First, in Theorem \ref{thm:summarize} of Section \ref{sec_LI}, we formally establishes the relationships among [PE], [RA] and [QC]. Although conceptually similar and interconnected, these risk consistency properties are not equivalent. We find two chains of implications for Choquet rating criteria:
under law invariance, we have 
[QC] $\Rightarrow$ [RA] $\Rightarrow$ [PE]; 
under $S$-law invariance,  we have   [QC] $\Rightarrow$ [$S$-RA] $\Rightarrow$ [$S$-PE].
These results suggest that [QC] is the strongest notion among the three consistency properties.    
Second, we connect many properties of  rating criteria to those of the representing risk measures in Section \ref{sec:unique}.  We show in Theorem \ref{thm_uniqueness} that, for a given Choquet rating criterion, the representing Choquet risk measure  is unique. This result allows us to further prove in Theorem \ref{th:convert} that risk consistency properties can be freely translated from risk measures to rating criteria and back. 


With these two preparations in place, we proceed to characterize risk consistency properties in Sections \ref{sec:chara-LI}--\ref{sec:charac-2} by first addressing Choquet risk measures, and then extending the results to the corresponding rating criteria. 
Under law invariance, Choquet risk measures are commonly known as distortion risk measures  or dual utilities, widely used in economics, insurance and finance (e.g., \citealp{Y87}, \citealp{WYP97} and \citealp{A02}). We show in Theorem \ref{th:PE-LI} that a distortion risk measure satisfies [PE] if and only if its distortion function is concave on $[0,1)$. 
Consequently, for a Choquet rating criterion  that satisfies [LI] and an assumption of semi-continutiy, 
[QC], [RA] and [PE]  are equivalent (Proposition \ref{prop:LI-rating}). 
Under $S$-law invariance, Choquet risk measures are called  $S$-distortion risk measures, studied by \cite{WZ21}.
We show  in Theorem \ref{th:SR-Choquet} that, for an $S$-distortion risk measure,
[QC] is equivalent to a componentwise concave and submodular $S$-distortion function. Moreover, 
under an additional continuity assumption, these conditions are equivalent to [$S$-RA] and [$S$-PE]. This gives rise to corresponding characterization results on the Choquet rating criteria (Proposition \ref{prop:SR-rating}). From this result we  can construct many families of rating criteria satisfying the risk consistency properties, presented in Section \ref{sec:exampleShort}.

In Section \ref{sec:case}, we provide two case studies on rating 
collateralized loan obligations and catastrophe bonds. 
From the  numerical results, we can see that rating criteria satisfying risk consistency properties,  such as Average ES and Average MAXVAR, offer useful improvements over the PD and EL criteria currently employed by major rating agencies.
Section \ref{sec:conc} concludes the paper. All proofs are presented in  Appendix \ref{app:A}, with an additional technical lemma to support the proof of Theorem \ref{th:SR-Choquet} included in Appendix \ref{app:L5}. Supplementary information for the case studies is reported in Appendix \ref{appx:mapping}.



\section{Choquet risk measures and Choquet rating criteria}

\label{sec:prelim}

 \subsection{Choquet risk measures}
\label{sec:21-def} 

Let $(\Omega,\mathcal F,\p)$ be a fixed atomless probability space,  $\L_{1}$ be the set of random variables taking values in the interval [0,1] on $(\Omega,\mathcal F,\p)$, and $\L^\infty$ be the set of all bounded random variables on $(\Omega,\mathcal F,\p)$. 
We consider a one-period model for a normalized defaultable security with total nominal amount $1$ and an aggregated random loss $L \in \L_1 $. A \emph{risk measure} is a mapping from $\L^\infty$ to $\R$ satisfying \emph{monotonicity}: $\rho(X)\leq \rho(Y)$ for $X,Y\in \L^{\infty}$ with $X\leq Y$.

We first formulate Choquet risk measures on $\L^\infty$, before specializing to defaultable securities in $\L_1$, which is the natural domain for rating criteria.
 Throughout, terms like ``increasing'' and ``decreasing'' are in the non-strict sense.

\begin{definition} 
\label{def:CRM}
A \emph{capacity} is an increasing function $\nu:\mathcal F\to \R$ satisfying $\nu (\varnothing)=0$ and $\nu(\Omega)=1$.
The \emph{Choquet integral} of $X\in \L^\infty$ with respect to a capacity $\nu$ is defined as
 $$
		  \int X \d \nu=  \int_0^\infty \nu (X>x)\d x+
        \int_{-\infty}^0\left( \nu (X>x)-1\right) \d x .
 $$
A \emph{Choquet risk measure} is a  mapping $\rho:\L^\infty\to \R$ given by
\begin{equation}\label{eq:scpresentation}
		\rho(X)= \int X \d \nu,~~~~X\in \L^\infty,
	\end{equation} 
    for some  capacity $\nu$.
 \end{definition}
 
Choquet integrals have been prominent tools in decision theory since the seminar work of \cite{S89}, and a subclass of Choquet risk measures, called  distortion risk measures or dual utility functionals, is widely used in economics, mathematical finance, and insurance; see e.g., \cite{Y87}, \cite{WYP97}, \cite{MFE15} and \cite{FS16}.
We use the term ``{Choquet risk measures}" following \cite{EMWW21}. 

Choquet risk measures satisfy many convenient properties. 
Following from the main result of \cite{S86}, a mapping $\rho:\L^\infty\to \R$ is a Choquet risk measure if and only if the following three properties hold: (i) {monotonicity}; (ii) \emph{normalization}: $\rho(1)=1$; (iii)  \emph{comonotonic additivity}:  $\rho(X+Y)=\rho(X)+\rho(Y)$ whenever $X$ and $Y$ are comonotonic.\footnote{Two random variables $X$ and $Y$ are \emph{comonotonic} if they can be written as increasing functions of some random variable $Z$.} 
Among these properties, comonotonic additivity is the essential property that underpins Choquet integrals.
Choquet risk measures  are closely related to coherent risk measures introduced by \cite{ADEH99}.  
Using the   terminology from \cite{FS16}, a mapping $\rho:\L^\infty\to \R$ is 
\begin{enumerate}[(a)]
    \item 
a \emph{monetary risk measure}
if  it satisfies (i) {monotonicity}  and (iv) \emph{translation invariance}: $\rho(X+a)=\rho(X)+a$ for $a\in \mathbb{R}$ and $X\in\L^{\infty}$; 
    \item a \emph{convex risk measure} 
if it satisfies (i), (iv), and (v)  \emph{convexity}: $\rho(\lambda X+(1-\lambda) Y)\leq \lambda \rho(X)+(1-\lambda) \rho(Y)$ for $X,Y\in \L^{\infty}$ and $\lambda \in [0,1]$;
    \item   a
\emph{coherent risk measure} 
if it  satisfies (i), (iv), (v), and (vi) \emph{positive homogeneity}: $\rho(\lambda X)=\lambda \rho(X)$ for $\lambda> 0$ and $X\in\L^{\infty}$.  
\end{enumerate}

A  Choquet risk measure $\rho$ in \eqref{eq:scpresentation} 
always satisfies monotonicity, translation invariance, and positive homogeneity. Further, it
 is coherent (equivalently, convex) if and only if 
$\nu$ is \emph{submodular}, meaning
$\nu(A\cap B) + \nu(A\cup B) \le \nu(A)+\nu(B)$ for all $A,B\in \mathcal F$;
see Theorem 4.94 of \cite{FS16}.

 \subsection{Choquet rating criteria}
Categorical credit ratings are assigned to securities based on rating criteria. We suppose that there are  $n \geq 2$ rating categories. 
Here, $1$ represents the best rating (e.g., Aaa in Moody's and AAA in S\&P), and $n$ represents the worst rating.  Write $[n]=\{1,\dots,n\}$   for $n\in \N$. A \emph{rating criterion} $\I$ is a mapping from $\L_1$ to $[n]$, satisfying \emph{monotonicity}:
$$\mbox{For } L_1,L_2\in \L_1,~ L_1\le L_2 ~\Longrightarrow ~\I(L_1)\le \I(L_2).$$
A rating criterion $\I $ is \emph{represented} by a risk measure $\rho:\L_1\to [0,1]$ if 
\begin{equation}\label{eq:p2r}
	\I(X) = f(\rho(X)), ~~~\mbox{for all  $ X \in \L_1$,}
\end{equation}
for some  increasing  function $f:[0,1]\to [n]$ satisfying $f(0)=1$ and $f(1)=n$. In other words, $\I$ is computed by putting grids on $\rho$.

Rating criteria in \eqref{eq:p2r} are simply discrete increasing transforms of risk measures, so they share many properties with risk measures. Hence, the study of rating criteria can also be described as that of increasing transforms of risk measures. Nevertheless, some properties and applications of central interest in this paper are specifically motivated by  credit rating of structured finance products.

Two most commonly used rating criteria are the PD criterion and the EL criterion, both of which are   represented by some risk measures. Specifically, the PD criterion is represented by $\rho(X)= \p(X>0)$ for $ X \in \L_1$, while the EL criterion is represented by $\rho(X)= \E[X]$ for $ X \in \L_1$.
 As the focus of our paper, we specialize in the class of rating criteria represented by Choquet risk measures. 

\begin{definition} 
	A rating criterion is \emph{a Choquet rating criterion} if it is represented by
 a Choquet risk measure (restricted to $\mathcal L_1$) via \eqref{eq:p2r},  where we assume that $f$ in \eqref{eq:p2r} is not a constant on $(0,1)$.
 \end{definition}

The technical assumption that $f$ is not a constant on $(0,1)$ is satisfied automatically if $\I$ assigns at least $4$ different rating categories, which is standard in financial applications. This assumption will be applied in some of our results.

The EL criterion is a Choquet rating criterion, whereas the PD criterion is not. 
In all results for rating criteria we are concerned only with losses in $\L_1$, and hence we use the same notation for  $\rho$ on $\L^\infty$ and its restriction to $\L_1$, which should be clear from context.
 When the risk measure $\rho$ in \eqref{eq:scpresentation} is restricted to $\mathcal L_1$, it becomes
 $$	\rho(X)= \int_0^1 \nu (X>x)\d x,~~~X\in \mathcal L_1.
   $$

 \cite{GKWW25} proposed an axiom of self-consistency for rating criteria based on  the MM Theorem  of \cite{MM58}, and showed that  
  Choquet rating criteria are the only class that satisfies self-consistency. The intuition behind this result can be explained as follows. It is well-known that Choquet risk measures are characterized by comonotonic additivity (\citealp{S89}). 
Different tranches of structured finance products yield payoffs  that are comonotonic  because they are all piece-wise linear increasing functions of the loss of the asset pool. The MM theorem implies that any pricing functional compatible with the rating criterion $\I$ must be additive for such comonotonic payoffs, and thus a weak notion of comonotonic additivity holds. \citet[Theorem 5]{GKWW25} showed that this property is sufficient to establish the Choquet representation of \cite{S89} for the pricing functional, which represents $\I$.
\cite{CMM15}  characterized Choquet risk measures using a different weak notion of  comonotonic additivity motivated by the put--call parity. 
Therefore, the use of Choquet rating criteria has a sound economic foundation, in addition to  technical convenience.

\subsection{Law invariance and scenario-based law invariance}
The  set function  $\nu:\mathcal F\to \R$ in \eqref{eq:scpresentation} is rather abstract and can be difficult to use or compute if no further conditions are imposed. 
To connect with statistical inference and probabilistic modeling, we need to specify how risk and rating assessments are governed by probabilities that can be inferred from data and models. 
For this reason, risk measures and
 rating criteria  in practice  often satisfies some notion of law invariance. 
We will propose two notions of law invariance, a classic one in the  literature of  risk measures (e.g., \citealp{K01}) and a relatively new one based on multiple probability measures.

The first property of law invariance requires that the rating of a defaultable bond is determined by the distribution of the potential loss.
\begin{enumerate}
	\item[{[LI]}] \emph{Law invariance}:
	$\I(L_1)=\I(L_2)$ for all $L_1,L_2\in \L_1$ satisfying $ L_1\laweq L_2$,
	where $\laweq$ stands for equality in distribution.
\end{enumerate}

If we replace the rating criterion $\I$ above  by a risk measure $\rho$ (and the domain is changed from $\L_1$ to $\L^\infty$), then the same property is defined for $\rho$. We will tacitly define all properties for risk measures in this way; that is, whenever a property is defined for rating criteria, it is also defined for risk measures.

The next property further makes the probabilistic assessment under different scenarios. 
In industry practice, the loss distribution is often evaluated under different economic scenarios; see e.g., the method of \cite{moodys2023cso}. 
For a thorough and effective evaluation of credit risk, it is crucial to consider both the economic scenarios and the loss distribution conditionally on the realization of  economic scenarios. 
In the following discussion, we consider a fixed collection of scenarios $S =(S_1,\dots ,S_s) \in \mathcal F^s $, where $s$ is a positive integer. Here $S_1,\dots,S_s$ are disjoint events with non-zero probability, representing economic scenarios of relevance to the rating agency,
 and satisfying $\bigcup_{j=1}^s S_j=\Omega$. For instance, $S$ may represent economic regimes,  different levels of strength of the current economic growth, or indices of  financial stress.  

Denote by $\mathbb{P}_{j}=\mathbb{P}(\cdot|S_{j})$ and   $\mathbb{P}^{S}=(\mathbb{P}_1, \dots, \mathbb{P}_s)$.  
As the probability $\p$ is atomless and each $S_j$ has a positive probability, $\p_j$ is also atomless for $j\in [s]$.  
We say
that two losses  $L_1,L_2\in \L_1$ are equivalent under $S$, denoted by  $ L_1\seq L_2$ if $L_1$ and $L_2$ are identically distributed 
under $\p_j$ for each $j\in [s]$. 
As a generalization of law invariance, the following property means that
equivalent losses under $S$ are assessed as equally risky.

\begin{enumerate}[{[$S$-PE]}]
\item[{[$S$-LI]}] \emph{$S$-law invariance}: 
	$\I(L_1)=\I(L_2)$ for all $L_1,L_2\in \L_1$ satisfying $ L_1\seq L_2$.
\end{enumerate}

Property [$S$-LI] 
for risk measures was introduced and studied by \cite{WZ21} where the probability measures do not need to be mutually singular.
For rating criteria, [$S$-LI] is called scenario relevance  by \cite{GKWW25}. 
We choose the term [$S$-LI] because later there are other properties that have two versions, one under the global probability $\p$ and one under conditional probabilities $\p_1,\dots,\p_s$.

Property [LI] is stronger than [$S$-LI], because $ L_1\laweq L_2$ 
is a weaker requirement than $ L_1\seq L_2$ (see Theorem \ref{thm:summarize}).

 \citet[Theorem 3.4]{WZ21} characterized Choquet risk measures  satisfying [$S$-LI].  
\citet[Theorem  7 and Corollary 1]{GKWW25} obtained  representations of Choquet rating criteria satisfying [LI] or [$S$-LI].
We summarize these results   below. 
In what follows, $\mathbf 0$ and $\mathbf 1$ are the vector of zeros and the vector of ones, respectively, in $\R^s$.

\begin{proposition}\label{prop:LI_SR}
	For a Choquet risk measure $\rho$,
	\begin{enumerate}[(i)]
    \item $\rho$ satisfies [LI] if and only if  	\begin{equation}
			\rho (X)=  \int X \d (h\circ \p),~~~X\in \L^\infty\label{eq_rho_LI}
		\end{equation}
		for some increasing function $h:[0,1]\to \R$ with $h(0)=0$ and $h(1)=1$.
			 \item $\rho$ satisfies [$S$-LI] if and only if
		\begin{equation}
			\rho (X)= \int X \d (g\circ \p^S), ~~~X\in \L^\infty,\label{eq_rho_SR}
		\end{equation}
		for some componentwise increasing function $g:[0,1]^s\to\R$  with $g(\mathbf 0)=0$ and $g(\mathbf 1)=1$. 
	\end{enumerate} 
	For a Choquet rating criterion $\I$,
	\begin{enumerate}[(i)] 
		\item[(iii)]  $\I$ satisfies [LI] if and only if it is represented by some $\rho$ in \eqref{eq_rho_LI}. 
		\item[(iv)] $\I$ satisfies [$S$-LI] if and only if it is represented by some $\rho$ in \eqref{eq_rho_SR}.
	\end{enumerate}
\end{proposition}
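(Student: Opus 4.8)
The plan is to reduce all four statements to the behaviour on events of the capacity $\nu$ representing a Choquet risk measure, and then to transfer the rating-criterion statements (iii)--(iv) to the risk-measure statements (i)--(ii) through the grid representation $\I=f\circ\rho$.

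First I would settle (i) and (ii). For (i), the ``if'' direction is immediate, since the Choquet integral of $X$ with respect to $h\circ\p$ depends on $X$ only through the function $x\mapsto h(\p(X>x))$, which is determined by the law of $X$. For ``only if'', I test [LI] on indicators: $\id_A\laweq\id_B$ whenever $\p(A)=\p(B)$, so $\nu(A)=\rho(\id_A)=\rho(\id_B)=\nu(B)$; atomlessness of $\p$ then lets me set $h(t):=\nu(A)$ for an arbitrary $A$ with $\p(A)=t$, and, choosing nested events of prescribed probability, monotonicity of $\nu$ shows $h$ is increasing with $h(0)=0$ and $h(1)=1$, whence $\nu=h\circ\p$ (this is the classical distortion-function representation). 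Part (ii) is the same argument run with the vector-valued ``type'' $\p^S(A)=(\p_1(A),\dots,\p_s(A))$ in place of $\p(A)$: one has $\id_A\seq\id_B$ exactly when $\p_j(A)=\p_j(B)$ for all $j$, so [$S$-LI] forces $\nu(A)$ to depend only on $\p^S(A)$, and since each $\p_j$ is atomless and the $S_j$ partition $\Omega$, every vector $(t_1,\dots,t_s)\in[0,1]^s$ is realized by an event $A=\bigcup_{j}A_j$ with $A_j\subseteq S_j$ and $\p_j(A_j)=t_j$; this defines a componentwise increasing $g$ with $g(\mathbf 0)=0$, $g(\mathbf 1)=1$ and $\nu=g\circ\p^S$ (cf.\ \citet[Theorem 3.4]{WZ21}).

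Next I would treat (iii) and (iv). The ``if'' directions are free: a $\rho$ of the form \eqref{eq_rho_LI} (resp.\ \eqref{eq_rho_SR}) is [LI] (resp.\ [$S$-LI]) by (i) (resp.\ (ii)), hence so is $\I=f\circ\rho$. For the ``only if'' direction of (iii), I fix a Choquet risk measure $\rho=\int\cdot\,\d\nu$ representing $\I$ and two events $A,B$ with $\p(A)=\p(B)$. The catch is that $\id_A\laweq\id_B$ only gives $f(\nu(A))=f(\nu(B))$, which does not force $\nu(A)=\nu(B)$ since $f$ is a step function. To bypass this I would feed into [LI] the whole affine family $a\,\id_A+c$ with $a,c\ge0$, $a+c\le1$: it has the same law as $a\,\id_B+c$, and $\rho(a\,\id_A+c)=a\nu(A)+c$ by positive homogeneity and translation invariance, so $f(a\nu(A)+c)=f(a\nu(B)+c)$ for every admissible $(a,c)$. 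If $\nu(A)<\nu(B)$, a short computation then shows $f$ must be locally constant at every point of $(0,1)$, hence constant on $(0,1)$, contradicting the standing hypothesis; thus $\nu(A)=\nu(B)$, and (i) finishes the argument. Part (iv) is identical with $\seq$, $\p^S$ and (ii) replacing $\laweq$, $\p$ and (i).

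I expect the affine-family step in (iii)--(iv) to be the only genuine obstacle: everything else is bookkeeping, since a Choquet risk measure comes with its capacity representation by definition and atomlessness supplies events of every prescribed (conditional) probability. It is exactly at this step that the hypothesis that $f$ is not constant on $(0,1)$ enters, and it cannot be dropped --- otherwise an [LI] rating criterion could be represented by a Choquet risk measure that is not law invariant.
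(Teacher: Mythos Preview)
Your proof is correct. The paper does not actually prove this proposition; it presents it as a summary of known results, citing \citet[Theorem 3.4]{WZ21} for (i)--(ii) and \citet[Theorem 7 and Corollary 1]{GKWW25} for (iii)--(iv). So there is no ``paper's own proof'' to compare against in the strict sense.

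That said, your affine-family argument for (iii)--(iv) is essentially the same device the paper develops later, in the proofs of Theorem~\ref{thm_uniqueness} and Theorem~\ref{th:convert}(i). There, instead of showing $f$ is locally constant everywhere on $(0,1)$, the paper first isolates a single jump point $c\in(0,1)$ of $f$ (which exists since $f$ is not constant on $(0,1)$) and then chooses $(a,\beta)$ so that $a\nu(A)+\beta<c<a\nu(B)+\beta$, yielding an immediate contradiction. Your ``locally constant at every point, hence constant'' version reaches the same conclusion with a touch more work; both are fine, and both rely in exactly the same way on the standing assumption that $f$ is not constant on $(0,1)$. Your remark that this hypothesis cannot be dropped is on point and matches the paper's Remark~\ref{rem:const}.
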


The risk measure  $\rho$  in \eqref{eq_rho_LI}
is called a \emph{distortion risk measure} with \emph{distortion function} $h$,
and  $\rho$ in 
 \eqref{eq_rho_SR} 
will be called an \emph{$S$-distortion risk measure} with \emph{$S$-distortion function} $g$. 
    \cite{WZ21} called $\rho$ in 
 \eqref{eq_rho_SR} a $\p^S$-distortion risk measure, but we drop the probability $\p$ here as it is fixed in our paper.


We remark that $h$ in \eqref{eq_rho_LI} is uniquely determined by $\rho$, through 
$h(x)= \rho(\id_{A_x})$ for $x\in [0,1]$, 
where the event $A_x$ satisfies $\p(A_x)=x$. 
Similarly, 
$g$ in \eqref{eq_rho_SR} is also uniquely determined by $\rho$,  through 
$g(\mathbf x)= \rho(\id_{A_{\mathbf x}})$ for $\mathbf x\in [0,1]^s$, 
where the event $A_{\mathbf x}$ satisfies $\p^S(A_{\mathbf x})=\mathbf x$ (such $A_{\mathbf x}$ exists because $\p_1,\dots,\p_s$ are mutually singular). 
Nevertheless, the same $\rho$ may be represented by both \eqref{eq_rho_LI} 
and \eqref{eq_rho_SR}, as [LI] implies [$S$-LI]. In that case, it holds that
$$
g(x_1,\dots,x_s)= h\left(\sum_{j=1}^s \p(S_j) x_j\right),~~~~(x_1,\dots,x_s)\in [0,1]^s.
$$


\section{Risk consistency properties} \label{sec_LI}

In this section, we formulate several risk consistency properties through the concepts of risk aversion, the asset pooling effect, and diversification.    
The relations among these properties will be made clear in Section \ref{sec:relations}.

\subsection{Pooling effect consistency}
Asset pooling and tranching are two prevailing contract structures of securitizations in the financial market \citep{DeMarzo05, Diamond23}.  As shown by \cite{GKWW25}, Choquet rating criteria maintain consistency in the tranching of financial products. Asset pooling represents another significant issue that should be considered in the design of rating criteria. 

To study the impact of pooling on tranche loss distribution and corresponding desirable properties of rating criteria, consider conditionally iid sequences of losses from the underlying assets. Here, we say that a sequence $(L_\ell)_{\ell\in\N}$  of random variables is  {conditionally iid}, if there exists a  random vector $Z$,  such that $(L_\ell)_{\ell\in\N}$ is conditionally iid given $Z$. 
By the de Finetti Theorem,
$(L_\ell)_{\ell\in\N}$ is conditionally iid if and only if it is  exchangeable. 

As standard in structured finance products, the portfolio losses are standardized with the nominal value, denoted by $L^{(\ell)}={(L_1 + \cdots + L_\ell)}/\ell$ for a positive integer $\ell$, where $(L_\ell)_{\ell\in\N}$ is a sequence in $\mathcal L_1$. 
The following property [PE] concerns  the impact of pooling on the random loss of the senior tranche with some attachment point $K \in [0,1)$, denoted by $(L^{(\ell)}-K)_+$.  

\begin{itemize}
\item[{[PE]}] \emph{Pooling effect consistency}: For any conditionally iid sequence  $(L_\ell)_{\ell\in\N}$   in $\L_1$, if $\ell\le \ell'$, and $K\in [0,1)$, then
\begin{equation} \label{eq:pe-ineq}
\I \left(\frac{(L^{(\ell)}-K)_+}{1-K}\right)\geq \I\left(\frac{(L^{(\ell' )}-K)_+}{1-K}\right).
\end{equation}
We write [PE*] when we require the inequality above with only $K=0$.
\end{itemize}

When [PE] is formulated for a risk measure $\rho:\L^\infty\to \R$, the property, as it reads, only requires \eqref{eq:pe-ineq} to hold for sequences $(L_\ell)_{\ell\in\N}$   in $\L_1$. We will focus on Choquet risk measures in most of our results. For Choquet risk measures, by positive homogeneity and translation invariance, \eqref{eq:pe-ineq}  for sequences in $\L_1$ implies \eqref{eq:pe-ineq} for sequences in $\L^\infty$, and therefore we do not need to separately formulate it on $\L^\infty$.

The property [PE] for rating criteria is quite intuitive economically. It means that  as the number of assets in the pool increases,  the most senior tranche should be safer. 
This property is introduced and briefly discussed in \citet[Appendix D]{GKWW25} with a focus on the PD and EL criteria, where it is shown that the EL criterion satisfies [PE] but the PD criterion does not.

The formulation [PE*] with $K=0$ (i.e., $\rho(L^{(\ell)}) \ge\rho(L^{(\ell')}) $
for all $\ell\le\ell'$) is natural for portfolio risk analysis outside the context of credit rating, because $K>0$ is mainly interpreted as the attachment point for a senior tranche in a structured finance product. Clearly, [PE*] is implied by [PE]. However, later in Theorem \ref{th:PE-LI} we will see that [PE*] is equivalent to [PE] for distortion risk measures. For all properties in this paper, an asterisk mark indicates a weaker version.

Next, we turn to the context of multiple scenarios. 
Recall that  $S=(S_1,\dots,S_s)\in \mathcal F^s $ is a collection of  {economic scenarios}   of   relevance. 
We consider pooling effect in the context of scenario-based rating criteria, which is  formulated for sequences that are conditionally iid under each $\p_j$, $j\in [s]$. 
For instance, if $(L_\ell)_{\ell\in\N}$ is conditionally iid on a variable $Z$
and $S_j=\{Z\in A_j\}$ for  each $j\in [s]$, where $(A_j)_{j\in [s]}$ is a partition of $\R$, then  $(L_\ell)_{\ell\in\N}$ is conditionally iid  under each $P_j$. 

\begin{enumerate}[{[$S$-PE]}]
\item[{[$S$-PE]}] \emph{$S$-pooling effect consistency}: For any  sequence  $(L_\ell)_{\ell\in\N}$   in $\L_1$ that is conditionally iid under each $\p_j$, $j\in [s]$, if $\ell\le \ell'$, and $K\in [0,1)$, then
$$\I \left(\frac{(L^{(\ell)}-K)_+}{1-K}\right)\geq \I\left(\frac{(L^{(\ell' )}-K)_+}{1-K}\right).$$
We  write [$S$-PE*] when we require the inequality above  with only $K=0$.
\end{enumerate}

In contrast, [PE] is formulated to hold for all conditionally iid sequences, which are not necessarily conditionally iid under each $\p_j$.

The credit risk model used by \cite{moodys2023cso} assumes conditionally iid on Gaussian risk factors 
under each scenario, and thus [$S$-PE] implies the rating gets better for a larger pool of assets for this model.

 \subsection{Risk aversion consistency}

A common way of formulating desirable ordering among random losses is through notions of risk aversion. For this, we first introduce convex order and increasing convex order. 

\begin{definition}
For two random variables $X,Y\in \L^\infty$, we say that $X$ is smaller than $Y$ in \emph{(increasing) convex order}, denoted by $X\leq_{\rm{cx}} Y$ ($X\leq_{\rm{icx}} Y$), if $\mathbb{E}[\phi(X)]\leq \mathbb{E}[\phi(Y)]$ holds for any (increasing) convex function $\phi$. 
\end{definition}

\cite{RS70} used convex order   to define the concept of risk aversion for decision makers; if $X\le _{\rm cx} Y$, then $Y$ carries more risk than $X$.  
Increasing convex order
combines both preferences for certainty and preferences for more sure gain. 
To illustrate, consider an investor with an initial wealth $w$ who faces a choice between random losses $X$ and $Y$. For all concave utility functions $u$, the condition $X \leq_{\rm{icx}} Y$ implies $\mathbb{E}[u(w - X)] \geq \mathbb{E}[u(w - Y)]$. Therefore, the investor should prefer the loss $X$ over $Y$. In light of this, we introduce a property [RA]. 

\begin{enumerate}
	\item[{[RA]}] \emph{Risk aversion consistency}: $\I(L_1)\leq \I(L_2)$ for $L_1, L_2\in \L_1$ with $L_1\leq_{\rm{icx}}L_2$.
\end{enumerate}

For rating criteria, property [RA] means that if a loss is agreed to be riskier than another loss
by all risk-averse expected utility investors, then the riskier loss cannot be rated better than the other one. 

To compare [RA] and [PE],
we can show that [RA] implies [PE], which is a consequence of the following lemma.
This implication is formally summarized in Theorem \ref{thm:summarize} in Section \ref{sec:relations}.

\begin{lemma}
\label{lem:RC-pool}
 For any conditionally iid sequence  $L_1,L_2,\dots\in \L_1$, $\ell\le \ell'$, and $K\in[0,1)$,
  $$ \frac{(L^{(\ell)}-K)_+}{1-K} \ge_{\rm icx}  \frac{(L^{(\ell' )}-K)_+}{1-K}.$$  
\end{lemma}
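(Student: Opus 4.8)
The plan is to split the argument into a reduction that removes the tranche transformation and a core comparison between pooled averages of different sizes. For the reduction, note that for each $K\in[0,1)$ the map $g:x\mapsto (x-K)_+/(1-K)$ is increasing and convex on $[0,1]$. Since $\phi\circ g$ is increasing and convex whenever $\phi$ is, the increasing convex order is preserved under post-composition with $g$; hence it suffices to prove the stronger unconditional claim
$$L^{(\ell)}\ge_{\mathrm{cx}} L^{(\ell')}\qquad\text{for }\ell\le\ell',$$
because $X\ge_{\mathrm{cx}} Y$ implies $X\ge_{\mathrm{icx}} Y$, and then $g(X)\ge_{\mathrm{icx}} g(Y)$. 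So the whole content reduces to showing that the pooled average of a conditionally iid sequence decreases in convex order as the pool grows.

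For this core step I would exploit the exchangeability of $(L_\ell)_{\ell\in\N}$, which by the de Finetti Theorem is equivalent to the conditionally iid assumption. Set $\mathcal G_{\ell'}=\sigma\big(L^{(\ell')},L_{\ell'+1},L_{\ell'+2},\dots\big)$. By the symmetry of $(L_1,\dots,L_{\ell'})$ given $\mathcal G_{\ell'}$, all the conditional expectations $\E[L_i\mid\mathcal G_{\ell'}]$ with $i\le\ell'$ coincide, and averaging them shows each equals $L^{(\ell')}$; averaging over $i\le\ell$ then gives $\E\big[L^{(\ell)}\mid\mathcal G_{\ell'}\big]=L^{(\ell')}$. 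Applying the conditional Jensen inequality, for every convex $\phi$,
$$\E\big[\phi(L^{(\ell')})\big]=\E\Big[\phi\big(\E[L^{(\ell)}\mid\mathcal G_{\ell'}]\big)\Big]\le\E\Big[\E\big[\phi(L^{(\ell)})\mid\mathcal G_{\ell'}\big]\Big]=\E\big[\phi(L^{(\ell)})\big],$$
which is exactly $L^{(\ell)}\ge_{\mathrm{cx}}L^{(\ell')}$. (An equivalent route: condition on the latent vector $Z$, under which the $L_\ell$ are genuinely iid; write $L^{(\ell')}$ as the average of the $\binom{\ell'}{\ell}$ subset-means $\frac1\ell\sum_{i\in A}L_i$, each equal in conditional law to $L^{(\ell)}$; apply Jensen conditionally on $Z$; then integrate out $Z$.) Combining the two parts delivers the lemma, and in particular $g(L^{(\ell)})\ge_{\mathrm{icx}} g(L^{(\ell')})$ as claimed.

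I do not expect a genuine obstacle here; the two points needing care are (i) the justification that $\E[L_i\mid\mathcal G_{\ell'}]$ is independent of $i\le\ell'$ — this is where exchangeability, rather than merely identical marginals, is essential — and (ii) getting the direction of the order right in the reduction: composing with the increasing convex tranche map $g$ preserves $\ge_{\mathrm{icx}}$ but would \emph{not} preserve $\ge_{\mathrm{cx}}$, because $g$ is flat on $[0,K]$, which is precisely why the lemma is stated in increasing convex order.
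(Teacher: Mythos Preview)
Your proposal is correct and follows the same two-step structure as the paper's proof: first establish $L^{(\ell')}\le_{\rm icx} L^{(\ell)}$ for the pooled averages, then push this through the increasing convex tranche map $x\mapsto(x-K)_+/(1-K)$. The only difference is that the paper dispatches the first step by citing Example~3.A.29 of Shaked--Shanthikumar (2007), whereas you supply a self-contained reverse-martingale argument via exchangeability (obtaining in fact the stronger convex-order comparison), together with an alternative de~Finetti route; both the structure and the reduction step are the same as in the paper.
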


Similarly to [RA], we now introduce a scenario-based version of increasing convex order.

\begin{definition}
Given a collection of scenarios $S=(S_1,\dots,S_s)$, we say that $X$ is smaller than $Y$ in the \emph{$S$-increasing convex order}, denoted by $X\leq_{S\opName{-icx}} Y$, if $\mathbb{E}^{\p_j} [\phi(X)]\leq \mathbb{E}^{\p_j} [\phi(Y)]$ holds for any increasing convex function $\phi$ and $j\in [s]$.
\end{definition}

The $S$-increasing convex order can be formulated for a  general set  $\mathcal Q$ of probability measures:
We write 
 $X\leq_{\mathcal Q\opName{-icx}} Y$, if $\mathbb{E}^{Q} [\phi(X)]\leq \mathbb{E}^{Q} [\phi(Y)]$ holds for any increasing convex function $\phi$ and $Q\in \mathcal Q$. 
 The $S$-increasing convex order corresponds to the case $\mathcal Q=\{\p_j:j\in [s]\}$. 
The relation $X\leq_{\mathcal Q\opName{-icx}} Y$ 
has 
a clear economic interpretation. Every (expected utility) decision maker with an  increasing concave utility function $u$ and a subjective probability $Q $   in the convex hull of $\mathcal Q$ prefers loss $X$ over loss $Y$. This is because 
$$
X\leq_{\mathcal Q\opName{-icx}} Y \iff \E^Q[u(w-X)]\ge \E^Q[u(w-Y)], ~\forall u, Q \mbox{ above}.
$$
Therefore, $X\leq_{\mathcal Q\opName{-icx}} Y$ is a natural generalization of the usual stochastic dominance, as a robust notion of preferences under multiple subjective probabilities. 
A similar notion to $X\leq_{\mathcal Q\opName{-icx}} Y$ was introduced and  studied by \cite{DR10} in  the context of stochastic optimization.  

Although the stochastic order relation can be formulated for a general set $\mathcal Q$, we focus on the case of $\mathcal Q=\{\p_j:j\in [s]\}$ because of its relevance in credit models and technical tractability. 
The set $\{\p_j:j\in [s]\}$ 
contains mutually singular probability measures, allowing for representation results that are not available in the general case, as discussed by \cite{WZ21}.

The natural analogue of [RA] in our setting is the consistency with respect to $S$-increasing convex order, which we define below.
\begin{enumerate}[{[$S$-RA]}]
\item[{[$S$-RA]}] \emph{$S$-risk aversion consistency}: A rating criterion  $\I$ is said to be $S$-risk consistent if for $X,Y\in \L_1$, $X\leq_{S\opName{-icx}}Y$ implies $\I(X)\leq \I(Y)$.
\end{enumerate}

The notion of $S$-risk aversion consistency reflects the principle that if a loss is considered less risky (in the sense of $\le_{\rm icx}$) in all economic scenarios, then it should be seen as less risky overall and given a better rating.  
In different forms, $S$-risk aversion consistency 
is recently considered by \cite{zang2024random} in the context of  random distortion functions, and by  \cite{BMWW24}  for modeling preferences in decision theory. 
As far as we know, there are no existing  characterization results of  [$S$-RA] for risk measures, and we will offer one for Choquet risk measures in Section \ref{sec:charac-2}.

\subsection{Convexity and quasi-convexity}
\label{sec:convex}

In the  literature of risk measures, 
convexity  (defined in Section \ref{sec:21-def}) is
  a popular property reflecting consistency with respect to risk pooling and diversification; see e.g.,~\cite{FS16}.  
For rating criteria $\I$, which are categorical, convexity is not suitable. For instance, $x\mapsto\I(x)$ on $[0,1]$ cannot be convex if $\I$ takes at least $3$ values on $[0,1]$. 
For such a functional, 
quasi-convexity is a natural property reflecting the benefit of  diversification,  as discussed by 
\cite{CMMM11}, formally defined below. 
\begin{itemize}
\item[{[QC]}] \emph{Quasi-convexity}: $\I(\lambda X+(1-\lambda) Y)\leq \max\{ \I(X), \I(Y)\}$ for $X,Y\in \L_1$ and $\lambda \in [0,1]$.
\end{itemize}

Quasi-convexity has an intuitive meaning that combining a less risky asset A and a more risky asset B
does not get more risky than B. 
This makes the property appealing for the measurement of market risk, credit risk, and diversification.  \cite{HLW25} used quasi-convexity as a main axiom for  indices quantifying diversification. 
For rating criteria, this property means that combining a good bond and a bad bond would not get a  rating worse than the bad bond.

For  Choquet risk measures, coherence, convexity, and quasi-convexity are equivalent; see \citet[Proposition 2.1]{CMMM11}. 
For distortion risk measures with a distortion function $h$, these properties are further equivalent to [RA] and concavity of $h$;  see \citet[Theorem 3]{WWW20} and Lemma \ref{lem:RC-trivial} below. 
Therefore, we do not need to distinguish [QC] and [RA] for distortion risk measures, but later we will see that [QC] and [$S$-RA] generally are not equivalent for $S$-distortion risk measures (Table \ref{tab:jump-exs} below). 

Quasi-convexity is typically studied together with some notion of continuity or semi-continuity. 
We also introduce two notions of semi-continuity that are useful for our paper.
\begin{enumerate}
	\item[{[LS] }] \emph{Lower semi-continuity under almost sure convergence}: $\liminf_{\ell\rightarrow \infty}\I(X_{\ell})\geq \I(X)$ for any 
    $(X_\ell)_{\ell\in\N} $ with 
    $X_{\ell} \rightarrow X$ almost surely.
    \item[{[LS*]}] \emph{Lower semi-continuity  with respect to $\L^\infty$}: $\liminf_{\ell\rightarrow \infty}\I(X_{\ell})\geq \I(X)$ for any 
    $(X_\ell)_{\ell\in\N} $ with 
    $X_{\ell} \rightarrow X$ in $\L^\infty$.
\end{enumerate}

Both [LS] and [LS*] are widely studied in the literature. 
Note that [LS] is stronger than [LS*] because $\L^\infty$ convergence is stronger than almost sure convergence. 
Property 
[LS*] is satisfied by all monetary risk measures, which are indeed continuous in $\L^\infty$. 
Property [LS] for risk measures  is  introduced by \cite{WZ21b} under the name of prudence,\footnote{  \cite{WZ21b} used point-wise convergence instead of  almost sure convergence, but these two continuity properties are equivalent for law-invariant risk measures.} which reflects a statistical consideration of robustness, and it is further studied by \cite{AL24} in detail. In the context of rating practices, it reflects the idea that if
the loss $L$ is statistically modeled using an approximation that converges to $L$, then the rating of the approximation should not be better than the rating of $L$ when the approximation error tends to zero.

\subsection{Relations among the risk consistency properties}
\label{sec:relations}

The properties [QC], [PE], [PE*], [RA], 
[$S$-PE], [$S$-PE*] and [$S$-RA] will all be referred to as \emph{risk consistency properties}.  These properties 
reflect ideas that are interconnected. 
We next study relations among them. 

We first present a few examples to show that these conceptually similar notions are not equivalent, even within the class of   $S$-distortion risk measures with $s=2$ scenarios. 
In what follows,  the essential supremum and infimum of a random variable $X$ under a probability measure $\mathbb Q$ are defined by, respectively,  
\begin{align*} 
\esssup^{\mathbb Q}(X) &=\inf\{x\in \mathbb{R} : \mathbb Q(X>x)=0\};\\ 
\essinf^{\mathbb Q}(X)&=\sup\{x\in \mathbb{R}: \mathbb Q(X< x)=0\}.\end{align*}

\begin{example}
\label{ex:jumps}
    
 Four different $S$-distortion risk measures with $s=2$, each with a discontinuous $S$-distortion function $g$, are presented in Table \ref{tab:jump-exs}. The assertions therein are straightforward to verify. We can see that they may satisfy some of [QC], [$S$-RA], and [$S$-PE]. 
\end{example}

\begin{table}[t]
    \renewcommand{\arraystretch}{1.5}
    \centering
    \begin{tabular}{c|c|c|c|c|c }
        $\rho$ & $g(x_1,x_2)$ &  continuity of $g$ & [QC] & [$S$-RA] & [$S$-PE] 
        \\
        \hline
        $ \esssup^{\mathbb{P}_1} \vee \esssup^{\mathbb{P}_2} $ & $\id_{\{x_1\vee x_2>0\}}$ & lower & $\checkmark$ & $\checkmark$ & $\checkmark$ 
        \\
        $\esssup^{\mathbb{P}_1} \wedge\esssup^{\mathbb{P}_2} $ & $\id_{\{x_1\wedge x_2>0\}}$ & lower & $\times$ & $\checkmark$ & $\checkmark$
        \\
        $\essinf^{\mathbb{P}_1} \vee\essinf^{\mathbb{P}_2} $ & $\id_{\{x_1\vee x_2=1\}}$ & upper & $\times$ & $\times$ & $\checkmark$ 
        \\
        $\essinf^{\mathbb{P}_1} \wedge \essinf^{\mathbb{P}_2} $ & $\id_{\{x_1\wedge x_2=1\}}$ & upper & $\times$ & $\times$ & $\checkmark$ 
        \\
    \end{tabular}
    \caption{Examples of risk measures satisfying  risk consistency properties}
    \label{tab:jump-exs}
\end{table}

In view of Example \ref{ex:jumps}, the risk consistency properties are not equivalent, but they have some 
connections. 
We summarize   implications among these properties  in the next theorem.  
Since all properties apply to both $\rho:\L^\infty \to \R$ and 
$\I:\L_1\to [n]$,     the following relations hold for both $\rho$ and $\I$.
Recall that monotonicity of both $\rho$ and $\I$ is assumed throughout, and it is used in the proof of parts (iv) and (v) of Theorem \ref{thm:summarize}.

\begin{theorem}
\label{thm:summarize}
Fix a collection of scenarios $S$.  
\begin{enumerate}[(i)]
\item The following implications hold:  $$
\mbox{[LI]}\implies   \mbox{[$S$-LI]};~~
\mbox{[RA]}\implies   \mbox{[$S$-RA]};~~
\mbox{[PE]}\implies   \mbox{[$S$-PE]};~~
\mbox{[PE*]}\implies   \mbox{[$S$-PE*]}.
 $$
If $S=\Omega$, then the above implications are equivalences. 

\item {[RA]} implies both [LI] and [PE].
\item  {[$S$-RA]} implies both [$S$-LI] and [$S$-PE].

\item {[QC]}, [LS*] and [LI] together imply [RA].
\item {[QC]}, [LS*] and [$S$-LI] together imply [$S$-RA].
\item {[QC]}  and [LI] together imply [PE*].
\item {[QC]}  and [$S$-LI] together imply [$S$-PE*].
\end{enumerate}

\end{theorem}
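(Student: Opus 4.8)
The plan is to prove each of the seven implications in turn, exploiting the structure established earlier in the excerpt. I expect most parts to follow from carefully unwinding the definitions, with the genuine work concentrated in parts (iv) and (v). Throughout I will freely use that every stated property applies to both $\rho$ and $\I$ and that $\I$ is an increasing transform of $\rho$ via \eqref{eq:p2r}, so it suffices to prove each implication at the level of risk measures (the rating-criterion version then follows by composing with the increasing function $f$), except for the parts that use monotonicity explicitly.

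\textbf{Parts (i)--(iii).} For (i), the implications are immediate from the defining quantifiers: $L_1 \seq L_2$ is a stronger hypothesis than $L_1 \laweq L_2$ (equality in distribution under each $\p_j$ versus under $\p$), and a conditionally iid sequence under each $\p_j$ is a special case of a conditionally iid sequence; likewise $X \leq_{S\text{-icx}} Y$ is weaker than $X \leq_{\rm icx} Y$. So a property quantified over the larger hypothesis class implies the one quantified over the smaller. When $S = \Omega$ (i.e. $s=1$, $\p_1 = \p$) the hypothesis classes coincide and the implications become equivalences. For (ii): [RA] $\Rightarrow$ [LI] because $L_1 \laweq L_2$ gives both $L_1 \leq_{\rm icx} L_2$ and $L_2 \leq_{\rm icx} L_1$, hence $\I(L_1) = \I(L_2)$; and [RA] $\Rightarrow$ [PE] is exactly Lemma \ref{lem:RC-pool} combined with monotonicity of $\I$ under $\leq_{\rm icx}$. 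Part (iii) is the verbatim scenario-based analogue: $L_1 \seq L_2$ gives $L_1 \leq_{S\text{-icx}} L_2$ and the reverse, yielding [$S$-LI]; and [$S$-PE] follows from Lemma \ref{lem:RC-pool} applied under each $\p_j$ (noting the lemma's hypothesis is exactly conditional iid, which holds under each $\p_j$ by assumption) together with the definition of $\leq_{S\text{-icx}}$.

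\textbf{Parts (iv)--(v).} This is the main obstacle. Given [QC], [LS*] and [LI], I want [RA], i.e. $X \leq_{\rm icx} Y \Rightarrow \rho(X) \leq \rho(Y)$. The standard route: a law-invariant, quasi-convex, lower-semicontinuous (in $\L^\infty$) monetary-type functional is consistent with $\leq_{\rm icx}$. Concretely, if $X \leq_{\rm icx} Y$ then by a classical result (Strassen-type / the characterization of increasing convex order) there is a sequence of random variables obtained from $Y$ by taking conditional expectations and adding nonnegative increments that converges to something dominating $X$ in distribution; more usefully, $X$ is dominated in distribution by a random variable of the form $\E[\,\cdot \mid \mathcal G]$-type average of copies of $Y$. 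The clean argument: $X \leq_{\rm icx} Y$ implies $X$ is stochastically dominated by an element of the closed convex hull (in an appropriate sense) of the set of random variables equal in distribution to $Y$; quasi-convexity plus law invariance gives $\rho \leq \rho(Y)$ on finite convex combinations of copies of $Y$, monotonicity handles the stochastic-dominance step, and [LS*] passes to the limit. I would structure it as: (a) reduce to the case $X \leq_{\rm cx} Y$ plus a monotonicity step for the ``increasing'' part; (b) invoke the representation of $\leq_{\rm cx}$ via mixtures/martingale couplings to write $X \laweq \lim_m \frac1m\sum_{i=1}^m Y_i$ for suitable copies $Y_i \laweq Y$ (using atomlessness of $(\Omega,\mathcal F,\p)$); (c) apply [QC] inductively to bound $\rho\big(\frac1m\sum Y_i\big) \le \max_i \rho(Y_i) = \rho(Y)$ by [LI]; (d) apply [LS*] and [LI] to conclude $\rho(X) \le \rho(Y)$. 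Part (v) is identical word-for-word with $\p$ replaced by the family $\{\p_j\}$ and $\laweq$ by $\seq$; the only subtlety is that the approximating copies must be simultaneously identically distributed under all $\p_j$, which is possible because the $\p_j$ are mutually singular (one builds the copies separately on each $S_j$) — this is the technical point I would be most careful about.

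\textbf{Parts (vi)--(vii).} These weaken the conclusion from [RA]/[$S$-RA] to [PE*]/[$S$-PE*] and in exchange drop [LS*]. For (vi): with [QC] and [LI], take a conditionally iid sequence $(L_\ell)$ and note $L^{(\ell')} = \frac{1}{m}\sum_{k} (\text{block averages of } \ell \text{ terms})$ when $\ell' = m\ell$, and each block average is equal in distribution (by exchangeability/conditional iid and [LI]) to $L^{(\ell)}$; hence [QC] gives $\rho(L^{(m\ell)}) \le \max_k \rho(\text{block}_k) = \rho(L^{(\ell)})$. For general $\ell \le \ell'$ one interpolates (e.g. via a common refinement or by embedding, using that $\rho$ on $\L_1$ is monotone and one can compare $L^{(\ell)}$ and $L^{(\ell')}$ through $L^{(\ell \ell')}$ using the divisibility trick in both directions, or directly: $L^{(\ell')}$ is a convex combination of quantities each distributed like $L^{(\ell)}$ when $\ell \mid \ell'$, and the general case follows since the claim need only hold for $K=0$ and one may pass to $\ell\ell'$ as a common multiple and use monotonicity of $\ell \mapsto \rho(L^{(\ell)})$ along multiples — I would check whether the paper intends the inequality only along the subsequence or for all pairs and adjust). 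No limiting argument is needed because [PE*] is a statement about finite $\ell, \ell'$, which is why [LS*] is not required here. Part (vii) is the same argument carried out under each $\p_j$ using [$S$-LI] in place of [LI].
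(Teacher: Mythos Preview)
Your outline for parts (ii)–(v) matches the paper's proof closely, including the key construction in (iv)–(v) of copies $Y_i \laweq Y$ whose empirical averages converge in $\L^\infty$ to a random variable distributed as $X$ (the paper cites a result of Mao and Wang for this), followed by [QC], [LI], and [LS*]. Your remark about mutual singularity in (v) is exactly the point the paper relies on.

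There is a slip in part (i): you write that ``$X \leq_{S\text{-icx}} Y$ is weaker than $X \leq_{\rm icx} Y$,'' but it is the other way around. Since $\E^{\p_j}[\phi(X)] \le \E^{\p_j}[\phi(Y)]$ for all $j$ and all increasing convex $\phi$ implies (taking the $\p(S_j)$-weighted sum) $\E[\phi(X)] \le \E[\phi(Y)]$, the relation $\leq_{S\text{-icx}}$ is \emph{stronger}. This is what makes [RA] $\Rightarrow$ [$S$-RA] hold, consistently with your general principle about larger hypothesis classes, but your identification of which class is larger is reversed for this case.

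The real gap is in parts (vi)–(vii). Your block-average argument only handles $\ell \mid \ell'$, and the ``common multiple'' workaround does not close the gap: from $\rho(L^{(\ell\ell')}) \le \rho(L^{(\ell)})$ and $\rho(L^{(\ell\ell')}) \le \rho(L^{(\ell')})$ you cannot conclude $\rho(L^{(\ell')}) \le \rho(L^{(\ell)})$. The paper's argument is a leave-one-out trick that handles all pairs at once: for fixed $\ell$, let $L^{(\ell)}_{-k}$ denote the average of $\{L_j : j \in [\ell]\setminus\{k\}\}$. Then $L^{(\ell)}$ is itself the arithmetic mean of $L^{(\ell)}_{-1},\dots,L^{(\ell)}_{-\ell}$, and by exchangeability each $L^{(\ell)}_{-k}$ has the same distribution as $L^{(\ell-1)}$. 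Applying [QC] and [LI] gives $\rho(L^{(\ell)}) \le \rho(L^{(\ell-1)})$, and iterating yields the inequality for every pair $\ell \le \ell'$. The same construction with [$S$-LI] in place of [LI] gives (vii).
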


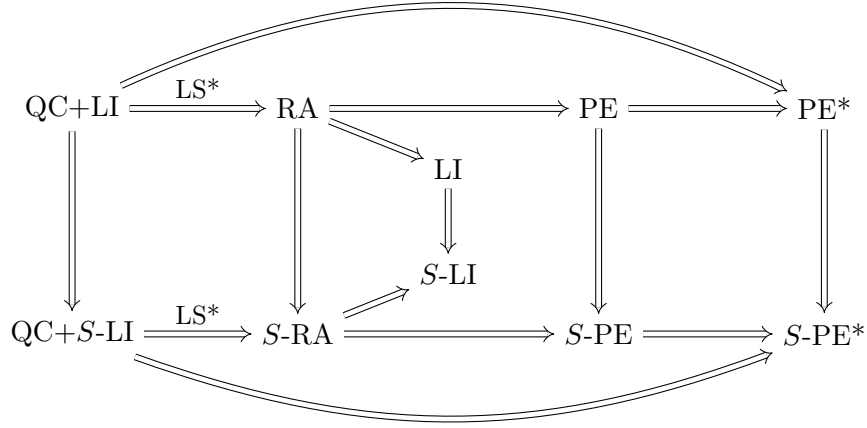
\begin{figure}[t]
\centering
\begin{tikzpicture}[
    node distance = 0.8cm and 1cm,
    node/.style={inner sep=6pt, align=center}, 
    lrarrow/.style={<->, double, double distance=2pt, >=Implies},
    larrow/.style={<-, double, double distance=1.8pt, >=Implies},
    rarrow/.style={->, double, double distance=1.8pt, >=Implies}
]
    \node (QC1) at (-1,1.5) {QC+LI};
     \node (QC2) at (-1,-1.5) {QC+$S$-LI};
    \node (RC) at (2,1.5) {RA};
    \node (SRC) at (2,-1.5) {$S$-RA};
    \node (LI) at (4,0.7) {LI};  
    \node (SLI) at (4,-0.7) {$S$-LI}; 
    \node (PE) at (6, 1.5) {PE};
    \node (SPE) at (6, -1.5) {$S$-PE};
    \node (PEs) at (9, 1.5) {PE*};
    \node (SPEs) at (9, -1.5) {$S$-PE*};
    
    \draw[rarrow] (QC1) -- (RC) node[midway,sloped, above] {\small   LS*};
    \draw[rarrow] (QC2) -- (SRC) node[midway,sloped, above] {\small  LS*};    \draw[rarrow] (QC1) -- (QC2);
    \draw[rarrow] (RC) -- (SRC);
    \draw[rarrow] (RC) -- (LI);
    \draw[rarrow] (LI) -- (SLI);
    \draw[rarrow] (SRC) -- (SLI);
    \draw[rarrow] (RC) -- (PE);
    \draw[rarrow] (PE) -- (SPE);
    \draw[rarrow] (SRC) -- (SPE);
    \draw[rarrow] (PE) -- (PEs);
    \draw[rarrow] (SPE) -- (SPEs);
    \draw[rarrow] (PEs) -- (SPEs);
    \draw[rarrow, bend left=25] (QC1) to   (PEs) ;
    \draw[rarrow, bend right=20] (QC2) to     (SPEs) ;
\end{tikzpicture}
\caption{Summary of the relations among the properties in Theorem \ref{thm:summarize}; note that [LS*] is automatic for monetary risk measures}
\label{fig:summary}  
\end{figure}








   We summarize  in Figure \ref{fig:summary} the relations obtained in Theorem \ref{thm:summarize}.
   Since $S$-distortion risk measures automatically satisfies [LS*] and [$S$-LI],
they satisfy the chain of implications
   [QC]$\Rightarrow$[$S$-RA]$\Rightarrow$[$S$-PE]$\Rightarrow$[$S$-PE*]. 
      Similarly, for distortion risk measures, we have the chain of implications  [QC]$\Rightarrow$[RA]$\Rightarrow$[PE] $\Rightarrow$[PE*].
For these risk measures, [QC] is the strongest notion among risk consistency properties; this was also reflected in Table \ref{tab:jump-exs}.

\section{Connecting properties of risk measures and rating criteria}
\label{sec:unique}

This section shows how we can freely translate most properties in Section \ref{sec_LI} 
 from a risk measure $\rho$  
 to the rating criteria $\I$ it represents, and back. 
 The forward direction is simple, as $\I$ is an increasing function of $\rho$, but the backward direction is more complicated. In particular, for a given $\I$, we first need to pin down the possible risk measures that represent it. 
 For Choquet rating criteria $\I$, it turns out that the Choquet risk measure that represents $\I$ is unique, presented in the next 
result.  

\begin{theorem} \label{thm_uniqueness}
	If a Choquet rating criterion $\I$ is represented by two Choquet risk measures $\rho_1$ and $\rho_2$, then $\rho_1 = \rho_2$.
\end{theorem}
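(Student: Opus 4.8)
The plan is to exploit the fact that the representing map $f$ in \eqref{eq:p2r} is not constant on $(0,1)$, so the rating criterion $\I$ detects a genuine interval of values of $\rho$, and then to reconstruct the capacity from $\I$ on a family of indicator losses. Suppose $\I$ is represented by Choquet risk measures $\rho_1=\int\cdot\,\d\nu_1$ and $\rho_2=\int\cdot\,\d\nu_2$ via increasing functions $f_1,f_2:[0,1]\to[n]$ with $f_i(0)=1$, $f_i(1)=n$, both non-constant on $(0,1)$. The key observation is that for $X\in\L_1$ and a scalar $c\in(0,1)$, the Choquet risk measure satisfies $\rho_i(cX)=c\,\rho_i(X)$ by positive homogeneity (which holds for all Choquet risk measures on $\L^\infty$, hence on $\L_1$ for $c\in(0,1)$), so the value $\rho_i(X)$ can be probed at arbitrarily small scales where $f_i$ is still informative.

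First I would show that $f_1$ and $f_2$ agree ``up to the grid'': since $f_1$ is non-constant on $(0,1)$, there exist $0<a<b<1$ with $f_1(a)<f_1(b)$, i.e. there is a threshold $t\in(a,b)$ such that $f_1$ jumps at $t$. For any event $A$ with $\p(A)=p$ we have $\rho_i(c\,\id_A)=c\,\nu_i(A)$ for $c\in(0,1)$, and the constraint $f_1(c\,\nu_1(A))=\I(c\,\id_A)=f_2(c\,\nu_2(A))$ must hold for all such $c$. The idea is to let $c$ range over $(0,1)$: if $\nu_1(A)\neq\nu_2(A)$, say $\nu_1(A)<\nu_2(A)$, then by choosing $c$ so that $c\,\nu_1(A)$ and $c\,\nu_2(A)$ straddle a jump point of $f_1$ (possible because $f_1$ is non-constant on $(0,1)$ and $c$ can be tuned), we get $\I(c\,\id_A)$ taking two different values depending on which representation we use — a contradiction. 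Making this precise requires showing that the jump set of $f_1$ inside $(0,1)$ can be ``hit'' by the interval $(c\,\nu_1(A),c\,\nu_2(A))$ for a suitable $c$; this follows because as $c$ decreases from $1$ the interval sweeps through a neighborhood of $0$, and non-constancy of $f_1$ on $(0,1)$ guarantees a jump somewhere that gets crossed. This yields $\nu_1(A)=\nu_2(A)$ for every $A\in\mathcal F$, hence $\nu_1=\nu_2$ and therefore $\rho_1=\rho_2$ by \eqref{eq:scpresentation}.

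Alternatively, and perhaps more cleanly, I would first pin down $f_1=f_2$ on the common range and then argue $\rho_1=\rho_2$ directly: use law invariance-free reasoning on the two-point losses $c\,\id_A$. Define $\phi_i(p)=\nu_i(A_p)$ is not well-defined without [LI], so instead work event-by-event. For fixed $A$, the function $c\mapsto\I(c\,\id_A)$ equals both $f_1(c\,\nu_1(A))$ and $f_2(c\,\nu_2(A))$ on $(0,1)$; these are increasing step functions of $c$, and two increasing step functions $f_1(c\alpha)$ and $f_2(c\beta)$ of $c$ that coincide on $(0,1)$, with $f_1,f_2$ non-constant on $(0,1)$, force $\alpha=\beta$ (the location of the first jump in $c$ is $\inf\{t:f_1(t)>1\}/\alpha$ on one side and $\inf\{t:f_2(t)>1\}/\beta$ on the other, and more jumps pin down the ratio completely).

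\textbf{Main obstacle.} The delicate point is handling the discreteness: $f_i$ only takes finitely many values, so $\rho_1(X)$ and $\rho_2(X)$ need not be equal merely from $\I(X)=\I(X)$ — they only need to lie in the same ``level set'' of $f$. The whole argument rests on the scaling trick $\rho_i(cX)=c\rho_i(X)$ to amplify differences until they cross a grid line, together with the hypothesis that $f$ is non-constant on $(0,1)$ (without which, e.g. $f\equiv$ constant on $(0,1)$, distinct capacities could represent the same $\I$). I expect the bookkeeping around \emph{which} values of $c$ force a contradiction — ensuring the swept interval $(c\nu_1(A),c\nu_2(A))$ actually contains a jump point of $f_1$ in $(0,1)$ — to be the part requiring the most care, though it is ultimately elementary real analysis once the setup is in place.
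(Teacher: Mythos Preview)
Your core idea---use the affine structure of Choquet risk measures to amplify a putative difference $\rho_1(L)\neq\rho_2(L)$ until it straddles a jump of the grid function---is exactly the paper's. But the execution has two genuine gaps, and the second is not mere bookkeeping.

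First, you never actually establish $f_1=f_2$. Your first paragraph promises to, but the argument that follows (``straddle a jump point of $f_1$\dots $\I(c\,\id_A)$ taking two different values depending on which representation we use'') already presupposes it: the two representations give $f_1(c\nu_1(A))$ and $f_2(c\nu_2(A))$, and locating a jump of $f_1$ between $c\nu_1(A)$ and $c\nu_2(A)$ says nothing about the second expression unless $f_2=f_1$. Your alternative ratio-of-first-jumps idea only yields $\nu_1(A)/\nu_2(A)=\inf\{t:f_1(t)>1\}/\inf\{t:f_2(t)>1\}$, which is one equation in several unknowns. The paper disposes of this step in a single line by evaluating on constants: since every Choquet risk measure satisfies $\rho(x)=x$ for $x\in[0,1]$, one reads off $f_1(x)=\I(x)=f_2(x)$ directly.

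Second, even granting $f_1=f_2=f$, positive homogeneity alone is insufficient. You use only $c\,\id_A$, so the swept interval $(c\nu_1(A),c\nu_2(A))$ is contained in $(0,\nu_2(A))$. If both $\nu_1(A)$ and $\nu_2(A)$ lie strictly below the smallest jump point of $f$ in $(0,1)$, that interval never meets any jump of $f$, for any $c\in(0,1]$, and no contradiction arises; your sentence ``non-constancy of $f_1$ on $(0,1)$ guarantees a jump somewhere that gets crossed'' is simply false in this regime. The paper fixes this by also invoking translation invariance: if $\rho_1(L)<\rho_2(L)$ both lie below the chosen jump point, one passes to $\lambda L+(1-\lambda)$, which pushes both values toward $1$ while preserving their order, eventually straddling the jump. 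You need both operations---scaling down when values are too large, translating up when they are too small---to cover all cases.
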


Theorem \ref{thm_uniqueness} implies that there is a bijection between Choquet risk measures  and  Choquet rating criteria. 
This may appear counterintuitive at first, because
rating criteria are discrete transforms of risk measures, with a loss of information. To see an intuition behind  Theorem \ref{thm_uniqueness}, suppose that two  Choquet risk measures   disagree on the ranking of two random losses $X$ and $Y$. We can find losses $X'$ and $Y'$ 
 using translation invariance and positive homogeneity of Choquet risk measures such that the corresponding Choquet rating criteria  disagree on the ranking of $X'$ and $Y'$, and thus they cannot be the same rating criterion.  


From a financial perspective,  if two rating agencies both use Choquet rating criteria but choose different Choquet risk measures, they will always assign different ratings to at least one bond.  Therefore, in the rest of the paper, as we focus on Choquet rating criteria,
we will denote the risk measure $\rho$ and the mapping $f$ in \eqref{eq:p2r} by $\rho_{\I}$ and $f_{\I}$ without any ambiguity.
Moreover, this result justifies that for a Choquet rating criterion $\I$ satisfying [LI] (resp.~[$S$-LI]), its  distortion function   in \eqref{eq_rho_LI} (resp.~$S$-distortion function in \eqref{eq_rho_SR}) is unique. 




\begin{remark}
\label{rem:const}
    The assumption that $f$ in \eqref{eq:p2r} is not a constant on $(0,1)$ is essential for the   uniqueness result in Theorem \ref{thm_uniqueness}. For instance, if $n=3$ and $f(0)=1$, $f(1)=3$, and $f(x)=2$ for $x\in (0,1)$, then any $\rho$ in \eqref{eq_rho_LI} 
with $h(x)\in (0,1)$ for $x\in (0,1)$ leads to the same  rating criterion $\I$.     
\end{remark}

With the help of Theorem \ref{thm_uniqueness}, we can 
discuss how the properties in Section \ref{sec_LI} can be translated between rating criteria and the corresponding risk measures. 
This will be very useful in our later characterization results, as we typically first prove the characterization  for Choquet risk measures, and then translate it into one for  Choquet rating criteria with the help of an additional continuity assumption. 

\begin{theorem}
\label{th:convert}
Let $\I$ be a Choquet rating criterion.  
The following statements hold.
    \begin{enumerate}[(i)]
    \item     $\I$  satisfies property $P$ if and only if $\rho_\I$ satisfies the same property, where $P$ is any one of [LI], [$S$-LI], [RA], [$S$-RA], [PE*], [$S$-PE*] and [QC].
    
    \item $\I$ satisfies property [LS] (resp.~[LS*]) if and only if $f_\I$ is left-continuous and $\rho_\I$ satisfies the same property.
    \end{enumerate}
\end{theorem}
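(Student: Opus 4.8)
\textbf{Proof proposal for Theorem \ref{th:convert}.}

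The plan is to split the argument into the ``forward'' direction (properties of $\rho_\I$ pass to $\I$) and the ``backward'' direction (properties of $\I$ pull back to $\rho_\I$), and to treat (i) and (ii) together since [LS] and [LS*] differ from the other properties only in their interaction with the grid function $f_\I$. For the forward direction of (i), I would observe that $\I = f_\I \circ \rho_\I$ with $f_\I$ increasing, and that each of [LI], [$S$-LI], [RA], [$S$-RA], [PE*], [$S$-PE*] is an order-type or equality-type condition: e.g.\ if $L_1 \laweq L_2$ then $\rho_\I(L_1)=\rho_\I(L_2)$ hence $\I(L_1)=\I(L_2)$; if $L_1 \le_{\mathrm{icx}} L_2$ then $\rho_\I(L_1)\le\rho_\I(L_2)$ hence $\I(L_1)\le\I(L_2)$ by monotonicity of $f_\I$; and similarly for [QC], where $\rho_\I(\lambda X + (1-\lambda)Y) \le \max\{\rho_\I(X),\rho_\I(Y)\}$ yields the same inequality after applying the increasing $f_\I$. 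These are all one-line implications once it is noted that increasing maps preserve $\le$, $=$, and $\max$.

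The substance is the backward direction. Here I would use Theorem \ref{thm_uniqueness}: $\rho_\I$ is the \emph{unique} Choquet risk measure representing $\I$, so it is determined intrinsically by $\I$, and I may exploit the explicit reconstruction. The key tool is that a Choquet risk measure satisfies translation invariance and positive homogeneity, so for any event $A$ and $a<b$ in $[0,1]$, $\rho_\I(a + (b-a)\id_A) = a + (b-a)\rho_\I(\id_A)$, and more generally $\rho_\I$ on $\L_1$ is built from values $\rho_\I(\id_A)$. To show, say, that [LI] of $\I$ forces [LI] of $\rho_\I$: given $X \laweq Y$ in $\L_1$, I want $\rho_\I(X)=\rho_\I(Y)$. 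Suppose not, say $\rho_\I(X) < \rho_\I(Y)$; since $f_\I$ is non-constant on $(0,1)$, there is a point $t\in(0,1)$ where $f_\I$ jumps, and using translation invariance / positive homogeneity I can affinely rescale $X$ and $Y$ to $X' = c + dX$, $Y' = c + dY$ (still in $\L_1$, still equal in distribution) so that $\rho_\I(X')$ and $\rho_\I(Y')$ straddle the jump point $t$, giving $\I(X') \ne \I(Y')$, contradicting [LI] of $\I$. The same ``straddle a jump point of $f_\I$'' device handles [$S$-LI] (replace $\laweq$ by $\seq$, noting affine maps preserve $S$-equivalence), [RA] and [$S$-RA] (affine increasing maps preserve $\le_{\mathrm{icx}}$ and $\le_{S\text{-}\mathrm{icx}}$), [PE*] and [$S$-PE*] (the standardized pooled losses $L^{(\ell)}$ live in $\L_1$; affine rescaling of the whole sequence preserves conditional-iid structure and the relevant order), and [QC] (affine maps $c+d(\cdot)$ with $d>0$ commute appropriately with convex combinations, so a counterexample to [QC] for $\rho_\I$ rescales to one for $\I$). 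I would state this rescaling lemma once and invoke it for each property.

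For part (ii): [LS] and [LS*] involve limits, so left-continuity of $f_\I$ enters. Forward: if $\rho_\I$ is [LS] and $f_\I$ is left-continuous, and $X_\ell \to X$ a.s., then $\liminf \rho_\I(X_\ell) \ge \rho_\I(X)$; one must pass this through $f_\I$, using that $f_\I$ is increasing and left-continuous so that $\liminf f_\I(\rho_\I(X_\ell)) \ge f_\I(\liminf \rho_\I(X_\ell)) \ge f_\I(\rho_\I(X))$ --- here left-continuity is exactly what is needed to push $\liminf$ inside $f_\I$ from below. Backward: if $\I$ is [LS], I must deduce both that $f_\I$ is left-continuous and that $\rho_\I$ is [LS]; for left-continuity of $f_\I$, take a sequence of constants $x_\ell \uparrow x$ and constant random variables, so $\I(x_\ell) = f_\I(x_\ell)$ and $\I(x)=f_\I(x)$, and [LS] forces $\liminf f_\I(x_\ell) \ge f_\I(x)$, which with monotonicity gives left-continuity; for [LS] of $\rho_\I$, suppose it fails, producing $X_\ell \to X$ a.s.\ with $\liminf \rho_\I(X_\ell) < \rho_\I(X)$, and again rescale and pick a jump point of $f_\I$ (now known to be left-continuous, so its jumps are ``left-closed'') to turn this into a violation of [LS] for $\I$. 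The analogous argument with $\L^\infty$-convergence gives [LS*].

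\textbf{Main obstacle.} The delicate point is the backward direction, specifically making the rescaling argument fully rigorous: one needs, for \emph{any} prescribed pair of target values $u < v$ in $(0,1)$ straddling a jump of $f_\I$, to choose an affine map $x \mapsto c + dx$ with $d>0$, $c \ge 0$, $c+d \le 1$ carrying the given pair $(\rho_\I(X), \rho_\I(Y))$ (or the sequence of pooled-loss risk values, or the convex-combination configuration) into a pair straddling that jump, while keeping all random variables in $\L_1$ --- and one must check that for each listed property the relevant structure (distributional equality, $S$-equivalence, $\le_{\mathrm{icx}}$, conditional-iid, convex combinations) is genuinely invariant under such affine maps. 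For [PE*]/[$S$-PE*] there is the extra wrinkle that the property quantifies over \emph{all} conditionally iid sequences, so the rescaled sequence must itself be an admissible witness; this is fine because affine images of conditionally iid sequences are conditionally iid, but it should be stated. For [LS]/[LS*] the subtlety is coordinating the known left-continuity of $f_\I$ with the location of its jumps so that the strict inequality $\liminf \rho_\I(X_\ell) < \rho_\I(X)$ survives passage through $f_\I$; I expect this to require choosing the jump point $t$ with $\liminf \rho_\I(X_\ell) < t \le \rho_\I(X)$ and invoking left-continuity precisely at $t$.
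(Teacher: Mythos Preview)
Your proposal is correct and follows essentially the same approach as the paper: the forward direction is immediate from monotonicity of $f_\I$, and the backward direction is exactly the ``straddle a jump point of $f_\I$ via affine rescaling'' device you describe, including the use of constant sequences to extract left-continuity of $f_\I$ in part (ii). The paper handles the constraint of keeping $\alpha X+\beta\in\L_1$ just as you anticipate, by taking $\beta$ close to the jump point so that $\alpha$ can be made small enough to satisfy $\alpha\le 1-\beta$.
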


The ``only if" parts of Theorem \ref{th:convert} are particularly interesting, as they derive properties of the finer mapping $\rho_\I$ from the coarser mapping $\I$.
The intuition behind these results is similar to that of
Theorem \ref{thm_uniqueness}, although detailed analyseseare needed to prove them rigorously. 
 
 In Theorem \ref{th:convert}, we excluded the two properties 
[PE] and [$S$-PE]. This is because the linear transformation method used in the proof arguments is not compatible with the truncated random variable $(L^{(\ell)}-K)_+/(1-K)$ unless $K=0$. 
Nevertheless, in Theorem \ref{th:PE-LI} below, we will show that [PE] and [PE*] are equivalent for $\rho_{\I}$, and hence [PE] of $\I$ also implies the same property of $\rho_{\I}$.  A corresponding result for [$S$-PE] is given in Theorem \ref{th:SR-Choquet}, which requires a continuity condition.  

\begin{remark} Not all properties of $\rho_\I$ can be translated into those of $\I$ (and vice versa). For instance,   if $\rho_\I$ is convex, then $\I$ is quasi-convex (by Theorem \ref{th:convert}, as convexity is stronger than quasi-convexity), but not convex, because $x \mapsto \I(x)$ has jumps on $(0,1)$, as in credit rating practice.
\end{remark}
 
\section{Characterization results under law-invariance}

\label{sec:chara-LI}

In this section,  
we   fully characterize the risk consistency properties for Choquet rating criteria under law invariance. 
We will first present the corresponding characterization
for distortion risk measures, 
and then use Theorem \ref{th:convert} to get the corresponding results for Choquet rating criteria.

The equivalence between risk aversion consistency and quasi-convexity of distortion risk measures
are well-known in the literature (e.g., \citealp{DVGKTV06} and \citealp{WWW20}), and we summarize it below. 

\begin{lemma}
\label{lem:RC-trivial}
 For a distortion risk measure $\rho$ with distortion function $h$, the following are equivalent:
\begin{enumerate}[(i)]
\item   $h$ is concave;
\item   $ \rho(X)\leq \rho(Y)$ for  all  $X,Y\in \mathcal L_1$ with $X\leq_{\rm{icx}}Y$ (i.e., [RA]);
\item   $ \rho(X)\leq \rho(Y)$ for  all  $X,Y\in \mathcal L_1$ with $X\leq_{\rm{cx}}Y$;
\item $\rho$ is convex;
\item $\rho$ is quasi-convex (i.e., [QC]).
\end{enumerate}  
\end{lemma}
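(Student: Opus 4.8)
The plan is to prove the chain of equivalences by a cyclic argument, roughly (i) $\Rightarrow$ (iii) $\Rightarrow$ (ii) $\Rightarrow$ (v) $\Rightarrow$ (iv) $\Rightarrow$ (i), collecting standard facts about distortion risk measures along the way. Several arrows are essentially classical (they appear in \citealp{DVGKTV06} and \citealp{WWW20}), but I would organize the argument so that each implication is short and self-contained.

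First I would recall the Hardy--Littlewood / comonotone representation: for $\rho$ a distortion risk measure with distortion $h$ and $X\in\L_1$, $\rho(X)=\int_0^1 h(\mathbb P(X>x))\,\d x = \int_0^1 \mathrm{VaR}_t(X)\,\d\mu_h(t)$ type formulas, or more directly the well-known fact that $X\mapsto\rho(X)$ preserves $\le_{\rm cx}$ (resp.\ $\le_{\rm icx}$) if and only if $h$ is concave (resp.\ concave with $h(0)=0$, which holds by normalization of a capacity). For (i) $\Rightarrow$ (iii): if $h$ is concave then $\rho$ is a mixture/infimum of $\mathrm{ES}$-type functionals (via the Kusuoka-style representation $\rho(X)=\int_{[0,1]}\mathrm{ES}_p(X)\,\d m(p)$ for a probability measure $m$ determined by $h'$), and each $\mathrm{ES}_p$ is $\le_{\rm cx}$-monotone by Jensen/Fubini, so the mixture is too. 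The implication (iii) $\Rightarrow$ (ii) is immediate since $\le_{\rm icx}$ is a coarser relation than $\le_{\rm cx}$ when combined with monotonicity: more carefully, $X\le_{\rm icx}Y$ together with the fact that $\rho$ is monotone and translation-invariant lets us reduce to the $\le_{\rm cx}$ case, or one simply notes the $\mathrm{ES}_p$ representation already gives $\le_{\rm icx}$-monotonicity directly (ES is increasing and convex-order monotone). And (ii) $\Rightarrow$ (v): quasi-convexity follows because for $X,Y\in\L_1$ and $\lambda\in[0,1]$, the mixture $\lambda X+(1-\lambda)Y$ is dominated in $\le_{\rm icx}$ (in fact in $\le_{\rm cx}$) by a suitable ``comonotone rearrangement'' whose value under $\rho$ equals $\lambda\rho(X^c)+(1-\lambda)\rho(Y^c)$; but the cleanest route is $\le_{\rm cx}$-convexity of $\rho$ implied by (ii)--(iii), giving full convexity (iv), and convexity trivially implies quasi-convexity (v).

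For the remaining arrows I would use the comonotonic additivity and positive homogeneity of Choquet/distortion risk measures. (v) $\Rightarrow$ (i): suppose $h$ is not concave, so there exist $0\le a<b\le 1$ and $\lambda\in(0,1)$ with $h(\lambda a+(1-\lambda)b)<\lambda h(a)+(1-\lambda)h(b)$. Construct two losses $X=\id_{A}$ and $Y=\id_{B}$ with $A\subset B$, $\mathbb P(A)=a$, $\mathbb P(B)=b$, chosen so that $\rho(X)=h(a)$, $\rho(Y)=h(b)$, and $\lambda X+(1-\lambda)Y$ has, after identifying its distribution, $\rho(\lambda X+(1-\lambda)Y)=\lambda h(a)+\int_a^b$-type value that exceeds $\max\{h(a),h(b)\}$ when the failure of concavity is strict enough; a careful but routine choice (this is where I would be most careful, mirroring the standard counterexample) gives a violation of [QC]. (iv) $\Rightarrow$ (i) is the same construction, or one invokes \citet[Theorem 4.94]{FS16}: a Choquet risk measure is convex iff its capacity is submodular, and for a distortion capacity $\nu=h\circ\mathbb P$ on an atomless space submodularity of $\nu$ is equivalent to concavity of $h$. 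Finally (iv) $\Rightarrow$ (v) is trivial.

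The main obstacle I expect is pinning down the precise counterexample in the ``not concave $\Rightarrow$ not quasi-convex (and not convex)'' direction: one must choose the losses so that the convex combination's distribution can be computed exactly and the resulting value of $\rho$ is strictly larger than $\max\{\rho(X),\rho(Y)\}$, which requires exploiting the atomlessness of $(\Omega,\mathcal F,\mathbb P)$ to realize prescribed distributions with a prescribed dependence structure (nesting $A\subset B$). Everything else is bookkeeping with the Kusuoka-type representation and the elementary order relations, and I would cite \citet{DVGKTV06,WWW20,FS16} for the parts that are genuinely standard rather than re-deriving them in full.
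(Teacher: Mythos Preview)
Your overall plan is close to the paper's, which simply cites \cite{DVGKTV06} for (i)$\Rightarrow$(ii), observes (ii)$\Rightarrow$(iii) trivially (since $\le_{\rm cx}$ implies $\le_{\rm icx}$), and defers the remaining equivalences (i)$\Leftrightarrow$(iii)$\Leftrightarrow$(iv)$\Leftrightarrow$(v) to \cite{WWW20}. You invoke the same references, so there is no real divergence in strategy.

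There is, however, a concrete gap in the one place you try to do more than cite. For the contrapositive of (v)$\Rightarrow$(i) you propose $X=\id_A$, $Y=\id_B$ with $A\subset B$. But nested indicators are comonotone, so by comonotonic additivity and positive homogeneity
\[
\rho(\lambda X+(1-\lambda)Y)=\lambda h(a)+(1-\lambda)h(b)\le \max\{h(a),h(b)\}=\max\{\rho(X),\rho(Y)\},
\]
and quasi-convexity is \emph{never} violated by this pair, regardless of how badly concavity of $h$ fails. A working construction must use non-comonotone losses (e.g.\ $A,B$ with $\p(A)=\p(B)$ but $\p(A\cap B)$ strictly smaller, exploiting atomlessness), or---much cleaner---avoid the direct counterexample entirely: use that for Choquet risk measures quasi-convexity and convexity coincide (\citealp{CMMM11}), then (iv)$\Rightarrow$(i) via \citet[Theorem 4.94]{FS16} as you already note. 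A smaller point: your chain has (iii)$\Rightarrow$(ii), which is the nontrivial direction; the trivial direction is (ii)$\Rightarrow$(iii), as the paper states. Your fallback argument for (iii)$\Rightarrow$(ii) via the ES/Kusuoka representation is really (i)$\Rightarrow$(ii), so the cycle still closes, but the write-up should reflect that.
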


We next characterize pooling effect consistency for distortion risk measures.
\begin{theorem}
\label{th:PE-LI}
For a distortion risk measure $\rho$ with distortion function $h$, the following are equivalent:
\begin{enumerate}[(i)]
\item   $h$ is concave on $[0,1)$;  

\item $\rho$ satisfies [PE]; 
\item  $\rho$ satisfies [PE*];  
\item   $\rho(L_1) \ge \rho(L^{(\ell)}) $   for all conditionally iid sequence $L_1,L_2,\dots\in \L_1$ and  all $\ell\geq 1$.  
\end{enumerate}  
\end{theorem}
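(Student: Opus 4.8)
The plan is to establish the cycle of implications (iv) $\Rightarrow$ (iii) $\Rightarrow$ (i) $\Rightarrow$ (ii) $\Rightarrow$ (iv), where (ii) $\Rightarrow$ (iv) $\Rightarrow$ (iii) are trivial (taking $\ell=1$ and then restricting to $K=0$), so the real content is in (iii) $\Rightarrow$ (i) and (i) $\Rightarrow$ (ii). For the implication (i) $\Rightarrow$ (ii), I would combine Lemma \ref{lem:RC-pool} with Lemma \ref{lem:RC-trivial}: concavity of $h$ on the whole of $[0,1]$ gives [RA] via Lemma \ref{lem:RC-trivial}, and [RA] together with Lemma \ref{lem:RC-pool} yields [PE]. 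The subtlety is that we only assume concavity on $[0,1)$, not on $[0,1]$, so $h$ may have a downward jump at $1$ (i.e.\ $h(1^-)<h(1)=1$). I would handle this by checking that the truncation $Y = (L^{(\ell')}-K)_+/(1-K)$ appearing in [PE] takes the value $1$ with probability zero whenever $K<1$ and the sequence lies in $\L_1$ — actually this is not automatic, so instead I would argue that $\rho$ restricted to random variables is unaffected by the value $h(1)$ except on the event $\{Y=1\}$, and on that event monotonicity/comonotonicity still gives the desired inequality; more cleanly, one can use that for $X \le_{\rm icx} Y$ with $X,Y \in \L_1$ one has $\int X \,\d(h\circ\p) \le \int Y\,\d(h\circ\p)$ as long as $h$ is concave on $[0,1)$, because the Choquet integral $\int_0^1 h(\p(X>x))\,\d x$ only ever evaluates $h$ at the point $1$ on a null set of $x$-values when $\p(X=1)=0$, and the boundary case $\p(X=1)>0$ is dominated using monotonicity. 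I would isolate this as the technical heart of the argument.

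\textbf{Main steps.} First, prove the easy implications (ii) $\Rightarrow$ (iv): take $\ell=1$ in [PE] with $K=0$. Then (iv) $\Rightarrow$ (iii): [PE*] with general $\ell \le \ell'$ follows from comparing $L^{(\ell)}$ and $L^{(\ell')}$ — note $L^{(\ell')}$ is itself a pooled average, and one can embed $(L^{(\ell)})$ and $(L^{(\ell')})$ into a common conditionally iid structure, or more directly observe $L^{(\ell)} \geq_{\rm icx} L^{(\ell')}$ from Lemma \ref{lem:RC-pool} with $K=0$ and apply (iv) after identifying $L^{(\ell)}$ with the ``$L_1$'' of a suitable conditionally iid sequence (the averages of an exchangeable sequence in blocks of size $\ell$ are again exchangeable). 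Second, and this is the crux, (iii) $\Rightarrow$ (i): assuming $h$ is \emph{not} concave on $[0,1)$, I would construct a conditionally iid sequence in $\L_1$ — concretely, Bernoulli-type losses $L_\ell = \id_{A_\ell}$ conditionally iid given a mixing variable, or scaled versions thereof supported in $[0,1)$ — for which $\rho(L^{(\ell)}) < \rho(L^{(\ell')})$ for some $\ell<\ell'$, contradicting [PE*]. Failure of concavity on $[0,1)$ means there exist $0 \le a < b < 1$ and $\lambda\in(0,1)$ with $h(\lambda a + (1-\lambda)b) < \lambda h(a) + (1-\lambda)h(b)$; I would translate this, via the known equivalence (from Lemma \ref{lem:RC-trivial}-style reasoning restricted to losses bounded away from $1$) between concavity of $h$ on $[0,1)$ and the inequality $\rho(X)\le\rho(Y)$ for $X\le_{\rm cx}Y$ with $X,Y$ valued in $[0,\delta]$ for $\delta<1$, into a failure of pooling monotonicity, using that averaging a conditionally iid sequence performs exactly a convex-order contraction towards the conditional mean. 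Finally, (i) $\Rightarrow$ (ii) as described above, which closes the cycle and in particular gives (iii) $\Leftrightarrow$ (ii) (so [PE] and [PE*] coincide for distortion risk measures, as announced).

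\textbf{Expected main obstacle.} The genuinely delicate point is the boundary behaviour at $1$: concavity is required only on $[0,1)$, so the proof must be careful that neither the [PE] inequality nor the counterexample construction is derailed by a jump of $h$ at $1$. On the positive side (i) $\Rightarrow$ (ii), the issue is showing the Choquet integral inequality $\int X\,\d(h\circ\p)\le\int Y\,\d(h\circ\p)$ survives when $h$ jumps at $1$; on the negative side (iii) $\Rightarrow$ (i), the issue is ensuring the counterexample losses, and all their pooled averages, stay strictly below $1$ so that only the values of $h$ on $[0,1)$ are probed — this is automatic if each $L_\ell$ is supported in $[0,c]$ for some $c<1$, since then $L^{(\ell)}\le c<1$ too. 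I would therefore restrict the counterexample to uniformly bounded-away-from-one losses from the outset, which makes the restriction ``$h$ concave on $[0,1)$'' exactly the relevant hypothesis and sidesteps the boundary issue cleanly. A secondary, more routine obstacle is the bookkeeping needed to realize ``$(L^{(\ell)})$ as the leading term of a conditionally iid sequence'' for the (iv) $\Rightarrow$ (iii) step; the de Finetti / exchangeability remark in the text (blocks of an exchangeable sequence are exchangeable) is what makes this go through.
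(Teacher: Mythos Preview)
Your plan has two gaps, one structural and one substantive.

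\textbf{The cycle direction.} You initially call (iv)$\Rightarrow$(iii) trivial but then (correctly) note it needs work, proposing disjoint block averages $M_k=(L_{(k-1)\ell+1}+\cdots+L_{k\ell})/\ell$. Applying (iv) to the exchangeable sequence $(M_k)$ gives only $\rho(L^{(\ell)})\ge\rho(L^{(m\ell)})$, i.e.\ a comparison of pool sizes $\ell$ and $m\ell$, not arbitrary $\ell\le\ell'$; the cycle does not close. The easy fix is to run the other way, (i)$\Rightarrow$(ii)$\Rightarrow$(iii)$\Rightarrow$(iv)$\Rightarrow$(i), where (ii)$\Rightarrow$(iii)$\Rightarrow$(iv) are the genuinely trivial restrictions (to $K=0$, then to $\ell=1$), and prove (iv)$\Rightarrow$(i) directly. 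Your Bernoulli-mixture construction already targets $\rho(L_1)$ versus $\rho(L^{(\ell)})$, so this costs nothing.

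\textbf{The boundary at $1$ in (i)$\Rightarrow$(ii).} This is the real gap. You propose to derive [PE] from [RA] via Lemmas~\ref{lem:RC-pool} and~\ref{lem:RC-trivial} and then patch the jump of $h$ at $1$. But [RA] \emph{fails} in general when $h$ is concave only on $[0,1)$: for $h=\id_{\{t=1\}}$ (identically zero on $[0,1)$, hence concave there) one has $\rho=\essinf$, and $X\equiv 1/2$, $Y\sim\mathrm{Bernoulli}(1/2)$ satisfy $X\le_{\rm cx}Y$ yet $\rho(X)=1/2>0=\rho(Y)$. So you cannot route through [RA]. Your diagnosis of \emph{where} $h(1)$ enters the Choquet integral is also off: $h(\p(X>z))=h(1)$ on $\{z:\p(X>z)=1\}=[0,\essinf X)$, which has Lebesgue measure $\essinf X$; this is governed by $\essinf X$, not by $\p(X=1)$, so bounding losses strictly below $1$ does nothing to avoid it. The paper's remedy is the decomposition $h=\lambda\tilde h+(1-\lambda)\id_{\{t=1\}}$ with $\tilde h$ concave on all of $[0,1]$, so that $\rho=\lambda\tilde\rho+(1-\lambda)\essinf$. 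Then $\tilde\rho$ satisfies [RA] (Lemma~\ref{lem:RC-trivial}) and hence [PE] via Lemma~\ref{lem:RC-pool}, while $\essinf((L^{(\ell)}-K)_+/(1-K))$ is independent of $\ell$ because conditionally iid averages share the same essential infimum as $L_1$. This splitting is the idea you are missing.

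A smaller point on (iv)$\Rightarrow$(i): the paper takes $L_1\mid Y\sim\mathrm{Bernoulli}(Y)$ for a three-valued $Y$, obtains $\rho(L_1)=h(\lambda y+(1-\lambda)x)$ and $\rho(L^{(\ell)})\to\lambda h(y)+(1-\lambda)h(x)$ by dominated convergence, but only \emph{at continuity points} $x,y$ of $h$, and then extends to all of $(0,1)$ via monotonicity of $h$ and density of continuity points. Your sketch does not address possible discontinuities of $h$; you will need a similar two-step argument.
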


Different from Lemma \ref{lem:RC-trivial}, the first condition in Theorem \ref{th:PE-LI} states that $h$ is concave on $[0,1)$, but it is not necessarily concave on $[0,1]$. 
This subtle difference is illustrated by the following example. 
\begin{example}
\label{ex:essinf}
An example of $h$ that is concave on $[0,1)$ but not on $[0,1]$ is $h(x)=\id_{\{x=1\}}$ for $x\in [0,1]$.  
The corresponding distortion risk measure $\rho$ is given by
$\rho(X)=\essinf(X)$,  the essential infimum of $X$ under $\p$. 
Since $L^{(\ell)}$ and $L^{(\ell')}$ for any $\ell,  \ell' \in \N $  have the same essential infimum,
items (ii)--(iv) in Theorem \ref{th:PE-LI}  hold. However, any item  in Lemma \ref{lem:RC-trivial} does not hold as $h$ is not concave on $[0,1]$.
\end{example}

The following corollary is immediate from Lemma \ref{lem:RC-trivial} and Theorem \ref{th:PE-LI}.

\begin{corollary}
\label{coro:equiv}
For a distortion risk measure  satisfying [LS], the four properties
[PE], [PE*], [RA] and [QC] are equivalent. 
\end{corollary}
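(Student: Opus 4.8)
The statement to prove is Corollary \ref{coro:equiv}: for a distortion risk measure satisfying [LS], the four properties [PE], [PE*], [RA] and [QC] are equivalent.

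The plan is to assemble this directly from Lemma \ref{lem:RC-trivial} and Theorem \ref{th:PE-LI}, with [LS] as the glue that upgrades ``concave on $[0,1)$'' to ``concave on $[0,1]$''. First I would recall that by Theorem \ref{th:PE-LI}, the properties [PE] and [PE*] are each equivalent to $h$ being concave on $[0,1)$, and by Lemma \ref{lem:RC-trivial}, the properties [RA] and [QC] are each equivalent to $h$ being concave on $[0,1]$. So the entire corollary reduces to the single claim: \emph{for a distortion risk measure $\rho$ satisfying [LS], $h$ is concave on $[0,1)$ if and only if $h$ is concave on $[0,1]$.} The ``if'' direction is trivial, so the content is the ``only if'' direction.

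The key step is therefore to show that [LS] forces left-continuity of $h$ at $1$ (equivalently, continuity of $h$ on all of $[0,1]$, since a concave function on $[0,1)$ is automatically continuous on the open interval $(0,1)$ and right-continuous at $0$). Concavity of $h$ on $[0,1)$ already gives that $h$ has a left limit at $1$, say $h(1^-) = \lim_{x \uparrow 1} h(x) \le h(1) = 1$; combined with monotonicity of $h$ and the fact that concavity on $[0,1)$ plus left-continuity at $1$ yields concavity on $[0,1]$ (the only potential failure of the concavity inequality involving the endpoint $1$ is ruled out once $h(1^-)=h(1)$), it suffices to rule out a jump $h(1^-) < 1$. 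To do this I would take an event sequence $A_\ell$ with $\p(A_\ell) = 1 - 1/\ell \uparrow 1$, nested so that $\id_{A_\ell} \to 1$ almost surely (e.g.\ take $A_\ell$ increasing with $\p(\bigcup_\ell A_\ell)=1$, possible on an atomless space). Then $\rho(\id_{A_\ell}) = h(1-1/\ell) \to h(1^-)$, while $\rho(1) = h(1) = 1$, so [LS] gives $h(1^-) = \liminf_{\ell} \rho(\id_{A_\ell}) \ge \rho(1) = 1$, forcing $h(1^-) = 1 = h(1)$. Hence $h$ is continuous at $1$, and concavity extends to $[0,1]$.

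With that lemma in hand, the corollary is immediate: under [LS], [PE] $\iff$ [PE*] $\iff$ ($h$ concave on $[0,1)$) $\iff$ ($h$ concave on $[0,1]$) $\iff$ [RA] $\iff$ [QC]. The main (and essentially only) obstacle is the endpoint argument above — verifying that [LS], applied to an approximating sequence of indicator functions converging to the constant $1$, kills the potential jump of $h$ at $1$; everything else is direct citation of the two preceding results. One minor point to be careful about is that [LS] as stated is a property of $\I$ or of $\rho$ on its natural domain, and one should apply it to the sequence $\id_{A_\ell}$, which lies in $\mathcal L_1$, so no domain issues arise.
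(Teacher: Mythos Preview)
Your proposal is correct and follows essentially the same approach as the paper: reduce via Lemma \ref{lem:RC-trivial} and Theorem \ref{th:PE-LI} to showing that [LS] forces $h$ to be continuous at $1$, and establish this by applying [LS] to indicators $\id_{A_\ell}$ with $\p(A_\ell)\uparrow 1$ and $\id_{A_\ell}\to 1$ almost surely. The paper carries out the same endpoint argument (using $\id_{\{U\le\lambda_k\}}$ with a uniform $U$) and in fact shows the slightly stronger conclusion that [LS] makes $h$ lower semi-continuous on all of $[0,1]$, but this extra generality is not needed for the corollary.
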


Corollary \ref{coro:equiv} suggests that, for distortion risk measures satisfying [LS],  all risk consistency properties under law invariance are indeed equivalent. 
This is not necessarily true for other classes of risk measures.\footnote{ \cite{MW20} characterized  monetary risk measures that satisfy [RA], and they do not necessarily satisfy [QC]. An example of [RA]$\not\Rightarrow$[QC] is given by the minimum $\min\{\rho_1,\rho_2\}$ of two distortion risk measures $\rho_1,\rho_2$ that are convex (hence $\rho_1,\rho_2$ satisfy both [RA] and [QC]). The  minima of normalized convex risk measures are characterized by \cite{CCMTW22} as  star-shaped risk measures, and they are not necessarily quasi-convex.}


By combining Theorem \ref{th:convert}, Lemma \ref{lem:RC-trivial}, and Corollary \ref{coro:equiv}, we obtain a characterization of risk consistency properties for law-invariant Choquet rating criteria satisfying [LS].

\begin{proposition}\label{prop:LI-rating}
	For a Choquet rating criterion $\I$ that satisfies [LS] and [LI], the following are equivalent:
	\begin{enumerate}[(i)]
	\item $\I$ satisfies [QC]; 
	\item $\I$ satisfies [RA]; 
	\item   $\I$ satisfies  [PE]; 
\item $\rho_\I$ has a concave distortion function. 
\end{enumerate}	
\end{proposition}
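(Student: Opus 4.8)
The plan is to leverage the machinery already assembled in the paper rather than argue from scratch. Proposition \ref{prop:LI-rating} concerns a Choquet rating criterion $\I$ satisfying [LS] and [LI], and we want to show that items (i)--(iv) are equivalent. The natural route is to pass everything through the representing Choquet risk measure $\rho_\I$, which is well-defined and unique by Theorem \ref{thm_uniqueness}, and whose distortion function $h_\I$ is unique by Proposition \ref{prop:LI_SR}(i). Since $\I$ satisfies [LI], $\rho_\I$ is a distortion risk measure, so Lemma \ref{lem:RC-trivial}, Theorem \ref{th:PE-LI}, and Corollary \ref{coro:equiv} all become applicable to $\rho_\I$.

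The first step is to record that, by Theorem \ref{th:convert}(i), $\I$ satisfies [QC] (resp.\ [RA], resp.\ [LI]) if and only if $\rho_\I$ satisfies the same property; also, by Theorem \ref{th:convert}(ii), [LS] of $\I$ implies [LS] of $\rho_\I$ (it additionally forces $f_\I$ to be left-continuous, a fact we will reuse for the [PE] direction). The second step handles the three properties covered directly by Theorem \ref{th:convert}: combining it with Corollary \ref{coro:equiv} applied to $\rho_\I$ yields that $\I$ satisfies [QC] $\iff$ $\rho_\I$ satisfies [QC] $\iff$ $\rho_\I$ has a concave distortion function $\iff$ $\rho_\I$ satisfies [RA] $\iff$ $\I$ satisfies [RA]. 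This already gives (i) $\iff$ (ii) $\iff$ (iv).

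The third step, which I expect to be the only genuine obstacle, is to bring (iii) into the circle, because [PE] is precisely the property that Theorem \ref{th:convert} deliberately omits (the linear-transformation trick there does not handle the truncation $(L^{(\ell)}-K)_+/(1-K)$ for $K>0$). For the implication (iv) $\Rightarrow$ (iii): if $\rho_\I$ has a concave distortion function, then in particular $h_\I$ is concave on $[0,1)$, so by Theorem \ref{th:PE-LI} $\rho_\I$ satisfies [PE]; since $\I = f_\I\circ\rho_\I$ with $f_\I$ increasing, the inequality $\rho_\I((L^{(\ell)}-K)_+/(1-K)) \ge \rho_\I((L^{(\ell')}-K)_+/(1-K))$ is preserved under $f_\I$, giving [PE] for $\I$. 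For the converse (iii) $\Rightarrow$ (iv) (or equivalently $\Rightarrow$ (i)), the argument must show that [PE] of $\I$ forces [PE] (equivalently [PE*], by Theorem \ref{th:PE-LI}) of $\rho_\I$; here one uses the left-continuity of $f_\I$ obtained from [LS], together with a density/approximation argument: for a conditionally iid sequence witnessing a violation of [PE*] for $\rho_\I$, one rescales via translation invariance and positive homogeneity of $\rho_\I$ to push the two values of $\rho_\I(L^{(\ell)})$ and $\rho_\I(L^{(\ell')})$ across a grid point of $f_\I$, producing a violation of [PE*] hence [PE] for $\I$ — mirroring the intuition sketched after Theorem \ref{thm_uniqueness} and the proof method of Theorem \ref{th:convert}. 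Once [PE*] of $\rho_\I$ is established, Theorem \ref{th:PE-LI} upgrades it to concavity of $h_\I$ on $[0,1)$; but [LS] of $\rho_\I$ rules out the pathological case (as in Example \ref{ex:essinf}) and forces concavity on all of $[0,1]$, so (iii) $\Rightarrow$ (iv) closes the loop.

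Assembling these, the cycle (i) $\Rightarrow$ (iv) $\Rightarrow$ (iii) $\Rightarrow$ (iv) $\Rightarrow$ (ii) $\Rightarrow$ (i) is complete, with the [LS] hypothesis entering in exactly two places: to transfer [LS] to $\rho_\I$ and to exclude the $K=1$ boundary pathology that separates Theorem \ref{th:PE-LI}(i) from concavity on the closed interval. The main subtlety to write carefully is the (iii) $\Rightarrow$ (iv) step, since it is the one place where the coarse-to-fine transfer is not already packaged in Theorem \ref{th:convert} and must be done by hand using left-continuity of $f_\I$.
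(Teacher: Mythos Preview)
Your proposal is correct and follows essentially the paper's route of transferring everything to $\rho_\I$ via Theorem \ref{th:convert} and then invoking Lemma \ref{lem:RC-trivial} and Corollary \ref{coro:equiv}. One simplification you miss: for (iii)$\Rightarrow$(iv) there is no need to redo the rescaling argument by hand, because [PE] of $\I$ trivially implies [PE*] of $\I$ (set $K=0$), and [PE*] \emph{is} among the properties covered by Theorem \ref{th:convert}(i); hence $\rho_\I$ satisfies [PE*], and Corollary \ref{coro:equiv} (using [LS] of $\rho_\I$) gives the concave distortion function directly---left-continuity of $f_\I$ plays no role in this step.
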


	
	

Proposition \ref{prop:LI-rating}
has a useful interpretation. 
Assuming [LS] and [LI],
if the Choquet rating criterion agrees with a risk ranking as soon as
all risk-averse  expected utility investors agree with it, 
then this rating criterion must assign a better rating to the senior tranche backed by a larger pool than to the senior tranche backed by a smaller pool, assuming the same quality of underlying assets and attachment point. The reverse direction is also true.  

The standing assumption that $f_{\I}$ in \eqref{eq:p2r} is not a constant on $(0,1)$ is needed for the implication (iii)$\Rightarrow$(iv) in Proposition \ref{prop:LI-rating}, which can be seen from the example discussed in Remark \ref{rem:const}.
Property [LS] is also needed for this implication, otherwise  Example \ref{ex:essinf} yields a counter-example.

\section{Characterization results under multiple scenarios} \label{sec:charac-2}

In this section, we consider Choquet rating criteria and Choquet risk measures under $S$-law invariance,
 and analyze conditions on the $S$-distortion function for them to satisfy the risk consistency properties. 

We follow a route that is similar to Section \ref{sec:chara-LI}, by first addressing Choquet risk measures, and then discussing the corresponding rating criteria.

\subsection{Scenario-based  Choquet risk measures}
Recall from Proposition \ref{prop:LI_SR} that  an $S$-law-invariant Choquet risk measure $\rho$ 
has the form
\begin{equation}
    \label{eq:S-based}
    \rho(X)= \int X\dd (g \circ \mathbb{P}^{S}),~~X\in \L^\infty,
\end{equation}
for an $S$-distortion function $g$ in \eqref{eq_rho_SR}.  
A function $g:\mathbb{R}^s\rightarrow\mathbb{R}$ is called \emph{submodular} if 
\begin{equation}
    g(\mathbf x\wedge \mathbf y) +g(\mathbf x\vee \mathbf y) \le g(\mathbf x ) + g(\mathbf y) \mbox{~~~for all $\mathbf x,\mathbf y\in \R^s$},\label{eq:submodular1}
\end{equation}
where $\vee$ is the componentwise maximum and $\wedge $ is the componentwise minimum.

For a characterization of [$S$-RA] and  [$S$-PE],  it turns out that $g$ should satisfy a suitable continuity condition.
The following notion of continuity is specifically designed for risk measures, as a rating criterion $\I$ always has jumps. 
 
\begin{enumerate}
	\item[{[LC]}] \emph{Lebesgue continuity}: $\lim_{\ell\rightarrow \infty}\rho(X_{\ell})=\rho(X)$ whenever 
    $(X_\ell)_{\ell\in\N} $ is a bounded sequence in $\L^{\infty}$ that converges almost surely to $X$.
\end{enumerate}
Property [LC] is called the Lebesgue property in \cite{FS16}, and it is satisfied by many commonly used risk measures such as the expectation. One can easily verify that [LC] is stronger than [LS*], but it is not comparable with [LS]. 
For $S$-distortion risk measures, [LC] is equivalent to the continuity of the $S$-distortion function, as shown in the next result.  
\begin{lemma}
    \label{lemma:LC}
An $S$-distortion risk measure with $S$-distortion function $g$ satisfies [LC] if and only if $g$ is continuous.
\end{lemma}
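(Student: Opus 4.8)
The plan is to prove both directions via the explicit representation of $S$-distortion risk measures and the characterization of $g$ in terms of $\rho$. For the ``if'' direction, suppose $g$ is continuous. Given a bounded sequence $(X_\ell)_{\ell\in\N}$ in $\L^\infty$ converging almost surely to $X$, I would write $\rho(X_\ell) = \int X_\ell \d(g\circ\p^S)$ and expand the Choquet integral as $\int_0^\infty (g\circ\p^S)(X_\ell>x)\d x + \int_{-\infty}^0 [(g\circ\p^S)(X_\ell>x)-1]\d x$, where here $(g\circ \p^S)(A) := g(\p_1(A),\dots,\p_s(A))$. Since the sequence is uniformly bounded, both integrals are over a fixed finite range. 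For a fixed $x$ that is a continuity point of all the distribution functions $t\mapsto \p_j(X>t)$ (all but countably many $x$), almost sure convergence $X_\ell\to X$ gives $\p_j(X_\ell>x)\to\p_j(X>x)$ for each $j\in[s]$; then continuity of $g$ yields $(g\circ\p^S)(X_\ell>x)\to(g\circ\p^S)(X>x)$. Dominated convergence (the integrand is bounded by $1$ on a fixed compact interval) finishes the argument. I expect this direction to be routine.

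For the ``only if'' direction, suppose $\rho$ satisfies [LC]; I want to show $g$ is continuous. Recall from the remark following Proposition \ref{prop:LI_SR} that $g(\mathbf x) = \rho(\id_{A_{\mathbf x}})$, where $A_{\mathbf x}\in\mathcal F$ satisfies $\p^S(A_{\mathbf x})=\mathbf x$; such sets exist because $\p_1,\dots,\p_s$ are mutually singular and each is atomless. The key step is: given $\mathbf x^{(\ell)}\to\mathbf x$ in $[0,1]^s$, I must produce sets $A_{\mathbf x^{(\ell)}}$ with $\p^S(A_{\mathbf x^{(\ell)}})=\mathbf x^{(\ell)}$ and $\id_{A_{\mathbf x^{(\ell)}}}\to\id_{A_{\mathbf x}}$ almost surely (or at least boundedly in a way that lets me apply [LC]); then $g(\mathbf x^{(\ell)})=\rho(\id_{A_{\mathbf x^{(\ell)}}})\to\rho(\id_{A_{\mathbf x}})=g(\mathbf x)$. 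The construction should exploit mutual singularity: fix a partition $\Omega = \bigsqcup_{j=1}^s \Omega_j$ with $\p_j(\Omega_j)=1$, so that on $\Omega_j$ only the $j$-th coordinate of $\p^S$ matters. On each $\Omega_j$, using atomlessness of $\p_j$ I can choose a nested/monotone family $B_j(t)$, $t\in[0,1]$, with $\p_j(B_j(t))=t$ and $B_j(t)\subseteq B_j(t')$ for $t\le t'$ (e.g.\ via a uniform random variable on $\Omega_j$ under $\p_j$). Setting $A_{\mathbf x}:=\bigsqcup_{j=1}^s B_j(x_j)$ gives $\p^S(A_{\mathbf x})=\mathbf x$, and monotonicity of $t\mapsto B_j(t)$ ensures $\id_{A_{\mathbf x^{(\ell)}}}\to\id_{A_{\mathbf x}}$ pointwise on each $\Omega_j$ (the symmetric difference $B_j(x_j^{(\ell)})\triangle B_j(x_j)$ has $\p$-measure controlled by $|x_j^{(\ell)}-x_j|$, and nestedness upgrades this to pointwise convergence away from a null set), hence $\p$-almost surely on $\Omega$.

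The main obstacle is the last point: converting the $L^1$-type convergence $\p(A_{\mathbf x^{(\ell)}}\triangle A_{\mathbf x})\to 0$ into genuine almost sure convergence of indicators, since [LC] is stated for a.s.\ convergent sequences rather than $L^1$-convergent ones. The monotone construction is exactly what handles this: if $x_j^{(\ell)}\uparrow x_j$ then $B_j(x_j^{(\ell)})\uparrow$ to a set differing from $B_j(x_j)$ by a $\p_j$-null (hence, after intersecting with $\Omega_j$, $\p$-null) set, and similarly for $\downarrow$; a general sequence is sandwiched between its $\liminf$ and $\limsup$ monotone rearrangements. Alternatively — and perhaps more cleanly — I can bypass full a.s.\ convergence by first passing to a subsequence along which $\p(A_{\mathbf x^{(\ell_k)}}\triangle A_{\mathbf x})$ is summable (possible by Borel–Cantelli, giving a.s.\ convergence of that subsequence), conclude $g(\mathbf x^{(\ell_k)})\to g(\mathbf x)$, and then note that since every subsequence of $(\mathbf x^{(\ell)})$ has a further subsequence along which $g$ converges to $g(\mathbf x)$, the full sequence $g(\mathbf x^{(\ell)})$ converges to $g(\mathbf x)$. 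This subsequence argument is the route I would actually write down, as it sidesteps the need to build a globally monotone family and only uses atomlessness of each $\p_j$ together with mutual singularity to realize arbitrary target vectors $\mathbf x^{(\ell)}$.
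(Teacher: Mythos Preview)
Your proposal is correct and matches the paper's argument in both directions. For the ``only if'' direction, the paper uses precisely your first option: it fixes a single random variable $U$ that is uniform on $[0,1]$ under each $\p_j$ (this is exactly your nested family $B_j(t)=\{U\le t\}\cap S_j$) and sets $A_{\mathbf x}=\{U\le \sum_j x_j\id_{S_j}\}$, from which $\id_{A_{\mathbf x^{(\ell)}}}\to\id_{A_{\mathbf x}}$ holds pointwise off the $\p$-null set $\{U=x_j\mbox{ on }S_j\}$. So the monotone construction already delivers genuine almost sure convergence with no further work; your subsequence/Borel--Cantelli route is valid but an unnecessary detour, and your worry that ``converting $L^1$-type convergence into a.s.\ convergence'' is an obstacle is misplaced once the uniform-$U$ construction is in hand.
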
 

With the help of the above lemma and equivalent conditions for submodular functions (Lemma \ref{lemma:submodular} in Appendix \ref{app:L5}), we obtain in the next result a characterization of $S$-distortion risk measures that satisfies one of the risk consistency properties in Section \ref{sec_LI}.
 
\begin{theorem}
    \label{th:SR-Choquet}
    For an $S$-distortion risk measure  $\rho$ with $S$-distortion function $g$, 
 the following conditions (i)--(iii) are equivalent.
    \begin{enumerate}[(i)]
        \item $\rho$ is coherent;
        \item $\rho$ is quasi-convex (i.e. [QC]);
        
        \item  $g$  is componentwise concave and submodular on $[0,1]^s$. 
    \end{enumerate}
          Conditions (i)--(iii) imply
    \begin{enumerate}[(i)]
                
        \item[(iv)] $\rho$ satisfies [$S$-RA];
        \item[(v)] $\rho$ satisfies [$S$-PE];
        \item[(vi)] $\rho$ satisfies [$S$-PE*]. 
    \end{enumerate}
If [LC] holds for $\rho$, then all of (i)--(vi) are equivalent. 
\end{theorem}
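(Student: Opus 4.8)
The plan is to prove Theorem~\ref{th:SR-Choquet} in three stages: first the equivalence of (i)--(iii) among the ``static'' conditions, then the implications (i)--(iii)$\Rightarrow$(iv)$\Rightarrow$(v)$\Rightarrow$(vi), and finally the reverse implications (vi)$\Rightarrow\cdots\Rightarrow$(i) under the additional hypothesis [LC]. For the equivalence (i)$\Leftrightarrow$(ii): coherence always implies quasi-convexity (convexity plus normalization gives [QC] trivially), and the reverse is the content of \citet[Proposition 2.1]{CMMM11}, which I would invoke after noting that an $S$-distortion risk measure is positively homogeneous and translation invariant, so quasi-convexity upgrades to convexity and then to coherence. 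For (i)$\Leftrightarrow$(iii): a Choquet risk measure $\int X\,\dd\nu$ is coherent iff its capacity $\nu$ is submodular (Theorem~4.94 of \cite{FS16}); here $\nu=g\circ\p^S$, and since $\p_1,\dots,\p_s$ are mutually singular and atomless, the range of $\p^S$ is all of $[0,1]^s$, so submodularity of $\nu$ as a set function is equivalent to submodularity of $g$ on $[0,1]^s$ in the sense of \eqref{eq:submodular1}. The extra ingredient is that $g$ must also be componentwise concave: this comes from restricting attention to losses that depend on only one coordinate, i.e.\ losses measurable with respect to one scenario's conditioning, where $\rho$ reduces to an ordinary distortion risk measure with distortion $g(0,\dots,\cdot,\dots,0)$ (up to the other coordinates), and coherence of that one-dimensional slice forces concavity by Lemma~\ref{lem:RC-trivial}; sliding the fixed coordinates and using a monotone-limit argument gives componentwise concavity of $g$ everywhere on $[0,1]^s$. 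Lemma~\ref{lemma:submodular} in Appendix~\ref{app:L5} is presumably the technical tool that packages these equivalent reformulations of submodularity cleanly.

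For (iii)$\Rightarrow$(iv), the key observation is that $X\le_{S\opName{-icx}}Y$ means $X\le_{\rm icx}Y$ under each $\p_j$ separately. I would show that a coherent $S$-distortion risk measure is consistent with $\le_{S\opName{-icx}}$ by exhibiting it as a supremum of expectations over a set of probability measures each of which is a mixture $\sum_j c_j\p_j$ (or more generally dominated in a suitable sense by the $\p_j$'s): the dual representation of a coherent Choquet risk measure is $\rho(X)=\sup_{Q\in\mathcal C}\E^Q[X]$ where $\mathcal C$ is the core of the submodular capacity $g\circ\p^S$, and because $g$ is componentwise increasing with $g(\mathbf 0)=0$, every $Q$ in the core is absolutely continuous with respect to $\sum_j\p_j$ and in fact decomposes scenario-by-scenario as a mixture of measures each absolutely continuous w.r.t.\ some $\p_j$. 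For such $Q$, $X\le_{S\opName{-icx}}Y$ gives $\E^Q[X]\le\E^Q[Y]$, hence $\rho(X)\le\rho(Y)$, which is [$S$-RA] at the level of the risk measure. Then (iv)$\Rightarrow$(v) is the scenario-analogue of the argument behind Lemma~\ref{lem:RC-pool}: for a sequence conditionally iid under each $\p_j$, pooling makes the senior tranche smaller in $\le_{\rm icx}$ under every $\p_j$ (this is exactly Lemma~\ref{lem:RC-pool} applied within each $\p_j$, since conditional iid-ness under $\p_j$ is all that lemma needs), so $\frac{(L^{(\ell)}-K)_+}{1-K}\ge_{S\opName{-icx}}\frac{(L^{(\ell')}-K)_+}{1-K}$, and [$S$-RA] delivers [$S$-PE]. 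The implication (v)$\Rightarrow$(vi) is trivial, being the special case $K=0$.

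The harder direction, and the one I expect to be the main obstacle, is closing the loop under [LC]: showing (vi)$\Rightarrow$(i) (or directly (vi)$\Rightarrow$(iii)). The subtlety mirrors the law-invariant case, where [PE*] alone does not force concavity on $[0,1]$ but only on $[0,1)$ (Example~\ref{ex:essinf}); the role of [LC] (equivalently, continuity of $g$ by Lemma~\ref{lemma:LC}) is precisely to rule out the boundary pathologies like $g=\id_{\{x_1\wedge x_2=1\}}$ that appear in Table~\ref{tab:jump-exs} and satisfy [$S$-PE] without being submodular. The strategy is: assume $\rho$ satisfies [$S$-PE*] and [LC]; first show componentwise concavity of $g$ by restricting to sequences that are conditionally iid under each $\p_j$ but supported (in a distributional sense) on a single scenario, reducing to Theorem~\ref{th:PE-LI} to get concavity of each one-dimensional slice of $g$ on $[0,1)$, then using continuity of $g$ to extend concavity to the closed interval $[0,1]$. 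Next, and this is the genuinely new part, I need to extract submodularity of $g$ from [$S$-PE*]; the idea is to build conditionally-iid-within-each-scenario sequences $(L_\ell)$ whose pooled averages $L^{(\ell)}$ interpolate, via the law of large numbers applied separately under each $\p_j$, between $\p^S$-values $\mathbf x\vee\mathbf y$ and $\mathbf x\wedge\mathbf y$ on one side and $\mathbf x,\mathbf y$ on the other, so that the inequality $\rho(L^{(\ell)})\ge\rho(L^{(\ell')})$ in the limit $\ell'\to\infty$ (using [LC] to pass to the limit, since $L^{(\ell')}$ converges a.s.\ under each $\p_j$ to a conditional mean) becomes exactly the submodularity inequality $g(\mathbf x\wedge\mathbf y)+g(\mathbf x\vee\mathbf y)\le g(\mathbf x)+g(\mathbf y)$ after accounting for componentwise concavity. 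Constructing the right conditionally-iid structure to realize the four corners $\mathbf x\wedge\mathbf y,\mathbf x\vee\mathbf y,\mathbf x,\mathbf y$ simultaneously, and carefully tracking which averages appear as limits of pooled losses, is where the bookkeeping gets delicate, and I suspect this is exactly what the auxiliary Lemma~\ref{lemma:submodular} in Appendix~\ref{app:L5} is there to streamline. Once (iii) is recovered, the equivalences already established close the chain, and in particular (vi) together with [LC] implies all of (i)--(vi).
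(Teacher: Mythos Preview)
Your overall architecture matches the paper's, but two of your steps contain genuine gaps, and the hard direction is vaguer than what the paper actually does.

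\textbf{(i)$\Leftrightarrow$(iii).} Your claim that submodularity of the set function $\nu=g\circ\p^S$ is \emph{equivalent} to lattice submodularity \eqref{eq:submodular1} of $g$ is false. One direction already fails for $s=1$: every real function on $[0,1]$ is trivially lattice-submodular, yet $A\mapsto g(\p(A))$ is a submodular capacity only when $g$ is concave. The correct statement, and the reason Lemma~\ref{lemma:submodular} exists, is that submodularity of $\nu$ is equivalent to condition \eqref{eq:submodular2}, which packages lattice submodularity \emph{and} componentwise concavity together. So the two ingredients in (iii) cannot be decoupled the way you propose; the paper proves (i)$\Leftrightarrow$(iii) in one step by showing \eqref{eq:submodular2} is equivalent to submodularity of $g\circ\p^S$ (using that $\p^S(A\cap B)+\p^S(A\cup B)=\p^S(A)+\p^S(B)$ and that these four vectors can realize any admissible $\mathbf x_1,\mathbf x_2,\mathbf x_3,\mathbf x_4$).

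\textbf{(ii)/(iii)$\Rightarrow$(iv).} Your dual-representation route does not work as stated. Writing $Q=\sum_j Q(S_j)\,Q_j$ with $Q_j\ll\p_j$, you assert that $X\le_{S\opName{-icx}}Y$ gives $\E^Q[X]\le\E^Q[Y]$. But $X\le_{\rm icx}Y$ under $\p_j$ is purely a statement about $\p_j$-laws; it says nothing about $\E^{Q_j}[X]$ versus $\E^{Q_j}[Y]$ for an arbitrary $Q_j\ll\p_j$ (take $X\equiv0$, $Y=\pm1$ equiprobable under $\p_j$, and $Q_j$ concentrated on $\{Y=-1\}$). The paper bypasses this entirely by invoking Theorem~\ref{thm:summarize}(v): Choquet risk measures automatically satisfy [LS*], so [QC]$+$[$S$-LI]$+$[LS*]$\Rightarrow$[$S$-RA].

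\textbf{(vi)$\Rightarrow$(iii) under [LC].} Your two-step plan (first concavity via Theorem~\ref{th:PE-LI} on slices, then submodularity via a LLN construction realizing $\mathbf x,\mathbf y,\mathbf x\vee\mathbf y,\mathbf x\wedge\mathbf y$) is in the right spirit but is not what the paper does. The paper builds a single explicit $\{0,\tfrac14,\tfrac12,\tfrac34,1\}$-valued sequence, conditionally iid on a three-point variable $Y$, designed so that $\rho(L_1)$ and $\lim_{\ell}\rho(L^{(\ell)})=\rho(Y)$ (the limit uses [LC]) are precisely the two sides of inequality \eqref{eq:submodular4} in Lemma~\ref{lemma:submodular}. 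Setting $\delta=0$ in \eqref{eq:submodular4} gives componentwise concavity as a byproduct, and Lemma~\ref{lemma:submodular} then shows \eqref{eq:submodular4} plus upper semi-continuity of $g$ (which follows from [LC] via Lemma~\ref{lemma:LC}) is equivalent to (iii). So concavity and submodularity come out of one construction, not two.
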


 For an $S$-distortion risk measure $\rho$, 
Property [LC] (equivalent to the continuity of $g$) is not dispensable for the equivalence of (i)--(vi) in Theorem \ref{th:SR-Choquet}. 
In the proof,  both upper semi-continuity and lower semi-continuity of $g$ are used. 
Without the continuity assumption,  we have the implications [QC]$\Rightarrow$[$S$-RA]$\Rightarrow$[$S$-PE], but the converse directions do not hold, as demonstrated by the examples in Table \ref{tab:jump-exs}.
We do not know whether [$S$-PE]$\Leftrightarrow$[$S$-PE*] for $S$-distortion risk measures. 


In case $s=1$, Theorem \ref{th:SR-Choquet} 
gives the classic result reported in  Lemma \ref{lem:RC-trivial} that for 
a distortion risk measure, risk aversion consistency is equivalent to coherence. 
In this case, continuity of the distortion function is not needed for the equivalence result, but the situation becomes subtle and complicated as soon as $s\ge 2$, as discussed above.

Building on the results obtained so far, in particular Theorems \ref{th:PE-LI}--\ref{th:SR-Choquet}, we identify an intrinsic parallel between the properties of distortion risk measures and those of $S$-distortion risk measures,    summarized in Figure \ref{fig:summary measure}.

\begin{figure}[t]
\centering
\begin{tikzpicture}[
    node distance = 0.8cm and 1cm,
    node/.style={inner sep=6pt, align=center}, 
    bidir/.style={<->, double, double distance=2pt, >=Implies},
    arrow/.style={->, double, double distance=2pt, >=Implies}
]

\begin{scope}[local bounding box=left]
    \node[node] (H) {$h\in \mathcal D_{\rm cv}$};
    \node[node, right=of H] (QC) {~QC};
    \node[node, right=of QC] (RC) {RC};
    
    \node[node, below=of QC, yshift=-0.35cm] (PES) {PE*};
    \node[node, below=of RC, yshift=-0.4cm] (PE) {PE};
    
    \draw[bidir] (H) -- (QC);
    \draw[bidir] (QC) -- (RC);
    \draw[bidir] (PES) -- (PE);
    \draw[arrow] (PES) -- node[midway, left] {LS} (QC);
    \draw[arrow] (RC.south) -- (PE.north); 
\end{scope}

\begin{scope}[local bounding box=right, xshift=7cm] 
    \node[node] (H2) {$g\in \mathcal D_{\rm ccs}$};
    \node[node, right=of H2] (QC2) {~QC~};
    \node[node, right=of QC2] (SRC) {$S$-RC};
    
    \node[node, below=of QC2, yshift=-0.35cm] (SPES) {$S$-PE*};
    \node[node, below=of SRC, yshift=-0.4cm] (SPE) {$S$-PE};
    
    \draw[bidir] (H2) -- (QC2);
    \draw[arrow] (QC2) -- (SRC);
    \draw[arrow] (SPE) -- (SPES);
    \draw[arrow] (SPES) -- node[midway, left] {LC} (QC2);
    \draw[arrow] (SRC.south) -- (SPE.north); 
\end{scope}

\node[below=5mm of left.south, anchor=north] {(a) Distortion risk measures};
\node[below=5mm of right.south, anchor=north] {(b) $S$-distortion risk measures};
\end{tikzpicture}
\caption{Relations between risk consistency  properties for distortion and $S$-distortion risk measures, where  
$\mathcal D_{\rm cv}$ is the set of concave distortion functions,
and $\mathcal D_{\rm ccs}$ is the set of componentwise concave and submodular $S$-distortion functions}
\label{fig:summary measure}  
\end{figure}
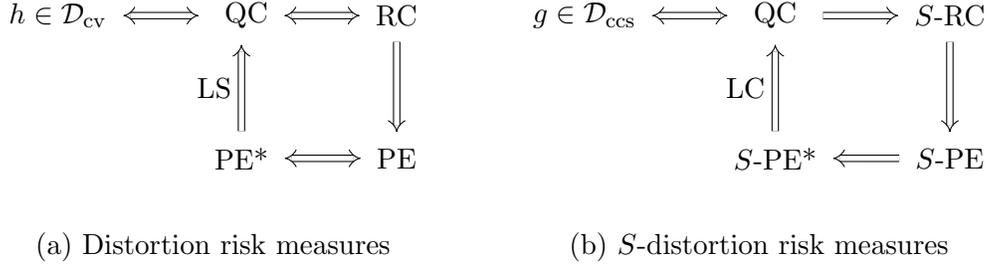 

\subsection{Scenario-based Choquet rating criteria}
By using 
Theorem \ref{th:convert}, Theorem \ref{th:SR-Choquet}, and Lemma \ref{lemma:LC}, we obtain the following characterization of 
Choquet rating criteria that satisfy risk consistency properties, which  
can be seen as a parallel result to Proposition~\ref{prop:LI-rating}.

\begin{proposition}\label{prop:SR-rating}
    For a Choquet rating criterion $\I$ satisfying   [$S$-LI] with $\rho_\I$ satisfying [LC], the following are equivalent:
    \begin{enumerate}[(i)]
        \item $\I$ satisfies [QC]; 
        \item $\I$ satisfies [$S$-RA];
        \item $\I$ satisfies [$S$-PE];
        \item $\rho_\I$ has a componentwise concave and submodular $S$-distortion function.
    \end{enumerate}
\end{proposition}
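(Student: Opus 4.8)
The plan is to reduce the statement about the rating criterion $\I$ to the corresponding statement about the representing Choquet risk measure $\rho_\I$, which has already been settled in Theorem~\ref{th:SR-Choquet}. The bridge is Theorem~\ref{th:convert}, which allows free translation of the properties [QC], [$S$-RA], and [$S$-LI] between $\I$ and $\rho_\I$; the only property requiring extra care is [$S$-PE], which was deliberately excluded from Theorem~\ref{th:convert}.

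First I would observe that, since $\I$ satisfies [$S$-LI], Theorem~\ref{th:convert}(i) gives that $\rho_\I$ also satisfies [$S$-LI], so by Proposition~\ref{prop:LI_SR}(ii) $\rho_\I$ is an $S$-distortion risk measure with some $S$-distortion function $g$ (unique by Theorem~\ref{thm_uniqueness}), and the hypothesis [LC] on $\rho_\I$ is exactly the running continuity assumption needed to invoke the full equivalence (i)--(vi) in Theorem~\ref{th:SR-Choquet}. Next I would handle the three ``easy'' equivalences: (i)$\Leftrightarrow$(iv), i.e. $\I$ satisfies [QC] iff $\rho_\I$ has a componentwise concave and submodular $S$-distortion function, follows by chaining Theorem~\ref{th:convert}(i) (so $\I$ satisfies [QC] iff $\rho_\I$ does) with the equivalence (ii)$\Leftrightarrow$(iii) of Theorem~\ref{th:SR-Choquet}. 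Likewise, $\I$ satisfies [$S$-RA] iff $\rho_\I$ satisfies [$S$-RA] (Theorem~\ref{th:convert}(i)), and under [LC] this is equivalent to (iii) of Theorem~\ref{th:SR-Choquet}; this gives (ii)$\Leftrightarrow$(iv).

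The step that needs genuine work is incorporating [$S$-PE], i.e. showing (iii)$\Leftrightarrow$(iv) (equivalently (iii)$\Leftrightarrow$ any of the others). The forward direction is clean: by Theorem~\ref{th:SR-Choquet}, condition (iv) here — $g$ componentwise concave and submodular — implies $\rho_\I$ satisfies [$S$-PE]; then since $\I = f_\I \circ \rho_\I$ with $f_\I$ increasing, applying $f_\I$ to both sides of the [$S$-PE] inequality for $\rho_\I$ yields [$S$-PE] for $\I$, so (iv)$\Rightarrow$(iii). The converse, (iii)$\Rightarrow$(iv), is the main obstacle: from $\I$ satisfying [$S$-PE] I cannot directly pull back to $\rho_\I$ satisfying [$S$-PE], because the linear-transformation trick underlying Theorem~\ref{th:convert} is incompatible with the truncation $(L^{(\ell)}-K)_+/(1-K)$ for $K>0$. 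Instead I would use the weaker target [$S$-PE*]: [$S$-PE] of $\I$ trivially implies [$S$-PE*] of $\I$, and [$S$-PE*] involves only $\rho_\I(L^{(\ell)})$ versus $\rho_\I(L^{(\ell')})$ without truncation, so Theorem~\ref{th:convert}(i) applies to transfer [$S$-PE*] from $\I$ to $\rho_\I$. Then I invoke the last sentence of Theorem~\ref{th:SR-Choquet}: under [LC], [$S$-PE*] of $\rho_\I$ is equivalent to conditions (i)--(iii) there, in particular to $g$ being componentwise concave and submodular, which is (iv). This closes the cycle (iii)$\Rightarrow$(iv)$\Rightarrow$(i)$\Rightarrow$(ii)$\Rightarrow$(iii) and completes the proof.

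I would phrase the writeup as a short chain of implications rather than proving each pairwise equivalence separately, citing Theorem~\ref{th:convert}, Theorem~\ref{th:SR-Choquet}, and Lemma~\ref{lemma:LC} (for the [LC]$\Leftrightarrow$ continuity of $g$ reformulation) at the appropriate points, and explicitly flagging the passage through [$S$-PE*] as the device that circumvents the exclusion of [$S$-PE] from Theorem~\ref{th:convert}.
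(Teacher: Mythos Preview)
Your proposal is correct and follows essentially the same approach as the paper, which simply cites Theorem~\ref{th:convert}, Theorem~\ref{th:SR-Choquet}, and Lemma~\ref{lemma:LC} without spelling out the details. In particular, your explicit routing of the (iii)$\Rightarrow$(iv) direction through [$S$-PE*] (to sidestep the exclusion of [$S$-PE] from Theorem~\ref{th:convert}) is exactly the intended maneuver and is the only nontrivial point in the argument.
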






Proposition \ref{prop:SR-rating} allows us to construct many Choquet rating criteria $\I$ that satisfy [QC], [$S$-RA], and [$S$-PE], through choosing the $S$-distortion function $g$. 
 The simplest choice is to use the 
additively separable function $g$ given by  
\begin{align}
\label{eq:separable}
    g(x_1,\dots,x_s) = 
    \sum_{j=1}^s a_j h_j(x_j),~~~(x_1,\dots,x_s)\in [0,1]^s, 
\end{align}
 where  for each $j\in [s]$, $a_j\ge 0$ and
$h_j:[0,1]\to \R$ is a distortion function, and $\sum_{j=1}^s a_j  =1$. 
The corresponding $S$-distortion risk measure $\rho_\I$ is, when restricted to $\L_1$,
\begin{equation}\label{eq:decom3}
\rho_\I(X)= \sum_{j=1}^s a_j \int_0^1 h_j(\p_j(X>x))\dd x,~~~~~X\in \L_1. 
\end{equation}

\begin{proposition} \label{prop_RCconcave}
If  
$h_j$  in \eqref{eq:decom3} is concave for each $j\in [s]$, then the Choquet rating criterion $\I$ satisfies [QC], [$S$-RA], and [$S$-PE].    
\end{proposition}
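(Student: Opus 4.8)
The strategy is to reduce Proposition \ref{prop_RCconcave} to Proposition \ref{prop:SR-rating} by verifying the two hypotheses of that proposition---namely that $\rho_\I$ satisfies [LC] and that its $S$-distortion function $g$ is componentwise concave and submodular---for the specific additively separable $g$ in \eqref{eq:separable}. Once both hypotheses are checked, Proposition \ref{prop:SR-rating} immediately gives [QC], [$S$-RA], and [$S$-PE] for $\I$.

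First I would note that $\I$ is by construction a Choquet rating criterion represented by the $S$-distortion risk measure $\rho_\I$ of \eqref{eq:decom3}, whose $S$-distortion function is the $g$ in \eqref{eq:separable}; since each $h_j$ is increasing with $h_j(0)=0$, $h_j(1)=1$ and $\sum_j a_j=1$ with $a_j\ge0$, $g$ is componentwise increasing with $g(\mathbf 0)=0$, $g(\mathbf 1)=1$, so $\I$ indeed satisfies [$S$-LI] by Proposition \ref{prop:LI_SR}(iv). Next I would check [LC] for $\rho_\I$: by Lemma \ref{lemma:LC}, [LC] is equivalent to continuity of $g$, and an additively separable $g(x_1,\dots,x_s)=\sum_{j=1}^s a_j h_j(x_j)$ is continuous provided each $h_j$ is continuous; a concave function on $[0,1]$ is continuous on the open interval $(0,1)$, but to handle the endpoints one uses that $h_j$ is additionally monotone (increasing) with $h_j(0)=0$, $h_j(1)=1$, so concavity forces $h_j(0^+)\le h_j(0)$ is impossible---rather, concavity plus $h_j(0)=0$ and monotonicity give $h_j$ continuous at $0$, and $h_j(1^-)\le h_j(1)=1$ together with monotonicity and concavity gives left-continuity at $1$; hence each $h_j$ is continuous on $[0,1]$ and so is $g$.

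Then I would verify componentwise concavity and submodularity of $g$ on $[0,1]^s$. Componentwise concavity: fixing all coordinates but the $j$-th, $g$ reduces to $x_j\mapsto a_j h_j(x_j) + \text{const}$, which is concave because $a_j\ge0$ and $h_j$ is concave. Submodularity: for $\mathbf x,\mathbf y\in[0,1]^s$ one has the coordinatewise identity $(\mathbf x\wedge\mathbf y)_j + (\mathbf x\vee\mathbf y)_j = x_j + y_j$ for every $j$, hence for an additively separable function
\begin{equation*}
g(\mathbf x\wedge\mathbf y) + g(\mathbf x\vee\mathbf y) = \sum_{j=1}^s a_j\bigl(h_j((\mathbf x\wedge\mathbf y)_j) + h_j((\mathbf x\vee\mathbf y)_j)\bigr) = \sum_{j=1}^s a_j\bigl(h_j(x_j) + h_j(y_j)\bigr) = g(\mathbf x) + g(\mathbf y),
\end{equation*}
so $g$ is in fact modular, and in particular satisfies \eqref{eq:submodular1} with equality. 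Thus $g$ lies in $\mathcal D_{\rm ccs}$, condition (iv) of Proposition \ref{prop:SR-rating} holds, and combined with [LC] just established, Proposition \ref{prop:SR-rating} yields that $\I$ satisfies [QC], [$S$-RA], and [$S$-PE].

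The only mild subtlety---and the step I expect to require the most care---is the continuity of each $h_j$ at the endpoints $0$ and $1$, since concavity alone guarantees continuity only on the open interval; here one must invoke the normalization $h_j(0)=0$, $h_j(1)=1$ together with monotonicity to rule out a downward jump at $0$ or $1$ (an upward jump being impossible by concavity), thereby securing continuity of $g$ and hence [LC]. Everything else is the routine observation that additive separability makes componentwise concavity automatic and makes the submodularity inequality hold with equality.
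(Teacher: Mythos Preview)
Your overall approach matches the paper's: verify that the additively separable $g$ in \eqref{eq:separable} is componentwise concave and submodular (indeed modular), then invoke the machinery of Proposition~\ref{prop:SR-rating}. Your verifications of componentwise concavity and of submodularity (with equality) are correct.

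However, there is a genuine gap in your continuity argument at $0$. Concavity of an increasing distortion function $h_j$ with $h_j(0)=0$ does \emph{not} force continuity at $0$: the function $h_j(t)=\id_{\{t>0\}}$ is concave, increasing, satisfies $h_j(0)=0$ and $h_j(1)=1$, yet has $h_j(0^+)=1\neq 0$. (Concavity permits an upward jump at a left endpoint; your argument for continuity at $1$ is fine, but the analogous reasoning breaks at $0$.) With such $h_j$, $g$ is not continuous, and by Lemma~\ref{lemma:LC} the hypothesis [LC] of Proposition~\ref{prop:SR-rating} fails---so your route through that proposition, as stated, is blocked.

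The remedy is to bypass [LC] entirely. In Theorem~\ref{th:SR-Choquet}, the implication (iii)$\Rightarrow$(i)--(vi) for $\rho_\I$ holds \emph{without} any continuity assumption; [LC] is needed only for the reverse direction (vi)$\Rightarrow$(iii). Thus, once you have shown that $g$ is componentwise concave and submodular, $\rho_\I$ satisfies [QC], [$S$-RA], and [$S$-PE] unconditionally. Since $\I=f_\I\circ\rho_\I$ with $f_\I$ increasing, $\I$ inherits all three properties immediately (this is the trivial direction of Theorem~\ref{th:convert}, and the same monotone-transform argument covers [$S$-PE] even though that property is not listed there). This is how the paper's one-line proof---``directly follows from Proposition~\ref{prop:SR-rating}, because functions of the form \eqref{eq:separable} are submodular''---should be read: only the direction (iv)$\Rightarrow$(i)--(iii) is used, and that direction does not rely on [LC].
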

The proposition directly follows from Proposition \ref{prop:SR-rating}, because   functions of the form \eqref{eq:separable} are submodular.
 In Section \ref{sec:exampleShort}, we give several examples of rating criteria satisfying [$S$-RA] and [$S$-PE], some in the form of \eqref{eq:decom3}.

\subsection{A necessary condition for pooling effect consistency}

Unlike the characterization of [PE] for distortion risk measures in Theorem \ref{th:PE-LI}, 
a full characterization of [$S$-PE]  without assuming a continuous $S$-distortion function seems to be out of reach with the current technique. 
Instead, we provide one necessary condition for [$S$-PE] when $g$ is assumed to be lower semi-continuous only at all  points in $\{0,1\}^s$.
This condition is useful when verifying [$S$-PE] for some of the examples in Section \ref{sec:exampleShort}.

\begin{proposition}
\label{prop:SPE2}
Let $\rho$ be an $S$-distortion risk measure with an $S$-distortion function $g$ that is lower semi-continuous at all points in $\{0,1\}^s$. If [$S$-PE] holds, then 
\begin{equation} \label{eq_specon}
g(x_1,\dots,x_s)\ge 
\int_0^1 g (\id_{\{ x_1>z\}},\dots,\id_{\{ x_s>z\}}) \d z,~~~~(x_1,\dots,x_s)\in (0,1)^s.
\end{equation}
In particular, $g(\mathbf x) \geq \min (\mathbf x)  $  for $ \mathbf x \in (0,1)^s. $
\end{proposition}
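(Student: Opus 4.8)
\textbf{Proof proposal for Proposition \ref{prop:SPE2}.}

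The plan is to test the inequality in [$S$-PE] on a carefully chosen conditionally iid sequence so that, in the limit $\ell'\to\infty$, the right-hand side of \eqref{eq_specon} emerges as the value of $\rho$ on the ``averaged'' loss. Fix $(x_1,\dots,x_s)\in(0,1)^s$. I would build a sequence $(L_\ell)_{\ell\in\N}$ in $\L_1$ that is conditionally iid under each $\p_j$ as follows: let $U$ be a single random variable that is uniform on $(0,1)$ under $\p_j$ for every $j\in[s]$ (such a common mixing variable exists because each $\p_j$ is atomless; one can take $U$ measurable with respect to a suitably rich sub-$\sigma$-field and then perturb it within each $S_j$), and conditionally on $U$ let $L_1,L_2,\dots$ be iid Bernoulli random variables with $\p_j(L_\ell=1\mid U)=$ (a value chosen so that $\p_j(L_\ell=1)=x_j$). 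The cleanest construction is: $L_\ell = \id_{\{V_\ell \le p(U)\}}$ for iid-given-$U$ uniforms $V_\ell$, where the function $p$ and the law of $U$ within each $S_j$ are arranged to hit the target marginals $x_j$ while keeping $U$'s conditional law the same ``shape'' across scenarios. Then $L^{(\ell)} = (L_1+\cdots+L_\ell)/\ell \to p(U)$ a.s. under each $\p_j$ as $\ell\to\infty$, by the conditional strong law of large numbers.

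Next I would evaluate both sides. For the senior tranche with attachment $K$: since $\I$ is an increasing transform $f_\I$ of $\rho_\I$, the inequality $\I((L^{(\ell)}-K)_+/(1-K))\ge \I((L^{(\ell')}-K)_+/(1-K))$ for all rating criteria representing $\rho$ (including those with arbitrarily fine grids, using the standing assumption that $f$ is nonconstant on $(0,1)$, plus translation invariance and positive homogeneity as in the proof of Theorem \ref{thm_uniqueness}) forces $\rho((L^{(\ell)}-K)_+/(1-K))\ge \rho((L^{(\ell')}-K)_+/(1-K))$; specializing to $K=0$ and $\ell=1$ gives $\rho(L_1)\ge \rho(L^{(\ell')})$ for all $\ell'$. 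The left side is $\rho(L_1) = \int L_1 \d(g\circ\p^S)$; since $L_1=\id_A$ with $\p^S(A)=(x_1,\dots,x_s)$ up to the construction, this equals $g(x_1,\dots,x_s)$ (using the uniqueness/representation formula $g(\mathbf x)=\rho(\id_{A_{\mathbf x}})$ noted after Proposition \ref{prop:LI_SR}). For the right side, I would pass to the limit: $\rho(L^{(\ell')}) = \int_0^1 g\big(\p_1(L^{(\ell')}>z),\dots,\p_s(L^{(\ell')}>z)\big)\d z$, and as $\ell'\to\infty$ we have $\p_j(L^{(\ell')}>z)\to \p_j(p(U)>z)$ at all continuity points $z$, which for $z\in(0,1)$ equals $\p_j(U \le \text{(threshold)})$, and the construction is arranged so that this is exactly $\id_{\{x_j>z\}}$ in the relevant Bernoulli-on-$\{0,1\}$ degenerate limit — more precisely, one chooses $p(U)$ to be $\{0,1\}$-valued with $\p_j(p(U)=1)=x_j$, so $L^{(\ell')}\to \id_B$ a.s. under each $\p_j$ with $\p^S(B)=\mathbf x$, and then $\liminf_{\ell'} \rho(L^{(\ell')}) \ge \int_0^1 g(\id_{\{x_1>z\}},\dots,\id_{\{x_s>z\}})\d z$ by lower semi-continuity of $g$ at the points of $\{0,1\}^s$ together with Fatou's lemma on the integral over $z\in(0,1)$. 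Combining, $g(\mathbf x)=\rho(L_1)\ge \liminf_{\ell'}\rho(L^{(\ell')})\ge \int_0^1 g(\id_{\{x_1>z\}},\dots)\d z$, which is \eqref{eq_specon}. The ``in particular'' claim then follows by bounding the integrand: $g(\id_{\{x_1>z\}},\dots,\id_{\{x_s>z\}}) \ge g(\mathbf 0)=0$ always, and $g(\id_{\{x_1>z\}},\dots)\ge g(\mathbf 1\cdot \id_{\{z<\min\mathbf x\}})$ componentwise-monotonicity-style — actually the sharp route is: the integrand is $\ge \id_{\{z<\min(\mathbf x)\}}\, g(\mathbf 1) = \id_{\{z<\min(\mathbf x)\}}$ because when $z<\min_j x_j$ all indicators are $1$ and $g(\mathbf 1)=1$, while elsewhere the integrand is $\ge 0$; integrating in $z$ gives $g(\mathbf x)\ge \min(\mathbf x)$.

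The main obstacle I anticipate is the construction of a single mixing structure $U$ that is simultaneously valid (conditionally iid, correct marginals $x_j$) under every $\p_j$, since the $\p_j$ are mutually singular — the sequence must be conditionally iid \emph{under each} $\p_j$ separately, not jointly, so one genuinely needs to patch together scenario-by-scenario constructions on the disjoint events $S_j$, and then check that the resulting global sequence still converges a.s. under each $\p_j$ to a $\{0,1\}$-valued limit with $\p_j(\text{limit}=1)=x_j$. The second delicate point is justifying the interchange of $\liminf$ and the integral $\int_0^1 \cdot\, \d z$: $g$ need not be continuous, so I must rely only on lower semi-continuity at $\{0,1\}^s$ (which is exactly the hypothesis), use $\p_j(L^{(\ell')}>z)\to \id_{\{x_j>z\}}$ for a.e. $z$, apply lower semi-continuity of $g$ at those limit points, and then invoke Fatou; the hypothesis is precisely tailored so that the only limit points of the arguments that matter are corners of the cube, where lsc is assumed. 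A minor bookkeeping step is the reduction from ``[$S$-PE] for $\I$'' to ``the pooling inequality for $\rho_\I$'', which is the same linear-transformation argument already used for Theorems \ref{thm_uniqueness} and \ref{th:convert}, applied with $K=0$.
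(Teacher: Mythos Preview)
Your overall strategy---test [$S$-PE] on a Bernoulli-type sequence, send $\ell'\to\infty$, and combine lower semi-continuity of $g$ at the corners of $[0,1]^s$ with Fatou---is exactly the paper's approach, and your derivation of the ``in particular'' claim is correct. However, the concrete construction you settle on contains a real error. When you ``more precisely'' take $p(U)$ to be $\{0,1\}$-valued with $\p_j(p(U)=1)=x_j$, the sequence degenerates: with $L_\ell=\id_{\{V_\ell\le p(U)\}}$ and $p(U)\in\{0,1\}$, you get $L_\ell=p(U)$ for every $\ell$ (since $V_\ell\le 0$ never and $V_\ell\le 1$ always). Hence $L^{(\ell')}=p(U)=\id_B$ for all $\ell'$, and [$S$-PE] yields only the trivial $\rho(L_1)\ge\rho(L_1)$. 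In particular $\p_j(L^{(\ell')}>z)=\p_j(B)=x_j$ for all $z\in(0,1)$, not $\id_{\{x_j>z\}}$ as you claim, so the right-hand side of \eqref{eq_specon} never appears.

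The fix, which is what the paper does, is to drop the extra mixing variable $U$ entirely: take $(L_\ell)$ to be iid $\mathrm{Bernoulli}(x_j)$ under each $\p_j$ (equivalently, conditionally iid given the scenario $\sigma$-field generated by $S$ alone). Then $\rho(L_1)=g(\mathbf x)$, and by the strong law $L^{(\ell)}\to x_j$ $\p_j$-a.s.---a constant in $(0,1)$, \emph{not} an indicator---so that $\p_j(L^{(\ell)}>z)\to\id_{\{x_j>z\}}$ for a.e.\ $z$. Your Fatou plus lower-semi-continuity step then goes through verbatim. Two smaller remarks: the proposition is stated for $\rho$ directly, so the reduction from $\I$ to $\rho_\I$ via Theorems~\ref{thm_uniqueness}--\ref{th:convert} is unnecessary; and your anticipated ``main obstacle'' about patching a common $U$ across scenarios disappears once you see that no $U$ is needed.
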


The lower semi-continuity of $g$ at all points in $\{0,1\}^s$ is essential to the conclusion in Proposition \ref{prop:SPE2}, as illustrated by the following example.
\begin{example}
    Let $s=2$ and $g(x_1,x_2)=\id_{\{x_1=1\}}\id_{\{x_2=1\}}$.
    This function $g$ is not lower semi-continuous at $(1,1)$, and it satisfies [$S$-PE], as presented in Table \ref{tab:jump-exs}. 
    Clearly, 
    $g(x,\dots,x)\ge x$ does not hold for $x\in (0,1)$, violating \eqref{eq_specon}. 
\end{example} 

\subsection{Examples of scenario-based risk measures}\label{sec:exampleShort}

Some examples of scenario-based risk measures are collected in Table \ref{tab:examplesSR}, and they represent different rating criteria. Some of these examples appear in \cite{GKWW25}, and our focus here is on their risk consistency properties. 
Among the examples in 
Table \ref{tab:examplesSR},  three useful examples of Choquet rating criteria---Average EL, Average ES, and Average MAXVAR\footnote{The name MAXVAR comes from \cite{cherny2009new}, which is the special case of our Average MAXVAR when $s=1$.}--- satisfy [QC], [$S$-RA], and [$S$-PE] by Proposition \ref{prop_RCconcave}. They belong to the class \eqref{eq:decom3} with  additively separable $S$-distortion functions.
Two other examples of Choquet risk measures---Average VaR and Max VaR---do not satisfy [$S$-PE] by Proposition \ref{prop:SPE2} and, consequently, do not satisfy [$S$-RA]. 
For a comparison in the next section, we also include the risk measure $\rho_\I$ that generates the scenario-based   PD criterion, called the Average PD, which is outside the Choquet class. Among the six examples, each of Average MAXVAR, Average ES, Average VaR and Max VaR forms a one-parameter family. 

\begin{table}[t]
\small
\renewcommand{\arraystretch}{1.5}
  \centering
  \caption{Examples of scenario-based risk measures, where $\gamma \in (0,1]$, $p\in [0,1)$, $a_1,\dots,a_s$ are nonnegative weights that sum to $1$,   and $\mathrm{VaR}_q(X|S_j)$ denotes the 
conditional $q$-quantile function of $X$ given $S_j$ for $j\in[s]$ and $q\in [0,1)$ 
}
    \begin{tabular}{c|c|c|c|c}
    Name & $\rho_\I(X)$ & $ g(x_1,\dots,x_s)$   & [$S$-RA] & [$S$-PE] \\
    \hline
    Average EL & $\sum_{j=1}^s a_j\E[X|S_j]$ & $\sum_{j=1}^s a_j x_j$ & $\checkmark$ & $\checkmark$ \\
    Average ES &  $\sum_{j=1}^s \frac{a_j}{1-p}\int_p^1 \mathrm{VaR}_q(X|S_j)\dd q$ & $\sum_{j=1}^s \frac{a_j}{1-p}  (x_j\wedge (1-p))$ & $\checkmark$  & $\checkmark$\\
    Average MAXVAR & $\sum_{j=1}^s a_j\int_0^1 (\p(X>x|S_j))^\gamma \dd x$ & $ \sum_{j=1}^s a_j  x_j^\gamma$ & $\checkmark$ & $\checkmark$ \\
    \hline
    Average VaR & $\sum_{j=1}^s a_j \mathrm{VaR}_p(X|S_j)$ & $\sum_{j=1}^s a_j  \id_{\{x_j>1-p\}}$ & $\times$ & $\times$ \\
    Max  VaR & $\bigvee_{j=1}^s \VaR_p(X|S_j)$ & $\bigvee_{j=1}^s  \id_{\{x_j>1-p\}}$ & $\times$ & $\times$ \\
    Average PD & $\sum_{j=1}^s a_j \p(X>0|S_j)$ &  not Choquet    & $\times$ & $\times$\\
    \end{tabular}%
  \label{tab:examplesSR}%
\end{table}%

\section{Case studies}
\label{sec:case}
This section demonstrates pooling effect for different rating criteria by using two standard financial products: collateralized loan obligations (CLOs) and catastrophe (CAT)
bonds. For CLOs, we compare scenario‐based rating criteria; for CAT bonds, whose losses are largely uncorrelated with macroeconomic performance, we compare law‐invariant rating criteria. We use both the risk measures and the corresponding credit ratings for demonstration.

\subsection{Collateralized loan obligations}\label{sec:CLO}

We give a numerical illustration using CLOs, which accounted for an \$82 billion asset class within the broader \$388 billion US asset backed securities in 2024.\footnote{The numbers are from SIFMA report of US Asset Backed Securities Statistics. Available at https://www.sifma.org/resources/research/statistics/us-asset-backed-securities-statistics/ (Accessed May 2025).} 
Assume for a CLO product, the underlying asset pool consists of identical senior unsecured corporate loans. Each loan has an equal face value and a five-year maturity, with all reference entities rated Ba3 using Moody's rating symbols. 
Assume that $Z>0$ is a random variable representing the common risk factor. Conditional on $Z=z$,  the random loss $L_\ell$ for each loan follows a Beta distribution $\mathrm{Beta}(z,1)$.
This model admits a convenient representation $L_\ell=U_\ell^{1/Z}$ where $U_\ell$ is uniformly distributed on $[0,1]$ and independent of $Z$. Note that a larger value of $z$ makes $L_\ell$ strictly riskier.

To accommodate different scenarios, we assume that the loan's credit quality would be improved with 50\% probability under an optimistic scenario, and be worsen with 50\% probability under a pessimistic scenario. Correspondingly, we assume  that $Z$ follows a uniform distribution on $[0.007, 0.009]$ and on $[ 0.1, 0.15]$, respectively, in the two scenarios. Since $\E[L_\ell|Z=z] = z/(1+z)$, the ranges of the expected loss (conditionally on $Z$)  for $L_\ell$ are $[0.0070
,0.0089]$ in the optimistic scenario and $[0.091,0.130]$ in the pessimistic scenario. These ranges align with the expected loss values associated with Moody’s Baa2 and B2 rating categories, respectively, according to Moody’s idealized default and loss rate table.\footnote{Moody's ``Rating Symbols and Definitions'' document (\texttt{https://ratings.moodys.com/rmc-documents/53954}, accessed March 2025) includes a link on page 41 to download the table of Moody’s idealized default and loss rates.} 

 \begin{figure*}[t]
        \centering
      \begin{subfigure}[b]{0.475\textwidth}
            \centering
            \includegraphics[width=\textwidth]{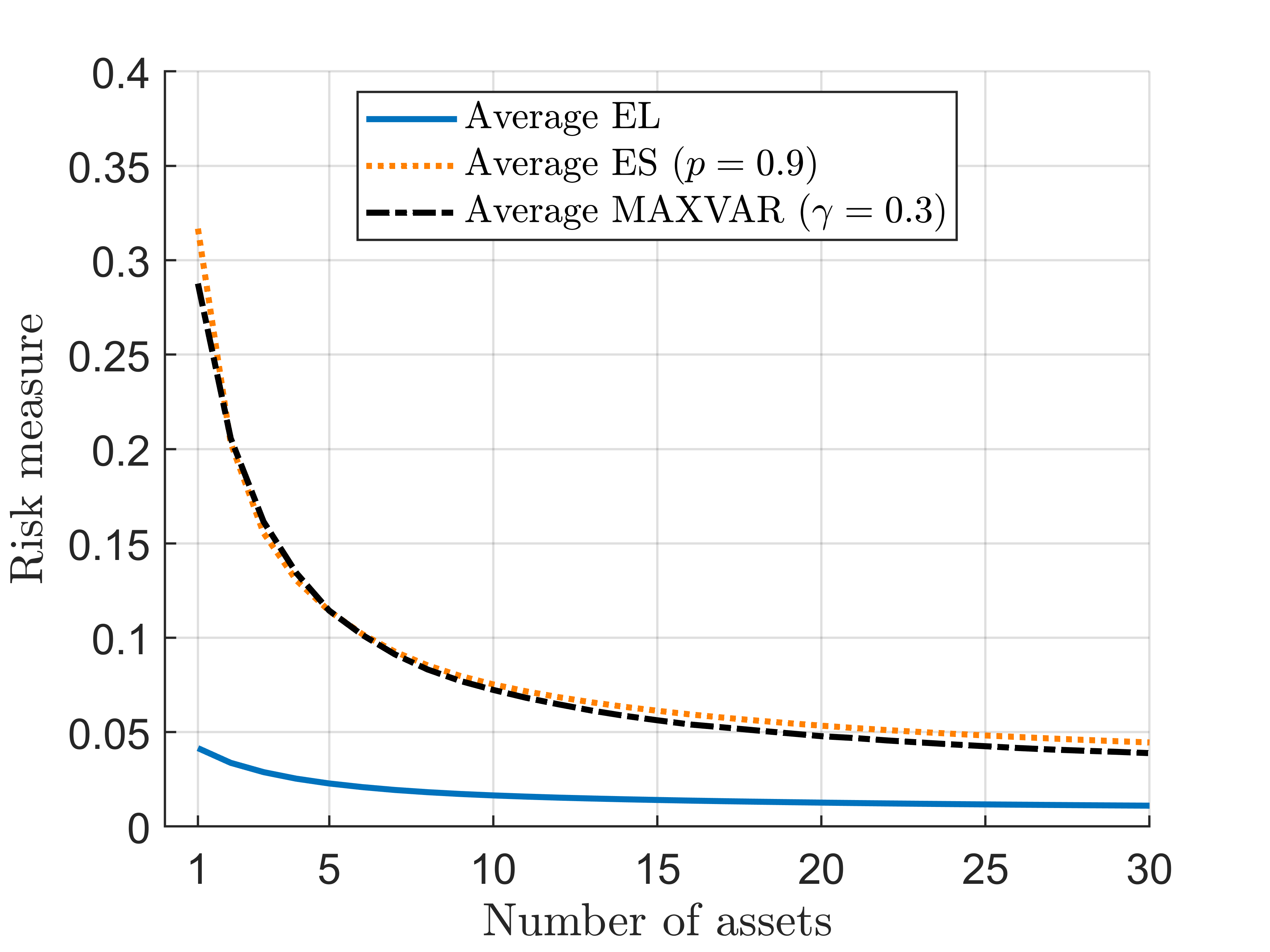}
            \caption[]{}
        \end{subfigure}
        \hfill
        \begin{subfigure}[b]{0.475\textwidth}
            \centering
            \includegraphics[width=\textwidth]{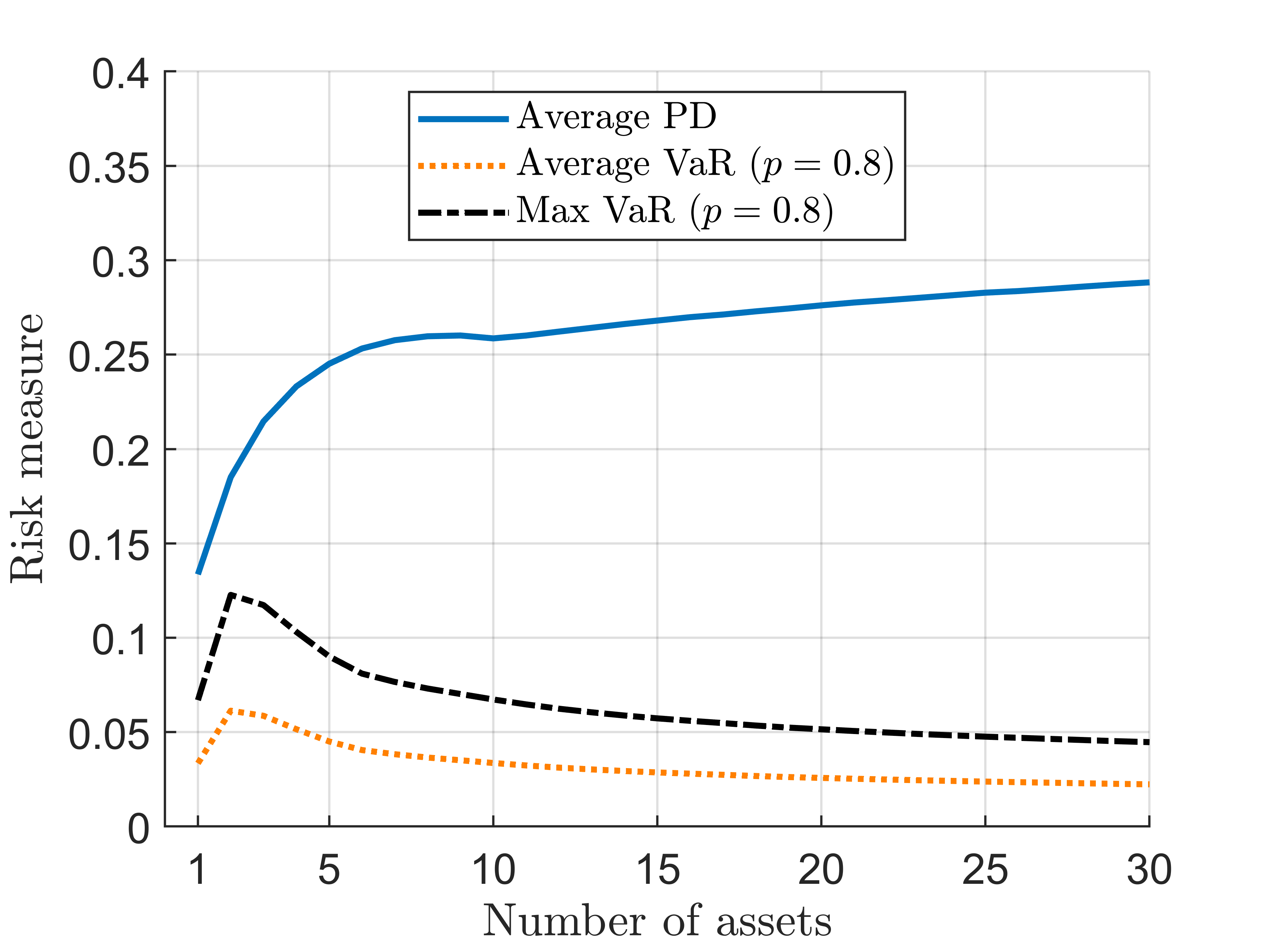}
            \caption[]%
            {}
        \end{subfigure}
        \vskip\baselineskip
        \begin{subfigure}[b]{0.475\textwidth}
            \centering
            \includegraphics[width=\textwidth]{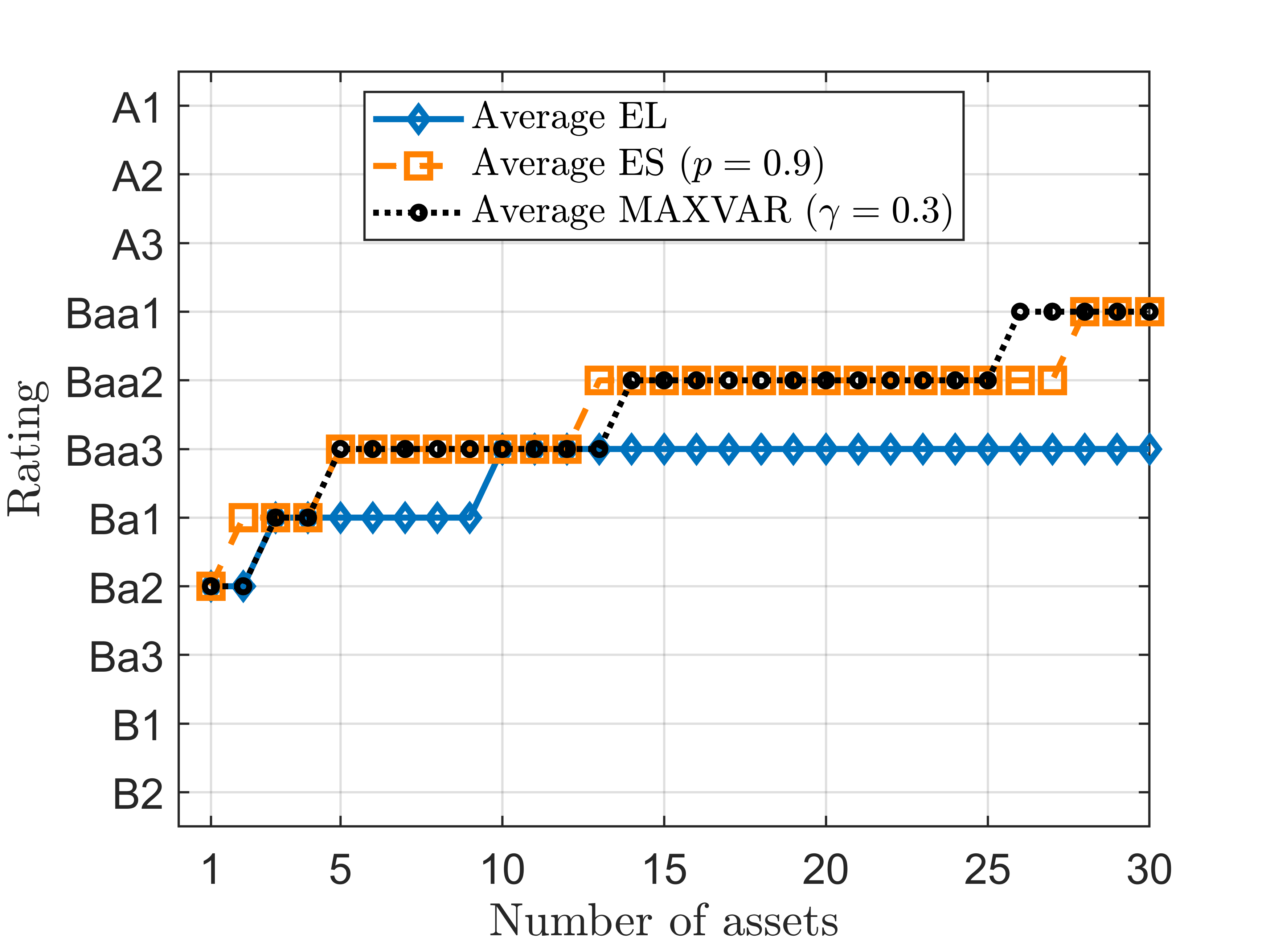}
            \caption[]%
            {}
            \label{fig:mean and std of net34}
        \end{subfigure}
        \hfill
        \begin{subfigure}[b]{0.475\textwidth}
            \centering
            \includegraphics[width=\textwidth]{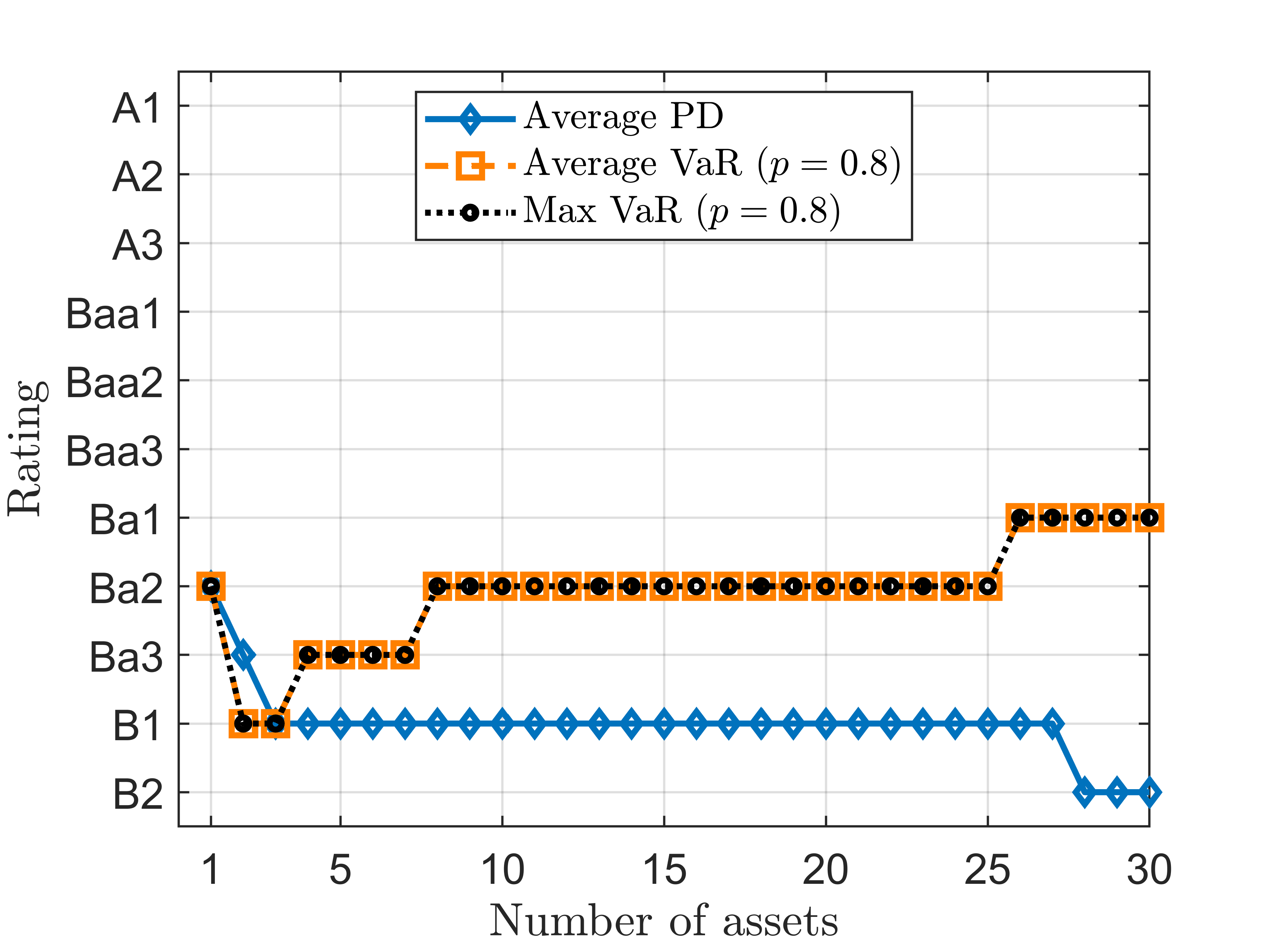}
            \caption[]%
            {}
            \label{fig:mean and std of net44}
        \end{subfigure}

        \caption[]
        {$S$-pooling effect on CLO tranche ratings as loans are added sequentially}
        \label{fig_measures2}
    \end{figure*}

Figure \ref{fig_measures2} illustrates the $S$-pooling effect of the senior tranche with $K=0.1$.\footnote{\cite{GN23} reported that the equity tranche comprises, on average, 10.3\% of the pool principles for their CLO sample spanning January to August 2020.} Panel (a) displays the risk measure values of Average EL, Average ES, and Average MAXVAR introduced in Table \ref{tab:examplesSR}. For Average ES, we set $p = 0.9$. For Average MAXVAR, we set $\gamma = 0.3$. 
As anticipated by Theorem \ref{th:SR-Choquet}, panel (a) shows a declining trend as the number of bonds increases. Furthermore, Average ES and Average MAXVAR display greater sensitivity to $S$-pooling effects compared to Average EL, because the expected value is linear and ignores diversification effect.
Panel (b) presents the risk measure values of Average PD, Average VaR and Max VaR defined in Table \ref{tab:examplesSR}. For both Average VaR and Max VaR, we set $p = 0.8$. We observe that the values of these risk measures do not monotonically decrease with the number of assets, indicating that a larger number of conditionally iid assets may decrease the assigned rating, arguably an undesirable situation.

To further clarify these distinction, panels (c) and (d) in Figure \ref{fig_measures2} present the results in terms of the ratings. For the Average EL criterion, we use Moody’s idealized loss rate table for the corresponding thresholds. For the other criteria, we calibrate the thresholds so that, under each rating criterion, the senior tranche receives a Ba2 rating when there is only one asset in the pool. Specifically, we calculate the risk measure values under each criterion in this single asset case, and then scale the EL  thresholds by the ratio of the given criterion's value to the EL value. The resulting calibrated thresholds for all criteria are provided in Table \ref{tbl:mapping_clo} in Appendix \ref{appx:mapping}.
As anticipated by Propositions \ref{prop:SR-rating}--\ref{prop_RCconcave}, the ratings based on Average EL, Average ES, and Average MAXVAR are all increasing. However, the ratings based on Average EL tend to lag behind those based on Average ES and Average MAXVAR. On the other hand, the ratings based on Average PD, Average VaR, and Max VaR are either non-monotonic or exhibit a decreasing pattern. Thus, Average ES and Average MAXVAR may serve as more suitable scenario-based criteria for capturing the preferences of risk-averse investors among all candidates.
 

\subsection{Catastrophe bonds}
Next, we present an empirical illustration based on CAT bonds, which have emerged as effective alternatives to traditional reinsurance for
risk mitigation in recent years. The issurance of CAT bonds reaches a record high at \$17.7 billion in 2024, and the outstanding market is near \$50 billion.\footnote{See the report at https://www.artemis.bm/news/record-cat-bond-issuance-of-17-7bn-in-2024-takes-outstanding-market-near-50bn-report (accessed May 2025).}
In particular, we consider lightning-specific indemnity CAT bonds, which are structured based on the aggregate lightning-related losses in one or more US states over a one-year period. The capital raised through these bonds is held in reserve to cover potential claims. If the contractually defined trigger—insured losses from lightning events exceeding a specified threshold—is met, the insurer can access the funds to offset payouts to policyholders. The bond's payout is capped and fully exhausted once losses surpass a predetermined upper limit.

For analysis, we use state-level lightning loss data in the US from 1960 to 2012, obtained from the Spatial Hazard Events and Losses Database for the United States (SHELDUS), maintained by the Center for Emergency Management and Homeland Security at the Arizona State University. The dataset includes direct property damage estimates associated with lightning events. To enable meaningful comparisons over time, all loss values are inflation-adjusted using the Consumer Price Index, with 2012 as the base year.

We select five states---Kansas, Michigan, Indiana, Minnesota, and Kentucky---based on the absence of statistically significant pairwise correlations in their annual lightning loss data, as measured by both Spearman and Pearson correlation coefficients. We assume that the losses of each states are  independent, as a reasonable approximation.\footnote{A similar assumption of independence across US states is made in a recent actuarial study on wildfire losses; see \cite{LS24}.}  To model the marginal loss distributions, we rescale the data to units of millions of dollars and apply a small shift of 0.01 (equivalent to \$10{,}000) to ensure all values are strictly positive and to improve distributional fit. For all five states, both the Kolmogorov–Smirnov test and the Anderson–Darling  test fail to reject the null hypothesis that the data follow a lognormal distribution. We also verified that the logarithms of the rescaled data exhibit no statistically significant pairwise correlations.

\begin{table}[t]
  \centering
  \caption{Estimated parameters of the two-sided truncated Lognormal distribution and risk measures by state}
    \begin{tabular}{l|c|c|c|c|c|c|c|c|c}
    state & \tabincell{c}{no.\\obs.} & $\mu$    & $\sigma$ & \tabincell{c}{attach.\\point} & \tabincell{c}{detach.\\point} & PD    & EL    & \tabincell{c}{MAXVAR\\($\gamma=0.8$)}  & \tabincell{c}{ES\\($p=0.9$)}  \\
    \hline
    Kansas & 51    & -0.69  & 1.03  & 1.88  & 7.42  & 0.10  & 0.025  & 0.2499  & 0.3047  \\
    Michigan & 51    & -0.51  & 1.48  & 4.03  & 24.89  & 0.10  & 0.025  & 0.2496  & 0.3092  \\
    Indiana & 51    & -1.02  & 1.67  & 3.07  & 22.44  & 0.10  & 0.025  & 0.2499  & 0.3109  \\
    Minnesota & 50    & -1.26  & 1.60  & 2.21  & 15.17  & 0.10  & 0.025  & 0.2498  & 0.3103  \\
    Kentucky & 47    & -2.04  & 1.65  & 1.07  & 7.65  & 0.10  & 0.025  & 0.2500  & 0.3107  \\
    \end{tabular}%
  \label{tab_cat}%
\end{table}%
To construct the insurance payoff structure, the attachment point is set at the 90th percentile of the loss distribution for each selected state, which means the probability of first dollar loss (PD) is 0.1. Given that catastrophe bonds are typically rated as speculative-grade securities, such as B grade, we calibrate the upper limit (detachment point) such that the bond's expected loss rate (EL)  is 0.025,\footnote{Swiss Re maintains a B-rated composite for catastrophe bonds with expected losses ranging from 181 to 375 basis points when an external rating is not available; see \cite{S24}. We take the geometric mean of these two values, which is approximately 0.025.} after normalizing the loss to the $[0,1]$ interval by using the range between the attachment point and the detachment point. Since lightning-related losses are largely uncorrelated with financial market movements or macroeconomic performance, we assume that their distribution remains invariant across different economic scenarios. Table \ref{tab_cat} reports the number of observations, the estimated parameters $\mu$ and $\sigma$ of the lognormal distribution, along with the attachment and detachment points for two-side truncations, and the corresponding values of PD, EL, MAXVAR ($\gamma = 0.8$), and ES ($p=0.9$) for each state. While PD and EL are predetermined to be  identical across the five states, the MAXVAR and ES values are also numerically close, indicating that the risk profiles of different states are highly similar after normalization.

 \begin{figure*}[t]
        \centering
        \begin{subfigure}[b]{0.475\textwidth}
            \centering
            \includegraphics[width=\textwidth]{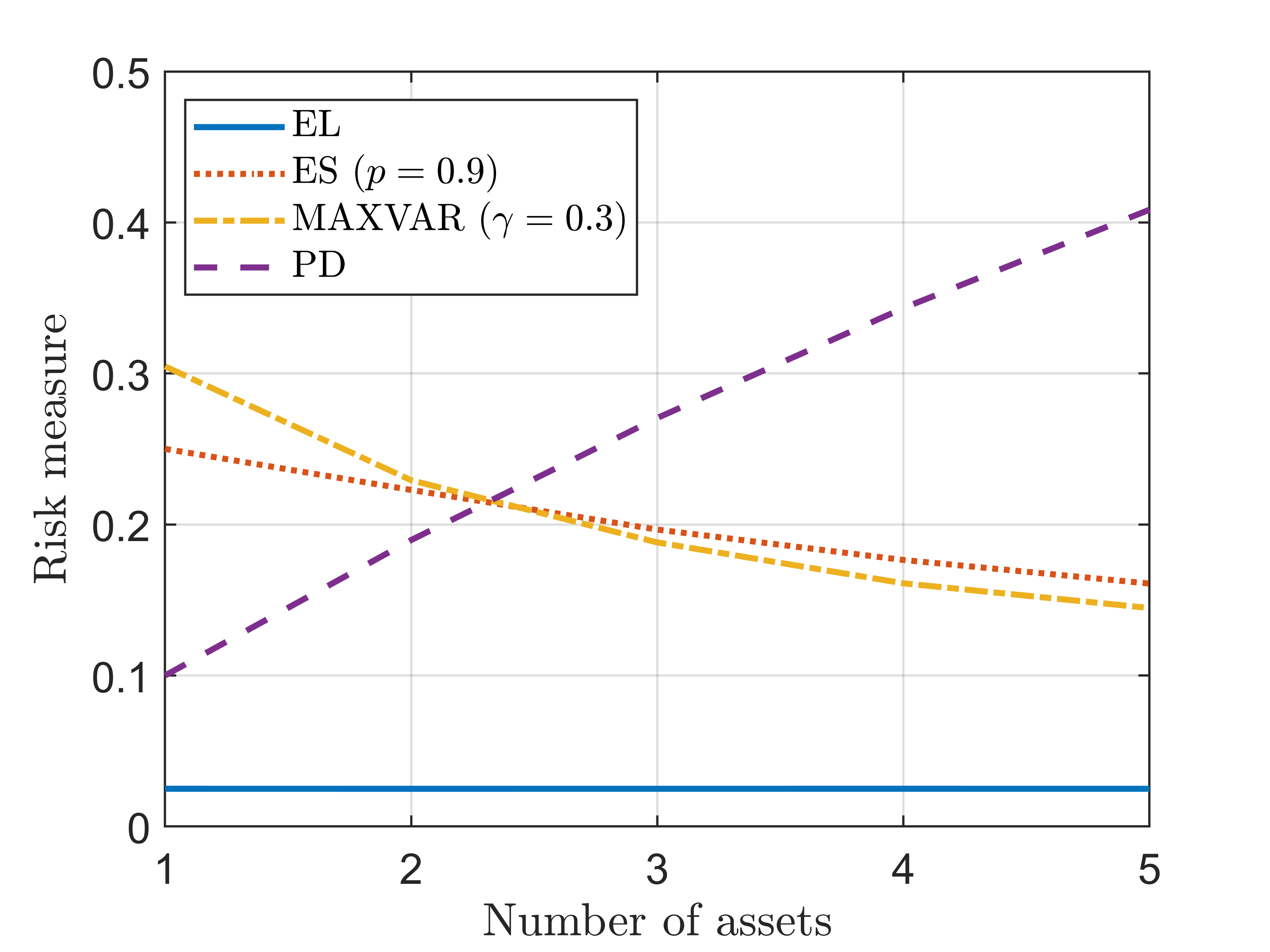}
            \caption[]{}
            \label{fig:measure_cat}
        \end{subfigure}
        \hfill
        \begin{subfigure}[b]{0.475\textwidth}
            \centering
            \includegraphics[width=\textwidth]{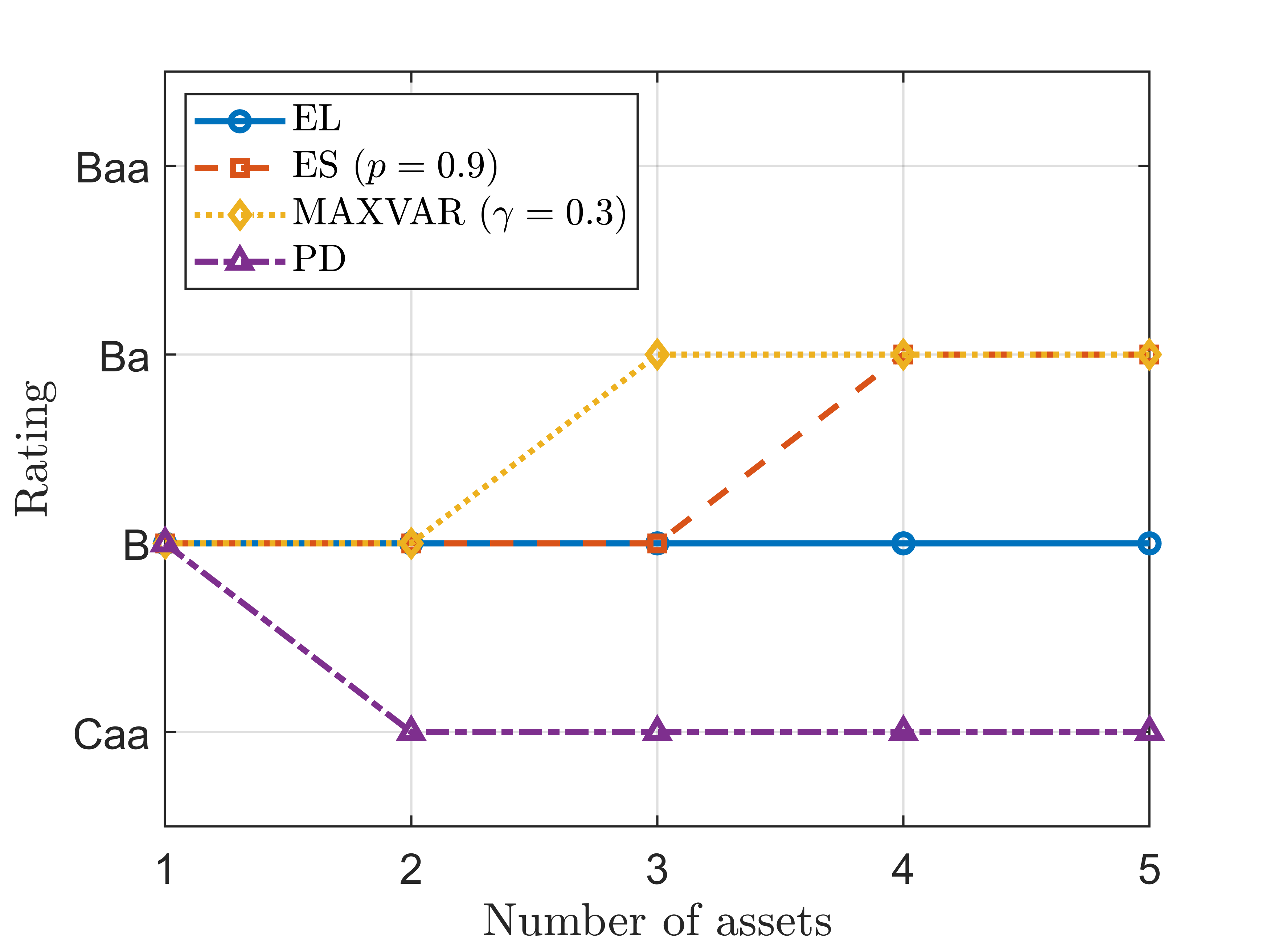}
            \caption[]{}
            \label{fig:rating_cat}
        \end{subfigure}

        \caption[]
        {Pooling effect on CAT bond ratings as   states are added sequentially}
        \label{fig_measures3}
    \end{figure*}

CAT bonds can be linked to insurance covering losses in a single state or across multiple states. Although the loss distributions across states are not iid, we can still invoke quasi-convexity to illustrate the effect of pooling in the context of diversification. Panel (a) of Figure \ref{fig_measures3} presents the results of risk measures for the EL, MAXVAR, ES, and PD criterion, by sequentially adding states to the portfolio in the following ordering: Kansas, Michigan, Indiana, Minnesota, and Kentucky. In this case study, we set $K=0$, since each insurance already has its own trigger (i.e., the attachment point). 
In this setting, EL remains flat with respect to asset pooling, indicating that expected losses are unaffected by diversification. In contrast, although the state-wise MAXVAR and ES values are similar, their pooled counterparts exhibit a clear decreasing trend relative to the theoretical upper bound given by the line $\rho = 0.25$ (for MAXVAR) and $\rho = 0.31$ (for ES), underscoring their sensitivity to diversification. Moreover, PD shows a consistently increasing pattern throughout, suggesting a different response to risk aggregation. These patterns remain unchanged under different orderings of the states, due to the similarity of their loss profiles as shown earlier in Table \ref{tab_cat}.

Panel (b) of Figure \ref{fig_measures3} presents the results of credit ratings, by using the same calibration method as introduced in Section \ref{sec:CLO}. The resulting thresholds for all rating criteria are provided in Table \ref{tab:mapping_catbond} in Appendix \ref{appx:mapping}. All evaluations initially assign a B rating for the given single state. When using the EL criterion, the rating remains unchanged after incorporating all additional states. Under the PD criterion, however, the rating declines to Caa upon the inclusion of the second state. In contrast, for both ES and MAXVAR, the rating improves to Ba after adding the second and third state. 
The above analysis shows the advantage of using a rating criterion,  other than EL, that satisfies the risk consistency properties studied in our paper.


\section{Conclusion}
\label{sec:conc}
In this paper, we offer a first comprehensive study of risk 
consistency properties for 
Choquet risk measures and Choquet rating criteria. 
These risk 
consistency properties are formulated for law-invariant  scenario-based risk measures, and their connections are established. 
Our main results  characterize Choquet risk measures and Choquet rating criteria  
that satisfy risk 
consistency properties. 

Our results offer a large class of risk measures and rating criterion that satisfy \begin{enumerate}[(a)]
    \item  the axioms of  \cite{GKWW25} from a market perspective,
    \item the risk consistency properties from a risk management perspective,  
    \item scenario-based law-invariance  from a modeling  and statistical perspective. 
\end{enumerate} 
For instance, the Average ES 
and the Average MAXVAR with equal weights are two one-parameter classes of risk measures that reflect all considerations above when used in credit rating. We recommend these rating criteria as potential improvements of the  
 scenario-based EL criterion (Average EL in Table \ref{tab:examplesSR}) 
 used by Moody's. 
On the other hand,  the PD criterion does not satisfy either (a) and (b), and hence it has significant drawbacks; some of the other drawbacks of the PD criterion have been discussed by \cite{brennan2009tranching},   \cite{HW12} and \cite{GKWW25}, despite   it being a popular rating criterion. 

Although the  scenario-based EL criterion  also satisfies the above items (a)--(c), it can give a constant risk evaluation as the number of assets increases, as shown in our case studies, and thus it does not properly reward risk pooling. 
Moreover,  EL, as a coarse measurement of risk, does not reflect either the volatility or the downside losses of the risky payoff from investment. Therefore, the improvements and generalizations studied in this paper can be useful when using the expectation does not align well with the risk management goal of investors using credit rating or risk measures.


\section*{Acknowledgments}
The research was supported by the National Natural Science Foundation of China (Grant Nos.~12471445, 12071016) and the Natural Sciences and Engineering Research Council of Canada (RGPIN-2024-03728, CRC-2022-00141).


 \newpage 
\appendix
\renewcommand{\thesection}{\Alph{section}}
\renewcommand\thesubsection{\thesection.\arabic{subsection}}
\renewcommand{\theequation}{\Alph{section}.\arabic{equation}}
\renewcommand{\thelemma}{\Alph{section}.\arabic{lemma}}
\renewcommand{\theproposition}{\Alph{section}.\arabic{proposition}}
\renewcommand{\thetheorem}{\Alph{section}.\arabic{theorem}}
\renewcommand{\theaxiom}{\Alph{section}.\arabic{axiom}}
\renewcommand{\theexample}{\Alph{section}.\arabic{example}}

\setcounter{equation}{0}
\setcounter{lemma}{0}
\setcounter{proposition}{0}
\setcounter{theorem}{0}
\setcounter{axiom}{0}
\setcounter{example}{0}

\setcounter{table}{0}
\renewcommand{\thetable}{\Alph{section}.\arabic{table}}

\begin{center}
    \Large Supplementary Material: Appendices 
\end{center}

The supplementary material includes proofs of all results, technical conditions on submodular functions, and additional numerical results. 

\section{Proofs}
\label{app:A}
\begin{proof}[Proof of Lemma \ref{lem:RC-pool}]
The relation $L^{(\ell')} \le_{\rm{icx}} L^{(\ell)}$ can be seen from Example 3.A.29 of \cite{SS07}. 
	Note that the function $x \mapsto (x - K)_+/(1-K)$ for $K \in [0, 1)$ is increasing convex,
    and the compositions of increasing convex functions are increasing convex.
The desired relation follows. 
\end{proof}

\begin{proof}[Proof of Theorem \ref{thm:summarize}]
All arguments below work for both $\I:\L_1\to [n]$ and  $\rho:\L^\infty\to \R$.

\begin{enumerate}[(i)]
\item   {[LI]} $\Rightarrow$ [$S$-LI]: It follows from the fact that $ L_1\laweq L_2$ is a weaker requirement than $ L_1\seq L_2$. 

[RA] $\Rightarrow$ [$S$-RA]: It suffices to show that $X\leq_{S\opName{-icx}}Y$ implies $X\leq_{\rm{icx}}Y$. If $X\leq_{S\opName{-icx}}Y$, i.e., $\mathbb{E}[\phi(X)|S_j] = \E^{\p_j}[\phi(X)] \leq  \E^{\p_j}[\phi(Y)] = \mathbb{E}[\phi(Y)|S_j]$ holds for any increasing convex function $\phi$ and $j\in [s]$, then $$\mathbb{E}[\phi(X)]=\sum_{j=1} ^s 
\mathbb{E}[\phi(X)\id_{S_j}]\leq\sum_{j=1}^s \mathbb{E}[\phi(Y)  \id_{S_j}]= \mathbb{E}[\phi(Y)]$$ holds for any increasing convex function $\phi$. This yields $X\leq_{\rm{icx}}Y$. Therefore, [RA] implies [$S$-RA].

[PE] $\Rightarrow$ [$S$-PE]: It suffices to show that conditionally iid under each scenario implies conditionally iid. Take any sequence $(L_\ell)_{\ell\in\N}$   in $\L_1$ that is conditionally iid on a random vector $Z_j$ under each $\p_j$, $j\in [s]$, and let $Y$ be a random variable that equals $j\in [s]$ under $S_j$. 
It then follows that $(L_\ell)_{\ell\in\N}$   
is conditionally iid on $(Z_Y,Y)$. Therefore, [PE] implies [$S$-PE].

[PE*] $\Rightarrow$ [$S$-PE*]: This follows the same argument in [PE] $\Rightarrow$ [$S$-PE] that conditionally iid under each scenario implies conditionally iid.

With $S=\Omega$, the above implications become equivalences,   following directly from definitions.


\item 
Part (ii) is a special case of (iii) when $S=\Omega$.
\item  {[$S$-RA]} $\Rightarrow$ [$S$-LI]: For any $L_1,L_2\in \mathcal{L}$, $L_1\overset{S}{\sim} L_2$ implies both $L_1\leq_{S\opName{-icx}} L_2$ and $L_2\leq_{S\opName{-icx}} L_1$, since $\mathbb{E}^{\p_j}[\phi(L_1)]= \mathbb{E}^{\p_j}[\phi(L_2)]$ holds for any increasing convex function $\phi$ and $j\in[s]$. Therefore, [$S$-RA] implies  that $L_1$ and $L_2$ are assigned the same value, and thus [$S$-LI] holds.

[$S$-RA] $\Rightarrow$ [$S$-PE]: For any  sequence  $(L_\ell)_{\ell\in\N}$   in $\L_1$ that is conditionally iid under each $\p_j$, $j\in [s]$, if $\ell\le \ell'$, and $K\in(0,1)$, by Lemma \ref{lem:RC-pool} applied to each $\p_j$, $j\in[s]$, we have $ (L^{(\ell')}-K)_+ \le_{S\opName{-icx}} (L^{(\ell)}-K)_+$. Therefore, [$S$-RA] implies [$S$-PE]. 

\item We first discuss the argument for $\rho:\L^\infty\to \R$. Take $X,Y$ with $X\le_{\rm cx} Y$. Let $X_1,X_2,\dots$ be distributed as $Y$ such that 
$X^{(n)}:=\sum_{i=1}^n X_i/n \to X$ in $L^\infty$ as $n\to\infty$; for the existence of such a sequence, see e.g.,~\citet[Theorem 3.5 and Remark 3.2]{MW15}.
By [QC], 
$\rho(X^{(n)}) \le \max\{\rho(X_1),\dots,\rho(X_n)\} = \rho(Y)$.
By [LS*], 
$$\rho(X) \le \liminf_{n\to\infty} \rho(X^{(n)}) \le \rho(Y).$$
Therefore, $\rho$ is increasing in the convex order. This then implies [RA] because $\rho$ is monotone. Translate all the above discussions about $\rho$ into $\I$, we similarly obtain that $\I$ satisfies [RA].

\item This follows from the same argument in (iv) by constructing the sequence $X_1,X_2,\dots$ with the corresponding distributions of $X,Y$ under each $\p_j$, $j\in [s]$. 

\item Take a  conditionally iid sequence  $(L_\ell)_{\ell\in\N}$   in $\L_1$ and fix $\ell\in\N$.
For $k\in [\ell]$, let $L^{(\ell)}_{-k} $
be the average of $L_j$ for $j\in [\ell]\setminus\{k\}$. Clearly, 
$L^{(\ell)}$
is the average of 
$
L^{(\ell)}_{-1},\dots,
L^{(\ell)}_{-\ell}.
$
Therefore, [QC] and [LI] imply 
$$
\rho(L^{(\ell)}) \le \max\{\rho(L^{(\ell)}_{-1}),\dots,
\rho(L^{(\ell)}_{-\ell}) \} 
= \rho(L^{(\ell)}_{-\ell}) = \rho(L^{(\ell-1)}),
$$
and hence, [PE*] holds.
\item This follows the same argument in (vii) by taking  a conditionally iid sequence  under each $\p_j$, $j\in [s]$.
\qedhere
\end{enumerate}
\end{proof}

\begin{proof}[Proof of Theorem \ref{thm_uniqueness}]
According to the condition, there exist increasing functions $f_1:[0,1] \rightarrow [n]$ and $f_2:[0,1] \rightarrow [n]$ such that
\begin{equation}
    \label{eq:unique-pf}    \I(L)=f_1(\rho_1(L))=f_2(\rho_2(L)),~~~~ L\in \L_1.
\end{equation} 
Note that a Choquet integral $\rho$ in form \eqref{eq:scpresentation} satisfies $\rho(x)=x$ for $x\in [0,1]$.
We have
$$f_1(x)=f_1(\rho_1(x))=\I(x)=f_2(\rho_2(x))=f_2(x)~~~\mbox{ for all } x\in [0,1].$$
Hence, the functions $f_1,f_2$ coincide, and we denote it by $f$. 

Next we argue $\rho_1=\rho_2$. Otherwise, there exists $\rho_1(L)\neq \rho_2(L)$ for some $L\in \L_1$. We can assume $\rho_1(L)< \rho_2(L)$ without loss of generality.  
Since $f$ is not a constant on $(0,1)$, there exists $c\in (0,1)$ such that $f(x) < f(y)$ for any $0\leq x<c<y\leq 1$. 
Note that any Choquet integral $\rho$ satisfies $\rho(\lambda X+a)=\lambda \rho(X)+a$ for $\lambda>0$, $a\in \R$ and $X\in \L^{\infty}$.
If $\rho_1(L)<c<\rho_2(L)$ then it is a contradiction to \eqref{eq:unique-pf}. 
If $c<\rho_1(L)<\rho_2(L)$,
then we can find $\lambda\in (0,1)$ such that
$ \rho_1(\lambda L)<c<\rho_2(\lambda  L)$, a contradiction to \eqref{eq:unique-pf}. 
If $\rho_1(L)<\rho_2(L)<c$,
then we can find $\lambda\in (0,1)$ such that
$ \rho_1(\lambda L+(1-\lambda))<c<\rho_2(\lambda  L+(1-\lambda))$, a contradiction to \eqref{eq:unique-pf}. 
Therefore, $\rho_1=\rho_2$ holds true. 
\end{proof}

\begin{proof}[Proof of Theorem \ref{th:convert}]
(i) If $\rho_\I$ satisfies one of the above properties, then $\I$ satisfies the same property since $\I$ is an increasing transformation of $\rho_\I$. 
Conversely, we will show that if $\I$ satisfies any one of the above properties, then so does $\rho_{\I}$. Suppose otherwise; that is, one of the following conditions holds:
\begin{enumerate}[(a)]
    \item $\I$ satisfies [LI] and $\rho_{\I}(X)< \rho_{\I}(Y)$ for some $X,Y\in \L_1$ with $X\overset{\d}{=}Y$;
    \item $\I$ satisfies [$S$-LI] and $\rho_{\I}(X)< \rho_{\I}(Y)$ for some $X,Y\in \L_1$ with $X\overset{S}{\sim}Y$;
    \item $\I$ satisfies [RA] and $\rho_{\I}(X)< \rho_{\I}(Y)$ for some $X,Y\in \L_1$ with $Y\le_{\opName{icx}}X$;
    \item $\I$ satisfies [$S$-RA] and $\rho_{\I}(X)< \rho_{\I}(Y)$ for some $X,Y\in \L_1$ with $Y\le_{S\opName{-icx}}X$;
    \item $\I$ satisfies [PE*] and $\rho_{\I}(X)< \rho_{\I}(Y)$, where $X=L^{(\ell)}$ and $Y=L^{(\ell')}$ with $\ell\le \ell'$ and some conditionally iid sequence $L_1,L_2,\dots\in \L_1$;
    \item $\I$ satisfies [$S$-PE*] and $\rho_{\I}(X)< \rho_{\I}(Y)$, where $X=L^{(\ell)}$ and $Y=L^{(\ell')}$ with $\ell\le \ell'$ and some sequence $(L_\ell)_{\ell\in\N}$ in $\L_1$ that is conditionally iid under each $\p_j$, $j\in [s]$;
    \item $\I$ satisfies [QC] and $\rho_{\I}(Z)\le \rho_{\I}(X)< \rho_{\I}(Y)$, where $Y=\lambda X+(1-\lambda) Z$ for some $X,Z\in \L_1$ and $\lambda \in [0,1]$.
\end{enumerate}

Take a discontinuity point $k$ of $f_{\I}$ in $(0,1)$, such that any $L$ with $\rho_{\I}(L)<k$ and any $L'$ with $\rho_{\I}(L')>k$ are rated differently by $\I$. Because $\rho_{\I}(X)< \rho_{\I}(Y)$, there exist $\alpha,\beta  \in[0,1]$ such that $\alpha \rho_{\I}(X)+ \beta  < k < \alpha \rho_{\I}(Y) +\beta $. 
Moreover, by choosing $\beta$ close to $k$, the value $\alpha$ can be made arbitrarily small, in particular smaller than $1-\beta$. This guarantees that $\alpha X+\beta, \alpha Y+\beta \in \mathcal L_1$.
By translation invariance and positive homogeneity of $\rho_{\I}$,  $ \rho_{\I}(\alpha X +\beta ) < k < \rho_{\I}(\alpha Y +\beta )$ holds. Hence, $\alpha X +\beta$ is rated better than $\alpha Y +\beta$, which yields a contradiction if any of the conditions (a)--(d) hold. 
If condition (e) or (f) hold, then contradiction comes by applying [PE*] (resp.~[$S$-PE*]) to the sequence $\alpha L_1+\beta,\alpha L_2+\beta ,\dots$ in $ \mathcal L_1$. If condition (g) holds, then contradiction comes 
from [QC] and $\I(\alpha Z+\beta)\leq \I(\alpha X+\beta)< \I(\alpha Y+\beta)$, where $\alpha Y+\beta = \lambda(\alpha X+\beta)+ (1-\lambda)(\alpha Z+\beta)$.

(ii) If $f_\I$ is left-continuous and $\rho_\I$ satisfies property [LS] (resp.~[LS*]), then $\I$ satisfies the same property by definition \eqref{eq:p2r}. Now suppose that $\I$ satisfies property [LS] (resp.~[LS*]).

Taking random variables $X_{\ell}=\lambda_{\ell}$ and $X=\lambda$ with constants $\lambda,\lambda_1,\lambda_2,\ldots\in [0,1]$ and $\lambda_{\ell}\uparrow \lambda$, we have $X,X_1,X_2,\ldots \in \L_1$ and $X_{\ell}\uparrow X$. According to [LS] (resp.~[LS*]), 
\begin{equation*}\liminf_{\ell\rightarrow\infty}f_{\I}(\lambda_\ell)=\liminf_{\ell\rightarrow\infty}\I(X_{\ell}) \geq \I(X)=f_{\I}(\lambda)
\end{equation*}
yields that $\lim_{\ell\rightarrow\infty}f_{\I}(\lambda_\ell)=f_{\I}(\lambda)$, i.e., $f_{\I}$ is left-continuous.
Then we show that $\rho_\I$ satisfies the same property. Suppose otherwise; that is, $\liminf_{\ell\rightarrow \infty}\rho_\I(X_{\ell})< \rho_\I(X)$ holds for some $(X_\ell)_{\ell\in\N} $ with $X_{\ell} \rightarrow X$ almost surely (resp.~in $\mathcal{L}^{\infty}$).
Consider the same discontinuity point $k$ of $f_{\I}$ as in part (i). Following the same argument presented in (i), there exist constants $\alpha,\beta  \in[0,1]$ satisfying 
$$\alpha \liminf_{\ell\rightarrow \infty}\rho_{\I}(X_\ell)+ \beta  < k < \alpha \rho_{\I}(X) +\beta,$$ 
which simultaneously ensures that $\alpha X_\ell+\beta \in \mathcal L_1$ for all $\ell$ and $\alpha X+\beta \in \mathcal L_1$. 
By translation invariance and positive homogeneity of $\rho_{\I}$, $\liminf_{\ell\rightarrow \infty}\rho_{\I}(\alpha X_{\ell} +\beta ) < k < \rho_{\I}(\alpha  X  +\beta )$ holds.
Consequently, there exist infinitely many $\ell$ such that $\alpha X_{\ell} +\beta$ is rated better than $\alpha X +\beta$, which yields a contradiction by applying [LS] (resp.~[LS*]) to $\alpha X+\beta$ and the sequence $\alpha X_1+\beta,\alpha X_2+\beta ,\dots$ in $ \mathcal L_1$. So $\rho_\I$ satisfies the same property.
\end{proof}

\begin{proof}[Proof of Lemma \ref{lem:RC-trivial}]
 (i)$\Rightarrow$(ii) follows from   
 Theorem 5.2.1 of \cite{DVGKTV06}.
  (ii)$\Rightarrow$(iii) follows
  because  the convex order implies the increasing convex order.  
  The equivalences (i)$\Leftrightarrow$(iii)$\Leftrightarrow$(iv)$\Leftrightarrow$(v) follow from Theorem 3 of \cite{WWW20EC}.
\end{proof} 

\begin{proof}[Proof of Theorem \ref{th:PE-LI}]
(i)$\Rightarrow$(ii): 
Note that we can write $h=\lambda \tilde h+(1-\lambda)\delta$ for some $\lambda \in [0,1]$, 
where $\tilde h$ is concave on $[0,1]$ and $\delta$ is given by $\delta(t)=\id_{\{t=1\}}.$
Let $\tilde \rho$ be the distortion risk measure with distortion function $\tilde h$
and $ \rho_\delta$ be the distortion risk measure with distortion function $\delta$.
Note that $\rho_\delta$ is the essential infimum. 
By Lemmas \ref{lem:RC-pool}  and \ref{lem:RC-trivial}, 
we get,  for $\ell\le\ell'$, 
$$\tilde \rho  \left( \frac{(L^{(\ell)}-K)_+}{1-K} \right)\ge\tilde  \rho \left( \frac{(L^{(\ell' )}-K)_+}{1-K} \right).$$  
Moreover, since $ L^{(\ell)} $
and $L^{(\ell')}$ have the same essential infimum, we have 
$$ \rho_\delta \left( \frac{(L^{(\ell)}-K)_+}{1-K} \right)=    \rho_\delta \left( \frac{(L^{(\ell' )}-K)_+}{1-K} \right).$$  
Therefore,  using $\rho=\lambda \tilde \rho + (1-\lambda)\rho_\delta$, we obtain that (ii) holds.

    (ii)$\Rightarrow$(iii) follows trivially by setting $K=0$. 
    
        (iii)$\Rightarrow$(iv) follows trivially by definition. 

    (vi)$\Rightarrow$(i): 
    Denote by $C_h\subseteq (0,1)$ the set of points of continuity of $h$ in $(0,1)$.
    Take an arbitrary $\lambda \in (0,1)$ and let $x,y\in   C_h$ with $x< y$. 
Let $Y$ be a random variable such that $\p(Y=1)=x$, $\p(Y=\lambda)=y-x$, and $\p(Y=0)=1-y$.
Define a sequence $L_1,L_2,\dots \in \mathcal L_1$ 
that is  conditionally iid on $Y$ such that
$L_1$ is distributed as Bernoulli($Y$) given  $Y$.
Clearly, $\rho(L_1) = h(\p(L_1=1)) = h(x+\lambda y -\lambda x)$.
Moreover, $L^{(\ell)}$ converges almost surely to $Y$ as $\ell\to\infty$.
By the dominated convergence theorem,  as $\ell\to\infty$,
\begin{align*} 
\rho(L^{(\ell)}) 
&=\int_0^1 h (\p( L^{(\ell)}>z)) \d z
\\&= 
\int_0^{\lambda} h (\p( L^{(\ell)}>z)) \d z
+
\int_{\lambda}^1 h (\p( L^{(\ell)}>z)) \d z
\\&\to  \lambda  h(y) +(1-\lambda)  h(x).
\end{align*}
Hence, from $\rho(L_1) \ge \rho(L^{(\ell)})$,  we get 
\begin{align}
    \label{eq:hconti}
    h( \lambda y+(1-\lambda) x) \ge  \lambda h(y)  + (1-\lambda) h(x) \mbox{~~~for all  $x,y\in C_h$, $x<y$ and $\lambda \in(0,1)$.}
    \end{align} 
Next, consider $x,y\in(0,1)$ at which $h$ is possibly not continuous. 
Since $C_h$ is dense in $[0,1]$, we can find  a sequence $\epsilon_k\downarrow 0$ such that $x+\epsilon_k$
and $y+\epsilon_k$ are in $C_h$ for each $k\in \N$.
For any $\lambda \in (0,1)$
let $\lambda_k\in(0,1) $ be such that 
$\lambda x+ (1-\lambda)y =
\lambda_k x+ (1-\lambda_k)y+\epsilon_k
 $, which exists for $k$ large enough, and moreover $\lambda_k\to \lambda$ as $k\to\infty$.
 By \eqref{eq:hconti}, we get 
\begin{align*}
h( \lambda x+(1-\lambda) y)  & =  h( \lambda_k x+(1-\lambda_k) y+\epsilon_k) 
\\ &\ge \lambda_k h(x+\epsilon_k)  + (1-\lambda_k) h(y+\epsilon_k)\\ & \ge  \lambda_k h(x)  + (1-\lambda_k) h(y)
\to   \lambda  h(x)  + (1-\lambda  ) h(y)  \mbox{~~~as $k\to\infty$}.
\end{align*}
Hence, \eqref{eq:hconti} also holds for $x,y$ outside $C_h$, proving concavity of $h$ on $(0,1).$ Since $h$ is nonnegative, we know that $h$ is also concave on $[0,1)$.
\end{proof}

\begin{proof}[Proof of Corollary \ref{coro:equiv}]
Let $h$ be the distortion function of a distortion risk measure $\rho$ satisfying [LS]. By Lemma \ref{lem:RC-trivial} and Theorem \ref{th:PE-LI}, 
it suffices to verify that [LS] implies that $h$ is continuous at $1$. 
Take $\lambda,\lambda_1,\lambda_2,\dots \in (0,1]$ such that $\lambda_{k}\to \lambda$ as $k\to\infty$, and let $U$ be uniformly distributed on $[0,1]$. 
The sequence $\id_{\{U \le \lambda_k\}}$ converges to   $\id_{\{U \le \lambda\}}$ almost surely as $k\to\infty$.
Hence, 
    $$\lim_{k \rightarrow\infty}h (\lambda_{k})=\lim_{k \rightarrow\infty}\rho_{\I}(\id_{\{U \le \lambda_k\}})\ge \rho_{\I}(\id_{\{U \le \lambda\}})=h(\lambda),$$
	where the inequality is due to [LS]. Thus $h$ is lower semi-continuous, and in particular it is continuous at $1$ because it is increasing. Hence, all the four properties are equivalent to concavity of $h$ on $[0,1]$.
\end{proof}

\begin{proof}[Proof of Lemma \ref{lemma:LC}]
Let $\rho$ be the $S$-distortion risk measure with $S$-distortion function $g$.
First, suppose that $\rho$ satisfies [LC]. Let $U$ be a random variable uniformly distributed on $[0,1]$ under each $\p_j$, $j\in [s]$; such $U$ exists because $\p_1,\dots,\p_s$ are mutually singular.
For $\mathbf{x}=(x_1,\dots,x_s)\in [0,1]^s$, denote by $A_{\mathbf x}:=\{U\leq \sum_{j=1}^{s}x_{j}\id_{S_j}\}$ which satisfies $\p^S(A_{\mathbf x})=\mathbf x$. 
This yields $g(\mathbf x)= \rho(\id_{A_{\mathbf x}})$. 
For  a sequence $(\mathbf{x}_{\ell})_{\ell\in \N}$ that converges to $\mathbf{x}$, we have 
$\id_{A_{\mathbf {x}_{\ell}}}\rightarrow\id_{A_{\mathbf {x}}}$ almost surely. By [LC], 
\begin{equation*}
    \lim_{\ell\rightarrow \infty}g(\mathbf{x}_{\ell})=\lim_{\ell\rightarrow \infty}\rho(\id_{A_{\mathbf {x}_{\ell}}})=\rho(\id_{A_{\mathbf {x}}})=g(\mathbf{x}),
\end{equation*}
and thus $g$ is continuous.

Next, suppose that $g$ is continuous. Take a bounded sequence $(X_\ell)_{\ell\in\N}$ such that $|X_\ell|<M$ for some $M>0$ and $X_\ell \to X$ almost surely. Then for $j\in [s]$, $X_{\ell}$ converges $\p_j$-almost surely to $X$ as $\ell\to\infty$. This implies 
\begin{equation*}
   \lim_{\ell\to\infty} \p^S(X_{\ell}>z)= \p^S(X>z)
\end{equation*}
for all $z\in [0,1]$ except on a set $A\subseteq [0,1]$ of Lebesgue measure $0$.
For  $z \in [0,1]\setminus A$, the continuity of $g$ implies 
\begin{equation*}
    \lim_{\ell\to\infty}(g\circ \p^S)(X_{(\ell)}>z)=(g\circ \p^S)(X>z).
\end{equation*}
By the dominated convergence theorem,
\begin{equation*}
\lim_{\ell\to\infty}\rho(X_{\ell})=\lim_{\ell\to\infty}\int_{-M}^{M}(g\circ \p^S)(X_{(\ell)}>z)\dd z=\int_{-M}^{M}(g\circ \p^S)(X>z)\dd z=\rho(X).
\end{equation*}
Thus $\rho$ satisfies [LC].
\end{proof}

\begin{proof}[Proof of Theorem \ref{th:SR-Choquet}]
    (i)$\Leftrightarrow$(ii) is well known, and explained in Section \ref{sec:convex}.
 
    (iii)$\Leftrightarrow$(i): By Lemma \ref{lemma:submodular}, the condition in (iii) on $g$ is equivalent to     \eqref{eq:submodular2}. 
    Moreover, we can check that 
    \eqref{eq:submodular2} is equivalent to  
    \begin{equation*}
        g \circ \mathbb{P}^{S}(A\cap B)+g \circ \mathbb{P}^{S}(A\cup B)\leq g \circ \mathbb{P}^{S}(A)+g \circ \mathbb{P}^{S}(B) ~~~\mbox{for all $A,B\in \mathcal{F}$,}
    \end{equation*}
     because $\mathbb{P}^S(A\cap B)+ \mathbb{P}^S(A\cup B) = \mathbb{P}^S(A)+\mathbb{P}^S(B)$, and the above four vectors of probabilities can attain all values of $\mathbf x_1,\dots,\mathbf x_4$ in \eqref{eq:submodular2}.
     Therefore,     \eqref{eq:submodular2} is equivalent to   
 the submodularity of $g \circ \mathbb{P}^{S}$, which is 
  further  equivalent to the coherence of $\rho$, as mentioned in Section \ref{sec:21-def}.


    Noting that [LS*] holds automatically for Choquet risk measures,  (ii)$\Rightarrow$(iv)$\Rightarrow$(v)$\Rightarrow$(vi) follows from Theorem \ref{thm:summarize} and the definition.

    Below, we assume that $g$ is continuous, and prove the direction    
    (vi)$\Rightarrow$(iii).  Since $g$ is continuous (Lemma \ref{lemma:LC}), it suffices to show that \eqref{eq:submodular4} in Lemma \ref{lemma:submodular} holds. First, fix distinct $i,j\in [s]$. Take $\mathbf{x}=(x_1,\dots,x_s)\in [0,1]^s$ and $\varepsilon,\delta\geq 0$ such that $\mathbf{x}-\varepsilon\mathbf{e}_i,\mathbf{x}+\varepsilon\mathbf{e}_i+\delta\mathbf{e}_j\in [0,1]^s$. 
    Let $Y$ be a random variable with probability mass function $f_k$ under $\p_k$ for $k\in [s]$ given below:
    \begin{equation*}
        \begin{cases}
            f_i(0) = 1 - x_i-\varepsilon,\ f_i(1/2) = 2\varepsilon,\ f_i(1) = x_i-\varepsilon; \\
            f_j(0) = 1 - x_j-\delta,\ f_j(1/2) = \delta,\ f_j(1) = x_j; \\
            f_k(0) = 1 - x_k,\ f_k(1) = x_k,\ \forall k\neq i,j.
        \end{cases}
    \end{equation*}
    Define a sequence $L_1,L_2,\dots \in \mathcal L_1$ that is conditionally iid on $Y$ under each $\p_k$, $k\in [s]$, such that $L_1$ is uniformly distributed on $\{1/4,3/4\}$ on the event $Y=1/2$ under $\p_i$, and $L_1=Y$ otherwise. 
    The construction of such random variables is possible because the probabilities $\p_k,$ $k\in [s]$ are mutually singular, so we can arbitrarily assign probabilities. 
    
    For the variable $L_1$, its probability mass function $g_i$ under $\p_i$ is given as 
    \begin{equation*}
        g_i(0) = 1 - x_i-\varepsilon,\ g_i(1/4) = \varepsilon,\ g_i(3/4) = \varepsilon, \ g_i(1) = x_i-\varepsilon,
    \end{equation*}
    and it follows that 
    \begin{align*}
        \rho(L_1)&=\int_0^1 g(\mathbb{P}_1(L_1> x),\dots, \mathbb{P}_s(L_1> x))\dd x\\
        &=\frac{1}{4}(g(\mathbf{x}+\varepsilon\mathbf{e}_i+\delta\mathbf{e}_j)+g(\mathbf{x}+\delta\mathbf{e}_j)+g(\mathbf{x})+g(\mathbf{x}-\varepsilon\mathbf{e}_i)).
    \end{align*}
    Moreover, $L^{(\ell)}$ converges almost surely to $Y$ as $\ell\to\infty$, and the sequence $(L^{(\ell)})_{\ell\in \N}$ is bounded. 
    By [LC], $$\lim_{\ell\to\infty} \rho(L^{(\ell)}) =  \rho(Y)= \frac{1}{2}(g(\mathbf{x}+\varepsilon\mathbf{e}_i+\delta\mathbf{e}_j)+g(\mathbf{x}-\varepsilon\mathbf{e}_i)).$$
    Property [$S$-PE*] implies
    $\rho(L_1) \geq \rho(L^{(\ell)})$, so \eqref{eq:submodular4} holds for $i\neq j$. 
    Taking $\delta=0$ in \eqref{eq:submodular4},  we obtain that $g$  is componentwise concave. 
    Thus, \eqref{eq:submodular4} also holds for $i=j$, which completes the proof.
\end{proof}

\begin{proof}[Proof of Proposition \ref{prop:SPE2}]
Take a point $\mathbf{x}=(x_1,\ldots,x_s)\in (0,1)^s$.
Define a sequence $L_1,L_2,\dots \in \mathcal L_1$ that is  conditionally iid on $S$ such that $L_1$ is distributed as Bernoulli($x_j$) under scenario $S_j$. Clearly, 
$$\rho(L_1) = g(\p_1(L_1=1),\dots,\p_s(L_1=1)) = g(x_1,\ldots,x_s).$$
Moreover, for $j\in [s]$, $L^{(\ell)}$ converges $\p_j$-almost surely to $x_j$ as $\ell\to\infty$.  This implies 
$$\p^S(L^{(\ell)}>z)\to (\p_1(x_1>z),\dots,\p_s(x_s>z)) =   (\id_{\{ x_1>z\}},\dots,\id_{\{ x_s>z\}}) $$
   for all $z\in [0,1]$ except on a set $A\subseteq [0,1]$ of Lebesgue measure $0$.
   For  $z \in [0,1]\setminus A$,
   lower semi-continuity of $g$ at points in $\{0,1\}^s$ implies 
$$\lim_{\ell \to\infty} (g\circ \p^S)(L^{(\ell)}>z) \ge  g(\id_{\{ x_1>z\}},\dots,\id_{\{ x_s>z\}}).$$ 
By the dominated convergence theorem,  \begin{align*} 
  \lim_{\ell \to\infty}   \rho(L^{(\ell)}) 
   =\int_0^1 (g\circ \p^S)(L^{(\ell)}>z)  \d z
  &\ge  \int_0^1 g (\id_{\{ x_1>z\}},\dots,\id_{\{ x_s>z\}}) \d z.
\end{align*} 
  From [$S$-PE], we have that $\rho(L_1) \ge \rho(L^{(\ell)})$ and \eqref{eq_specon} holds. \end{proof}

\section{Equivalent conditions of a submodular function}
\label{app:L5}

The following lemma provides   equivalent conditions of a submodular function. 
The proof of the  lemma is similar to Proposition 2.1 of \cite{shaked1990parametric}
and Theorem 2.5 of \cite{mullerScarsini2001}, but the latter two results are formulated for functions on $\R^s$, and hence the results are not the same, which we discuss later.
In what follows, we denote by $\mathbf{e}_i$ the $i$th unit vector in $\mathbb{R}^{s}$  for $i\in [s]$.

\begin{lemma}\label{lemma:submodular}
	For a function $g:[0,1]^s\rightarrow\mathbb{R}$, the following conditions are equivalent.
	\begin{enumerate}[(i)]
		\item $g$ is componentwise concave and submodular on $[0,1]^s$;
		\item For all $\mathbf{x}_1, \mathbf{x}_2, \mathbf{x}_3, \mathbf{x}_4\in [0,1]^{s}$ such that $\mathbf{x}_1\leq \mathbf{x}_2\leq \mathbf{x}_4$, $\mathbf{x}_1\leq \mathbf{x}_3\leq \mathbf{x}_4$ and $\mathbf{x}_1+\mathbf{x}_4 = \mathbf{x}_2+\mathbf{x}_3$,
        \begin{equation}
            g(\mathbf{x}_1)+g(\mathbf{x}_4) \leq g(\mathbf{x}_2)+g(\mathbf{x}_3);\label{eq:submodular2}
        \end{equation}
		\item For all $\mathbf{x}\in [0,1]^{s}$, $\epsilon, \delta\geq 0$ and $i,j\in [s]$ such that 
		$\mathbf{x}+\varepsilon\mathbf{e}_i+\delta\mathbf{e}_j\in [0,1]^s$,
		\begin{equation}
			g(\mathbf{x}) + g(\mathbf{x}+\varepsilon\mathbf{e}_i+\delta\mathbf{e}_j)\leq g(\mathbf{x}+\varepsilon\mathbf{e}_i)+g(\mathbf{x}+\delta\mathbf{e}_j).\label{eq:submodular3}
		\end{equation}
	\end{enumerate}
     Any of (i)--(iii) implies
    \begin{enumerate}[(iv)]
    \item For all $\mathbf{x}\in [0,1]^s$, $\varepsilon,\delta\geq 0$ and $i,j\in [s]$ such that  $\mathbf{x}-\varepsilon\mathbf{e}_i,\mathbf{x}+\varepsilon\mathbf{e}_i+\delta\mathbf{e}_j\in [0,1]^s$,
        \begin{equation}
        g(\mathbf{x}-\varepsilon\mathbf{e}_i)+g(\mathbf{x}+\varepsilon\mathbf{e}_i+\delta\mathbf{e}_j)
        \leq g(\mathbf{x})+g(\mathbf{x}+\delta\mathbf{e}_j).\label{eq:submodular4}
        \end{equation}
    \end{enumerate} 
    Moreover, if $g$ is upper semi-continuous, then (i)--(iv) are equivalent.
\end{lemma}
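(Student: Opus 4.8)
\textbf{Proof plan for Lemma \ref{lemma:submodular}.}

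The plan is to establish the cycle of equivalences (i)$\Leftrightarrow$(ii)$\Leftrightarrow$(iii), then show (iii)$\Rightarrow$(iv), and finally close the loop under upper semi-continuity by proving (iv)$\Rightarrow$(iii). For the equivalence of (i), (ii), and (iii), I would mostly follow the known arguments for functions on $\R^s$ (cf.\ Proposition 2.1 of \citealp{shaked1990parametric} and Theorem 2.5 of \citealp{mullerScarsini2001}), adapting them so that every point invoked genuinely lies in the box $[0,1]^s$. For (iii)$\Rightarrow$(ii): given $\mathbf x_1\le\mathbf x_2\le\mathbf x_4$, $\mathbf x_1\le\mathbf x_3\le\mathbf x_4$ with $\mathbf x_1+\mathbf x_4=\mathbf x_2+\mathbf x_3$, write the gap $\mathbf x_4-\mathbf x_2=\mathbf x_3-\mathbf x_1=:\mathbf d\ge\mathbf 0$ and move from $\mathbf x_1$ to $\mathbf x_4$ coordinate by coordinate, applying \eqref{eq:submodular3} (with $\varepsilon$ along the coordinate currently being incremented and $\delta$ absorbing the already-accumulated increment) at each step and telescoping; all intermediate points stay in $[0,1]^s$ because they are coordinatewise between $\mathbf x_1$ and $\mathbf x_4$. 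For (ii)$\Rightarrow$(i): submodularity is the special case $\mathbf x_2=\mathbf x\vee\mathbf y$, $\mathbf x_3=\mathbf x\wedge\mathbf y$, $\mathbf x_1=\mathbf x_3$ shifted appropriately — more directly, choosing $\mathbf x_2,\mathbf x_3$ to differ from $\mathbf x_1$ in disjoint coordinate sets yields submodularity, while choosing them to differ only in a single common coordinate (so $\mathbf x_1,\mathbf x_2,\mathbf x_3,\mathbf x_4$ are collinear along $\mathbf e_i$ with $\mathbf x_2,\mathbf x_3$ the two midpoints) yields midpoint concavity in that coordinate; combined with boundedness/measurability this is not quite full concavity, so here I would instead observe that \eqref{eq:submodular2} applied with all four points collinear along $\mathbf e_i$ directly gives $g(\mathbf a)+g(\mathbf a+(s+t)\mathbf e_i)\le g(\mathbf a+s\mathbf e_i)+g(\mathbf a+t\mathbf e_i)$ for all admissible $s,t\ge 0$, which \emph{is} concavity along each line parallel to $\mathbf e_i$ (superadditivity of increments), i.e.\ componentwise concavity. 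For (i)$\Rightarrow$(iii): decompose the increment from $\mathbf x$ to $\mathbf x+\varepsilon\mathbf e_i+\delta\mathbf e_j$ and use componentwise concavity in coordinate $i$ together with submodularity between coordinates $i$ and $j$; the $i=j$ case follows from componentwise concavity alone (with $\varepsilon+\delta$ in place of the single step).

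The implication (iii)$\Rightarrow$(iv) is a short substitution: in \eqref{eq:submodular3} replace $\mathbf x$ by $\mathbf x-\varepsilon\mathbf e_i$ and the step $\varepsilon\mathbf e_i$ by the \emph{same} $\varepsilon\mathbf e_i$; one checks $\mathbf x-\varepsilon\mathbf e_i$, $(\mathbf x-\varepsilon\mathbf e_i)+\varepsilon\mathbf e_i=\mathbf x$, $(\mathbf x-\varepsilon\mathbf e_i)+\delta\mathbf e_j$, and $(\mathbf x-\varepsilon\mathbf e_i)+\varepsilon\mathbf e_i+\delta\mathbf e_j=\mathbf x+\delta\mathbf e_j$ are all in $[0,1]^s$ under the stated hypotheses, so \eqref{eq:submodular3} becomes exactly \eqref{eq:submodular4}. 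Note this works even when $i=j$, since then \eqref{eq:submodular4} reads $g(\mathbf x-\varepsilon\mathbf e_i)+g(\mathbf x+(\varepsilon+\delta)\mathbf e_i)\le g(\mathbf x)+g(\mathbf x+\delta\mathbf e_i)$, which is again concavity along $\mathbf e_i$.

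The last and only nontrivial part is (iv)$\Rightarrow$(iii) under upper semi-continuity, and this is where I expect the real work. The point is that \eqref{eq:submodular4} looks superficially like \eqref{eq:submodular3} ``shifted,'' but the shift is by $\varepsilon\mathbf e_i$ \emph{toward} the boundary, so near the face $\{x_i=0\}$ one cannot simply undo it. The strategy I would use: first show that (iv) forces componentwise concavity — take $i=j$ in \eqref{eq:submodular4} to get $g(\mathbf x-\varepsilon\mathbf e_i)+g(\mathbf x+(\varepsilon+\delta)\mathbf e_i)\le g(\mathbf x)+g(\mathbf x+\delta\mathbf e_i)$, i.e.\ the increments of $t\mapsto g(\mathbf a+t\mathbf e_i)$ are nonincreasing on the \emph{open} ray, hence $g$ is concave along each coordinate line on the relative interior, and upper semi-continuity extends this to the closed box. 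Next, for the genuinely two-index case $i\ne j$, use \eqref{eq:submodular4} with a point $\mathbf x'=\mathbf x+\varepsilon\mathbf e_i$ (so $\mathbf x'-\varepsilon\mathbf e_i=\mathbf x$ is valid whenever $x_i\le 1-\varepsilon$, which always holds here since we want to \emph{derive} \eqref{eq:submodular3} at $\mathbf x$ with step $\varepsilon$, forcing $x_i+\varepsilon\le 1$): substituting $\mathbf x\rightsquigarrow\mathbf x+\varepsilon\mathbf e_i$ into \eqref{eq:submodular4} gives precisely $g(\mathbf x)+g(\mathbf x+\varepsilon\mathbf e_i+\delta\mathbf e_j)\le g(\mathbf x+\varepsilon\mathbf e_i)+g(\mathbf x+\delta\mathbf e_j)$, which is \eqref{eq:submodular3} — and all four points lie in $[0,1]^s$ exactly under the hypotheses of (iii). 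So in fact (iv)$\Rightarrow$(iii) holds by substitution too, \emph{without} needing upper semi-continuity, the only subtlety being to double-check the membership constraints (in (iv) we need $\mathbf x-\varepsilon\mathbf e_i\in[0,1]^s$, and after the substitution $\mathbf x+\varepsilon\mathbf e_i$ this reads $\mathbf x\in[0,1]^s$, automatic). If this clean substitution indeed goes through, the role of upper semi-continuity would only be to handle boundary points where strict positivity of some coordinate fails — I would keep it in reserve for a limiting argument ($\mathbf x$ replaced by $\mathbf x+\eta\mathbf 1$, let $\eta\downarrow 0$, using $g(\mathbf x)\ge\limsup_{\eta\downarrow0}g(\mathbf x+\eta\mathbf 1)$) in case any step requires an interior point. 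The main obstacle, then, is bookkeeping: making sure at every substitution that the four evaluation points genuinely stay inside the unit box, and invoking upper semi-continuity exactly (and only) where a boundary case would otherwise escape.
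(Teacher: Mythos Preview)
Your substitution argument for (iv)$\Rightarrow$(iii) contains an algebraic error, and the conclusion you draw from it --- that (iv)$\Rightarrow$(iii) holds without upper semi-continuity --- is false. When you substitute $\mathbf x\rightsquigarrow\mathbf x+\varepsilon\mathbf e_i$ into \eqref{eq:submodular4}, the four points become $\mathbf x$, $\mathbf x+2\varepsilon\mathbf e_i+\delta\mathbf e_j$, $\mathbf x+\varepsilon\mathbf e_i$, and $\mathbf x+\varepsilon\mathbf e_i+\delta\mathbf e_j$, so the inequality you actually obtain is
\[
g(\mathbf x)+g(\mathbf x+2\varepsilon\mathbf e_i+\delta\mathbf e_j)\le g(\mathbf x+\varepsilon\mathbf e_i)+g(\mathbf x+\varepsilon\mathbf e_i+\delta\mathbf e_j),
\]
not \eqref{eq:submodular3}; moreover the point $\mathbf x+2\varepsilon\mathbf e_i+\delta\mathbf e_j$ need not lie in $[0,1]^s$ under the hypotheses of (iii). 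That this cannot be fixed is witnessed by the explicit counterexample $g(x_1,x_2)=\id_{\{x_1>0\}}\id_{\{x_2>0\}}$ on $[0,1]^2$: one checks directly that \eqref{eq:submodular4} holds (the constraint $\mathbf x-\varepsilon\mathbf e_i\in[0,1]^s$ forces $x_i\ge\varepsilon$, which kills the bad configuration), yet \eqref{eq:submodular3} fails at $\mathbf x=\mathbf 0$, $\varepsilon=\delta=1$, $i=1$, $j=2$. So upper semi-continuity is not ``in reserve'' --- it is essential.

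The paper's route for (iv)$\Rightarrow$(iii) is the one you should adopt: first, $\delta=0$ in \eqref{eq:submodular4} gives componentwise concavity, and an upper semi-continuous concave function on $[0,1]$ is automatically continuous, so $g$ is componentwise continuous. Then, for fixed $n$, apply \eqref{eq:submodular4} at the center $\mathbf x+\tfrac{k}{n}\varepsilon\mathbf e_i$ with step $\varepsilon/n$ for $k=1,\dots,n-1$; rearranged and telescoped this yields
\[
g(\mathbf x+\varepsilon\mathbf e_i+\delta\mathbf e_j)-g\bigl(\mathbf x+\tfrac{1}{n}\varepsilon\mathbf e_i+\delta\mathbf e_j\bigr)\le g\bigl(\mathbf x+\tfrac{n-1}{n}\varepsilon\mathbf e_i\bigr)-g(\mathbf x),
\]
and sending $n\to\infty$ using componentwise continuity gives \eqref{eq:submodular3}. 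As a secondary remark, your substitution for (iii)$\Rightarrow$(iv) is also miscomputed (the four points you list are those of \eqref{eq:submodular3} at base $\mathbf x-\varepsilon\mathbf e_i$, not those of \eqref{eq:submodular4}); the clean way to get (iv) is from (ii), taking $\mathbf x_1=\mathbf x-\varepsilon\mathbf e_i$, $\mathbf x_2=\mathbf x$, $\mathbf x_3=\mathbf x+\delta\mathbf e_j$, $\mathbf x_4=\mathbf x+\varepsilon\mathbf e_i+\delta\mathbf e_j$.
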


 Theorem 2.5 of \cite{mullerScarsini2001} and Theorem 3.12.2 of \cite{muller2002comparison} state that   
(i)--(iv) in Lemma \ref{lemma:submodular}, when formulated for functions on $\R^s$, are equivalent. For functions $g$ on $[0,1]^s$, condition (iv) does not   imply the other three conditions when $g$ is not upper semi-continuous. For a  counterexample, consider the function $g$ given by $g(x_1,x_2)=\id_{\{x_1>0\}}\id_{\{x_2>0\}}$ with  $s=2$, which is not upper semi-continuous. It is straightforward to see that $g$ satisfies \eqref{eq:submodular4}, but $g(0,0)+g(1,1)-g(0,1)-g(1,0)=1>0$, contradicting \eqref{eq:submodular3}. Therefore, we need the upper semi-continuity of  $g$.

\begin{proof}[Proof of Lemma \ref{lemma:submodular}]
	(i)$\Rightarrow$(ii): Take vectors  $\mathbf{x}_1 = (x_{1,1}, \ldots, x_{1,s})$, $\mathbf{x}_2 = (x_{2,1}, \ldots, x_{2,s})$, and $\mathbf{y} = (y_1, \ldots, y_s) \in [0,1]^s$ satisfying  $\mathbf{x}_1 \leq \mathbf{x}_2$ and $\mathbf x_2+\mathbf y \in [0,1]^s$. For $j\in [s]$, we have 
	\begin{align*}
	 g(\mathbf{x}_2 + y_j \mathbf{e}_j) - g(\mathbf{x}_2) 
		&\leq  g(\mathbf{x}_2 + (x_{1,j}-x_{2,j}+y_j)  \mathbf{e}_j)
		- g(\mathbf{x}_2 + (x_{1,j}-x_{2,j})  \mathbf{e}_j )\\
		& \leq g(\mathbf{x}_1 + y_j \mathbf{e}_j) - g(\mathbf{x}_1),
	\end{align*}
	where the first inequality follows from componentwise concavity, and the second inequality follows from  submodularity. 
	Applying the above arguments iteratively for all $j \in [s]$, we get $g(\mathbf{x}_2 + \mathbf{y}) - g(\mathbf{x}_1 + \mathbf{y}) \leq g(\mathbf{x}_2) - g(\mathbf{x}_1)$. 
  This implies that \eqref{eq:submodular2} holds by taking $\mathbf{y}=\mathbf{x}_4-\mathbf{x}_1$.
	
	(ii)$\Rightarrow$(iii): This follows directly by substituting \eqref{eq:submodular3} into \eqref{eq:submodular2}.
	
	(iii)$\Rightarrow$(i): The componentwise concavity follows by taking $i=j$. Take $\mathbf{x} = (x_{1}, \ldots, x_{s})$, $\mathbf{y} = (y_1, \ldots, y_s)$ and $\mathbf{z} = (z_1, \ldots, z_s)\in [0,1]^s$ satisfying $y_iz_i=0$ for all $i \in [s]$ and $\mathbf x+ \mathbf y + \mathbf z \in [0,1]^s$.
  To establish submodularity on $[0,1]^s$, it suffices to show       $$g(\mathbf{x})+g(\mathbf{x}+\mathbf{y}+\mathbf{z})\leq g(\mathbf{x}+\mathbf{y})+g(\mathbf{x}+\mathbf{z}),$$
  for the any $\mathbf x,\mathbf y,\mathbf z $ above.
    Define a difference operator $\Delta_{\mathbf{y}}g(\mathbf{x}):=g(\mathbf{x}+\mathbf{y})-g(\mathbf{x})$. Then for $i,j$ with $y_i,z_j>0$, we have $\Delta_{z_j\mathbf{e}_j}g(\mathbf{x}+y_i\mathbf{e}_i)\leq \Delta_{z_j\mathbf{e}_j}g(\mathbf{x})$
    by \eqref{eq:submodular3}. Applying the above arguments iteratively for all $j \in [s]$,  we get   $$\Delta_{z_j\mathbf{e}_j}g(\mathbf{x}+\mathbf{y})\leq \cdots\leq \Delta_{z_j\mathbf{e}_j}g(\mathbf{x}+y_i\mathbf{e}_i)\leq \Delta_{z_j\mathbf{e}_j}g(\mathbf{x}),$$
    that is, $\Delta_{\mathbf{y}}g(\mathbf{x}+z_j\mathbf{e}_j)\leq \Delta_{\mathbf{y}}g(\mathbf{x})$. Similarly, repeatedly applying the above procedure for  $j \in [s]$, we get
    $$\Delta_{\mathbf{y}}g(\mathbf{x}+\mathbf{z})\leq \cdots\leq \Delta_{\mathbf{y}}g(\mathbf{x}+z_j\mathbf{e}_j)\leq \Delta_{\mathbf{y}}g(\mathbf{x}).$$
 Therefore, $g$ is submodular on $[0,1]^s$.

We have shown that   statements (i)--(iii) are equivalent. The direction (ii)$\Rightarrow$(iv) follows directly by substituting \eqref{eq:submodular4} into \eqref{eq:submodular2}. Next, we show (iv)$\Rightarrow$(iii) when $g$ is upper semi-continuous. Taking $\delta=0$ in \eqref{eq:submodular4}, we obtain that $g$ is componentwise concave.
Note that any  upper semi-continuous concave function on $[0,1]$ must be continuous. Hence,    $g$ is componentwise continuous. 
For $\mathbf{x},\mathbf{x}+\varepsilon\mathbf{e}_i+\delta\mathbf{e}_j\in [0,1]^s$, fixing some integer $n$, the inequality \eqref{eq:submodular4} implies   
    \begin{equation*}
        g\left(\mathbf{x}+\frac{k+1}{n}\varepsilon\mathbf{e}_i+\delta\mathbf{e}_j\right)
        -g\left(\mathbf{x}+\frac{k}{n}\varepsilon\mathbf{e}_i+\delta\mathbf{e}_j\right)
        \leq g\left(\mathbf{x}+\frac{k}{n}\varepsilon\mathbf{e}_i\right)
        -g\left(\mathbf{x}+\frac{k-1}{n}\varepsilon\mathbf{e}_i\right)
    \end{equation*}
    for all $k=1,\ldots,n-1$. Adding up these inequalities yields 
    \begin{equation*}
        g(\mathbf{x}+\varepsilon\mathbf{e}_i+\delta\mathbf{e}_j)
        -g\left(\mathbf{x}+\frac{1}{n}\varepsilon\mathbf{e}_i+\delta\mathbf{e}_j\right)
        \leq g\left(\mathbf{x}+\frac{n-1}{n}\varepsilon\mathbf{e}_i\right)
        -g(\mathbf{x}).
    \end{equation*}
    Since $g$ is componentwise continuous, we get that \eqref{eq:submodular3} holds by sending  $n\rightarrow\infty$. 
\end{proof}

\section{Numerical thresholds for rating categories }\label{appx:mapping}


This section reports tables for the calibrated thresholds underlying rating categories via different rating criteria, complementing the   results in 
Section \ref{sec:CLO}.

Table \ref{tbl:mapping_clo} reports  the threshold values to rating collateralized loan obligations. A given rating is assigned once the value of the risk measure exceeds the corresponding threshold. For the Average EL criterion, we use Moody’s idealized default and loss rate table from the ``Rating Symbols and Definitions'' document (\texttt{https://ratings.moodys.com/rmc-documents/53954}, accessed March 2025). The thresholds for the other criteria are calibrated using the method described in Section \ref{sec:CLO}.

\begin{table}[t]
  \centering
  \small
  \caption{Thresholds for rating collateralized loan obligations; the numbers are the upper bounds on a particular rating category}
    \begin{tabular}{c|c|c|c|c|c|c}
    Rating & Average EL & Average ES & \tabincell{c}{Average\\ MAXVAR} & \tabincell{c}{Average\\ VaR}  & Max VaR  & Average PD \\
    \hline
    Aaa   & 0.000016  & 0.000102  & 0.000092  & 0.000023  & 0.000045  & 0.000101  \\
    Aa1   & 0.000171  & 0.001092  & 0.000978  & 0.000241  & 0.000482  & 0.001082  \\
    Aa2   & 0.000374  & 0.002389  & 0.002139  & 0.000528  & 0.001055  & 0.002366  \\
    Aa3   & 0.000781  & 0.004988  & 0.004467  & 0.001102  & 0.002202  & 0.004940  \\
    A1    & 0.001436  & 0.009172  & 0.008214  & 0.002026  & 0.004050  & 0.009084  \\
    A2    & 0.002569  & 0.016408  & 0.014694  & 0.003624  & 0.007245  & 0.016251  \\
    A3    & 0.004015  & 0.025644  & 0.022965  & 0.005665  & 0.011322  & 0.025398  \\
    Baa1  & 0.006050  & 0.038642  & 0.034605  & 0.008536  & 0.017061  & 0.038270  \\
    Baa2  & 0.008690  & 0.055504  & 0.049706  & 0.012260  & 0.024506  & 0.054970  \\
    Baa3  & 0.016775  & 0.107144  & 0.095952  & 0.023667  & 0.047306  & 0.106113  \\
    Ba1   & 0.029040  & 0.185482  & 0.166106  & 0.040971  & 0.081894  & 0.183697  \\
    Ba2   & 0.046255  & 0.295436  & 0.264574  & 0.065259  & 0.130441  & 0.292593  \\
    Ba3   & 0.065230  & 0.416631  & 0.373110  & 0.092030  & 0.183951  & 0.412623  \\
     B1   & 0.088660  & 0.566281  & 0.507127  & 0.125087  & 0.250024  & 0.560833  \\
     B2   & 0.113905  & 0.727523  & 0.651526  & 0.160704  & 0.321216  & 0.720524  \\
     B3   & 0.148775  & 0.950242  & 0.850980  & 0.209901  & 0.419551  & 0.941100  \\
    \end{tabular}%
  \label{tbl:mapping_clo}
\end{table}%

Table \ref{tab:mapping_catbond} reports the threshold values to rating catastrophe bonds. For the EL criterion, we adopt the thresholds reported in \cite{S24EC}, which provide upper and lower bounds for the BB (equivalent to Ba under Moody’s rating system) and B categories. The thresholds for the other criteria are calibrated using the same method described in Section \ref{sec:CLO}.

\begin{table}[t]
  \centering
  \small
  \caption{Thresholds for rating catastrophe bonds; the numbers are the upper bounds on a particular rating category}
    \begin{tabular}{c|c|c|c|c}
    Rating & EL    & ES    & MAXVAR & PD \\
    \hline
    Baa   & 0.0016  & 0.0160  & 0.0195  & 0.0064  \\
    Ba    & 0.0181  & 0.1810  & 0.2207  & 0.0724  \\
    B     & 0.0375  & 0.3750  & 0.4572  & 0.1500  \\
    Caa   & 1.0000  & 1.0000  & 1.0000  & 1.0000  \\
    \end{tabular}%
  \label{tab:mapping_catbond}
\end{table}%

\newpage

\end{document}